\numberwithin{equation}{section}
\newtheorem{Prop}{Proposition}
\newtheorem{Thm}{Theorem}
\newtheorem{Lemma}{Lemma}
\newtheorem{Cor}{Corollary}
\theoremstyle{definition}
\newtheorem{Def}{Definition}
\theoremstyle{remark}
\newtheorem{Rem}{Remark}
\theoremstyle{definition}
\title{4-dimensional Frobenius manifolds and Painleve' \nolinebreak VI}
\author{Stefano Romano}
\date{}
\begin{document}
\begin{abstract} A Frobenius manifold has tri-hamiltonian structure if it is even-dimensional and its spectrum is maximally degenerate. We focus on the case of dimension four and show that, under the assumption of semisimplicity, the corresponding isomonodromic Fuchsian system is described by the Painlev\'e VI$\mu$ equation.  This yields an explicit procedure associating to any semisimple Frobenius manifold of dimension three a tri-hamiltonian Frobenius manifold of dimension four. We carry out explicit examples for the case of Frobenius structures on Hurwitz spaces.
\end{abstract}
\maketitle
\section*{Introduction}
The notion of Frobenius manifold was introduced by B. Dubrovin (\cite{Dub2Dtft}, \cite{DubPainP}, \cite{DubGAT}) as a coordinate-free formulation of the Witten-Dijkgraaf-Verlinde-Verlinde equations of associativity, whose prominent role in different branches of mathematics and theoretical physics (ranging from integrable systems and singularity theory to topological field theory and Gromov-Witten invariants) marked a major trend of research in the last over twenty years. By definition a function $F=F(t^1, \dots, t^n)$ of $n$ complex variables satisfies WDVV if the third derivatives
\begin{equation}
c_{\alpha\beta\gamma}(t)\doteq\frac{\partial^3F(t)}{\partial t^\alpha\partial t^\beta\partial t^\gamma}
\end{equation}
define an associative product of vector fields via the rule
\begin{equation}\label{product}
\frac{\partial}{\partial t^\alpha}\cdot\frac{\partial}{\partial t^\beta}\doteq c_{\alpha\beta}^\gamma(t)\frac{\partial}{\partial t^\gamma},\qquad c_{\alpha\beta}^\gamma(t)\doteq\eta^{\gamma\mu}c_{\alpha\beta\mu}(t)
\end{equation}
where $\eta^{\alpha\beta}$ is a fixed constant non-degenerate symmetric matrix. More precisely, Frobenius manifolds correspond to a particular class of solutions of WDVV, satisfying two constraints:
\begin{itemize}
\item[(A)] $\partial_1\partial_\alpha\partial_\beta F=\eta_{\alpha\beta}$, where $\eta_{\alpha\beta}$ is the inverse of $\eta^{\alpha\beta}$. This means that $\partial/\partial t^1$ is the unit of the product \eqref{product}.
\item[(B)]$\partial_E F =(3-d)F + \text{quadratic}$, where
\begin{equation}\label{Euler}
\partial_E=\sum_{\alpha=1}^n(d_\alpha t^\alpha+r_\alpha)\partial_\alpha
\end{equation}
for some constants $d, d_\alpha, r_\alpha$ with $d_1=1$, $r_\alpha=0$ if $d_\alpha\neq 0$. This means that $F$ is quasi-homogeneous up to quadratic terms. $\partial_E$ is called the \emph{Euler vector field} and $d$ is called the \emph{charge}.
\end{itemize}
In the geometric picture, $\eta$ is a flat metric on the Frobenius manifold with flat coordinates $t^\alpha$, and the product \eqref{product} equips the tangent spaces with the structure of a Frobenius algebra (see \cite{Dub2Dtft}, \cite{DubPainP} for details). The function $F$ is called \emph{prepotential}, or \emph{primary free energy}.\\
A fundamental consequence of the quasi-homogeneity axiom (B) is the existence of a \emph{flat pencil} of metrics on the manifold. More precisely, let 
$$
\mathcal{U}^\alpha_\beta\doteq(\partial_E\; \cdot)^\alpha_\beta=(\partial_E)^\mu c^\alpha_{\mu\beta}
$$
denote the operator of multiplication by the Euler vector field; then the \emph{intersection form}
\begin{equation}\label{intform}
g^{\alpha\beta}\doteq \eta^{\alpha\mu}\mathcal{U}^\beta_\mu
\end{equation}
is flat and compatible with $\eta$ (\cite{DubFlat}), leading to a bi-hamiltonian structure of hydrodynamic type on the formal loop space with target the Frobenius manifold (\cite{Dub-Nov}, \cite{Dub-Nov2}).\\
\indent In this work we consider a particular class of Frobenius manifolds, singled out by the following additional condition:
\begin{itemize}
\item[(C)] The dimension is even, $n=2k$, and the Euler vector field is of the form
$$
\partial_E=\sum_{\alpha=1}^kt^\alpha\frac{\partial}{\partial t^\alpha}+(1+2\mu)\sum_{\alpha=k+1}^nt^\alpha\frac{\partial}{\partial t^\alpha}
$$
for some non-zero constant $\mu$.
\end{itemize}
\begin{Def} A Frobenius manifold satisfying (C) is said to have \emph{tri-hamiltonian structure}.
\end{Def}
The definition is motivated by the following
\begin{Lemma} If condition (C) is satisfied, the metric
\begin{equation}\label{tildeeta}
\tilde{\eta}^{\alpha\beta}\doteq \eta^{\alpha\mu}(\mathcal{U}^2)^\beta_\mu
\end{equation}
is flat and compatible with both $\eta$ and $g$.
\end{Lemma}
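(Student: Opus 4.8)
The plan is to reduce condition (C) to a single algebraic fact about the grading of the manifold, and then to run the same flat-pencil machinery that already produces $g$, tracking the one new term created by the passage from $\mathcal{U}$ to $\mathcal{U}^2$. First I would record the structural content of (C). Writing $d_\alpha$ for the coefficients in \eqref{Euler} (so $d_\alpha=1$ for $\alpha\le k$ and $d_\alpha=1+2\mu$ for $\alpha>k$), the quasi-homogeneity of $\eta_{\alpha\beta}=\partial_1\partial_\alpha\partial_\beta F$ forces $d_\alpha+d_\beta=2-d$ whenever $\eta_{\alpha\beta}\neq 0$; since $\mu\neq 0$, non-degeneracy of $\eta$ leaves only the cross-block value $1+(1+2\mu)$, so the charge is $d=-2\mu$, the metric $\eta$ is block anti-diagonal, and the shifted grading operator
\[
V^\alpha_\beta\doteq(\nabla E)^\alpha_\beta-(1+\mu)\delta^\alpha_\beta=(d_\alpha-1-\mu)\delta^\alpha_\beta ,
\]
with $\nabla$ the Levi-Civita connection of $\eta$ (so $\nabla E=\mathrm{diag}(d_\alpha)$), is $\eta$-skew with eigenvalues $\pm\mu$ only. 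This is the maximal degeneracy of the spectrum, and its sole role below is the relation $V^2=\mu^2\,\mathrm{Id}$. I would also note that $\mathcal{U}$ is $\eta$-self-adjoint (equivalently $g$ is symmetric), hence so is $\mathcal{U}^2$, so that $\tilde{\eta}$ is a genuine symmetric metric, non-degenerate wherever $\mathcal{U}$ is invertible.

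The computational engine is the identity
\[
\nabla_\gamma\mathcal{U}^\alpha_\beta=(1+d_\alpha-d_\beta)\,c^\alpha_{\beta\gamma}=c^\alpha_{\beta\gamma}+[V,C_\gamma]^\alpha_\beta,\qquad (C_\gamma)^\alpha_\beta\doteq c^\alpha_{\beta\gamma},
\]
which follows from $\nabla E=\mathrm{diag}(d_\alpha)$ and the quasi-homogeneity of $F$ (one contracts $E$ into the totally symmetric tensor $\nabla c$ and uses $d_\mu=2-d-d_\alpha$ on the support of $\eta^{\alpha\mu}$). Differentiating $g^{\alpha\beta}=\eta^{\alpha\mu}\mathcal{U}^\beta_\mu$ with it gives $\partial_\gamma g^{\alpha\beta}=(1+\mu_\alpha+\mu_\beta)\,c^{\alpha\beta}_\gamma$, where $\mu_\alpha\doteq d_\alpha-1-\mu$ and $c^{\alpha\beta}_\gamma\doteq\eta^{\alpha\mu}c^\beta_{\mu\gamma}$, and reproduces the contravariant Christoffel symbols of $g$ in the symmetric split $\Gamma^{\alpha\beta}_{g,\gamma}=(\tfrac12+\mu_\beta)\,c^{\alpha\beta}_\gamma$; this is the flat, $\eta$-compatible base case supplied by the hypotheses. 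The plan is then to treat the whole pencil $\mathcal{G}(a,b,c)\doteq\eta\,(a\,\mathrm{Id}+b\,\mathcal{U}+c\,\mathcal{U}^2)$ at once: I would write the analogous candidate symbols for $\tilde{\eta}=\eta\,\mathcal{U}^2$ by applying Leibniz to $\nabla_\gamma(\mathcal{U}^2)=(\nabla_\gamma\mathcal{U})\mathcal{U}+\mathcal{U}(\nabla_\gamma\mathcal{U})$, and then check the Dubrovin--Novikov conditions for a flat pencil: the symmetry $\Gamma^{\alpha\beta}_\gamma+\Gamma^{\beta\alpha}_\gamma=\partial_\gamma\mathcal{G}^{\alpha\beta}$, the vanishing of the curvature of each $\mathcal{G}(a,b,c)$, and the linear dependence of the connection on $(a,b,c)$.

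The hard part is the flatness and compatibility of the genuinely new direction $\tilde{\eta}$, and this is exactly where (C) must enter. Passing from $\mathcal{U}$ to $\mathcal{U}^2$ (equivalently, from multiplication by $E$ to multiplication by $E\cdot E$, by associativity of the product) makes $\nabla\mathcal{U}$ occur twice, so the would-be curvature of $\tilde{\eta}$ acquires, beyond the terms already controlled in the linear case, a contribution in which the grading $V$ acts twice on the structure constants. Reorganising these terms with the identity above and the associativity (WDVV) relations, I expect the entire obstruction to collapse into an expression proportional to $(V^2-\mu^2\,\mathrm{Id})$ contracted against $c$; by maximal degeneracy $V^2=\mu^2\,\mathrm{Id}$ this vanishes identically, giving both the flatness of $\tilde{\eta}$ and its compatibility with $\eta$ and with $g$. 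I would emphasise that for a general even-dimensional Euler field the factor $V^2-\mu^2\,\mathrm{Id}$ is nonzero, so this step is precisely what forces the definition to require a maximally degenerate spectrum rather than merely even dimension.

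Finally, as an independent check in the semisimple case I would pass to canonical coordinates $u^1,\dots,u^n$, in which $\mathcal{U}=\mathrm{diag}(u^i)$ and the three metrics become the diagonal metrics with weights $1$, $u^i$, $(u^i)^2$ against $\eta_{ii}$; their simultaneous flatness then reads off from the Darboux--Egorov equations for the common rotation coefficients, with the constraint $V^2=\mu^2\,\mathrm{Id}$ reappearing as the two-valued spectrum $\mu_i=\pm\mu$ of the isomonodromy data that later governs the Painlev\'e~VI$\mu$ analysis.
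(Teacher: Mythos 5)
Your proposal is correct and follows essentially the same route as the paper: the derivative identity for $\mathcal{U}$ produces the contravariant Christoffel symbols of $\tilde{\eta}$, the curvature obstruction is precisely the failure of $\hat{\mu}^2$-twisted associativity (the paper's condition \eqref{WWDVV}), and condition (C) --- which, as you observe, forces $d=-2\mu$, an anti-diagonal pairing, and the two-valued spectrum $\pm\mu$, i.e.\ $\hat{\mu}^2=\mu^2 I$ --- makes that obstruction vanish by WDVV, exactly as in the paper's Proposition 1. The only organizational difference is in the pencil step: you propose verifying the flat-pencil conditions for the full family $\eta\,(a\,I+b\,\mathcal{U}+c\,\mathcal{U}^2)$ directly, whereas the paper first proves flatness of $\tilde{\eta}$ alone and then obtains flatness and linearity of the Christoffel symbols for the whole two-parameter pencil from the shift identity $\tilde{\eta}^{\alpha\beta}(t^1+\tfrac{\epsilon_1}{2},t^2,\dots,t^n)=\tilde{\eta}^{\alpha\beta}(t)+\epsilon_1 g^{\alpha\beta}(t)+\tfrac{\epsilon_1^2}{4}\eta^{\alpha\beta}$, a reorganization of the same argument rather than a different method.
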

The result was first observed by Pavlov and Tsarev in the semisimple setting (\cite{PT}). Here we prove it in general; as it turns out, it actually holds under assumptions weaker than (C) in the non-semisimple case.
\vspace{5pt}\\
\indent The main goal of this paper is to give a complete description of semisimple tri-hamiltonian Frobenius manifolds in the lowest non-trivial dimension $n=4$; such objects are naturally identified with isomonodromic families of Fuchsian linear operators of the form
\begin{equation}\label{tri-system}
\frac{d}{d\epsilon}+\sum_{i=1}^4\frac{E_iW}{\epsilon-v_i}
\end{equation}
where $(E_i)_{jk}=\delta_{ji}\delta_{ik}$ and $W$ is a 4x4 matrix satisfying
\begin{equation}\label{tri-system2}
W^t=-W, \qquad W^2=\mu^2 I
\end{equation}
Our basic observation is the following:
\begin{Thm} The equations of isomonodromic deformation of \eqref{tri-system}, \eqref{tri-system2} are equivalent to the Painlev\'e VI$\mu$ equation (\cite{Dub2Dtft}, \cite{Dub-Maz})
\begin{align}\label{PVImu}
\frac{d^2y}{ds^2}=	&\frac{1}{2}\left(\frac{1}{y}+\frac{1}{y-1}+\frac{1}{y-s}\right)\left(\frac{dy}{ds}\right)^2-\left(\frac{1}{s}+\frac{1}{s-1}+\frac{1}{y-s}\right)\frac{dy}{ds}+\\
				&+\frac{1}{2}\frac{y(y-1)(y-s)}{s^2(s-1)^2}\left((2\mu-1)^2+\frac{s(s-1)}{y-s}^2\right)\nonumber
\end{align}
\end{Thm}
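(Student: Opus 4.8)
The plan is to read the statement through Dubrovin's correspondence between semisimple Frobenius manifolds and isomonodromy, and to reduce the matrix equations, using the constraints \eqref{tri-system2} together with the projective symmetry of the pole configuration, to a single scalar ODE that I then match against \eqref{PVImu}. Write $A_i\doteq E_iW$ for the residues of \eqref{tri-system}. The isomonodromic deformation of a Fuchsian system is governed by the Schlesinger equations $\partial_{v_j}A_i=[A_i,A_j]/(v_i-v_j)$ for $i\neq j$ and $\partial_{v_i}A_i=-\sum_{j\neq i}[A_i,A_j]/(v_i-v_j)$; the first point to settle is their compatibility with the normal form \eqref{tri-system}. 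In the canonical-coordinate gauge in which the idempotents $E_i$ are held fixed this forces a compensating diagonal gauge transformation, and the net effect (Dubrovin) is a spectrum-preserving flow on the antisymmetric matrix $W=W(v)$. I would verify at the outset that this flow conserves both $W^t=-W$ and $W^2=\mu^2I$: the conjugacy class of $W$ (equivalently the exponents at $\epsilon=\infty$, where the residue equals $-\sum_iA_i=-W$) is an isomonodromy invariant, and $W^2=\mu^2I$ is precisely the condition that these exponents are $\pm\mu$ with multiplicity two.

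Next I would record the local data. Since $W^t=-W$ we have $\operatorname{tr}A_i=W_{ii}=0$, so every finite residue is a rank-one nilpotent, $A_i^2=0$, and the exponents at each $v_i$ vanish; together with the $\pm\mu$ at infinity this is exactly the exponent configuration of the $\mu$-family of Painlev\'e VI. I would then quotient out the inessential deformations: the equations are invariant under the affine action $v_i\mapsto av_i+b$ (the scaling being the Euler field of condition (C)) and, on the full $\epsilon$-sphere, under $PGL_2$; normalising three marked points to $0,1,\infty$ leaves the cross-ratio $s$ as the sole independent variable. The role played by the maximal degeneracy $W^2=\mu^2I$ is to cut the would-be two-dimensional (Garnier-type) reduced phase space coming from four poles down to the two-dimensional phase space of a single second-order equation.

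To produce that equation I would reduce rank four to rank two. The orthogonality $W^t=-W$ places the monodromy in $SO_4$, and I would exploit the isomorphism $\mathfrak{so}_4\cong\mathfrak{sl}_2\oplus\mathfrak{sl}_2$ --- the four-dimensional counterpart of the $\mathfrak{so}_3\cong\mathfrak{sl}_2$ reduction used for $n=3$ in \cite{Dub-Maz} --- to split the system into $2\times2$ Fuchsian systems with regular singularities at $0,1,s,\infty$ and exponent differences $0,0,0,2\mu$. On such a system the Painlev\'e variable is standard: let $y=y(s)$ be the zero of the off-diagonal entry of the reduced connection, i.e. the apparent singularity of the scalarised second-order equation. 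Differentiating $y$ in $s$ and substituting the deformation equations yields, by Jimbo and Miwa's computation, a second-order ODE for $y$; matching coefficients should give precisely \eqref{PVImu}, namely Painlev\'e VI with $(\alpha,\beta,\gamma,\delta)=\bigl(\tfrac12(2\mu-1)^2,\,0,\,0,\,\tfrac12\bigr)$, the vanishing of $\beta,\gamma$ and the value $\delta=\tfrac12$ encoding the three zero exponents and $\alpha$ encoding $\theta_\infty=2\mu$.

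The crux is the reduction of the previous paragraph. The hard part will be to show that the constrained rank-four flow, after the $\mathfrak{sl}_2\oplus\mathfrak{sl}_2$ splitting and the choice of $y$, genuinely closes into one second-order equation rather than a coupled pair --- equivalently, that the maximal degeneracy forces the two $\mathfrak{sl}_2$ factors to deliver the same equation (or one of them to decouple trivially). I also expect some care to be needed in reconciling the fixed-$E_i$ gauge of \eqref{tri-system} with the Schlesinger gauge, so that the apparent singularity $y$ is correctly identified as a function of $s$. Once the reduction is in place, the residual verification --- checking that the coefficients are those displayed in \eqref{PVImu} --- is a finite, if tedious, computation rather than a conceptual difficulty.
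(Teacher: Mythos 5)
Your outer frame is sound --- gauging \eqref{tri-system} to Schlesinger form, checking that the flow preserves \eqref{tri-system2}, the exponent bookkeeping (rank-one nilpotent residues $A_i=E_iW$, hence zero exponents at the $v_i$ and $\pm\mu$ doubled at infinity), and the final parameter match $(\alpha,\beta,\gamma,\delta)=\bigl(\tfrac12(2\mu-1)^2,0,0,\tfrac12\bigr)$ --- but the step you yourself flag as the crux is not merely hard, it is unworkable as stated. The isomorphism $\mathfrak{so}(4)\cong\mathfrak{sl}(2)\oplus\mathfrak{sl}(2)$ cannot be used to split the rank-four $\epsilon$-system, because the finite residues $E_iW$ are \emph{not} antisymmetric; the only residue lying in $\mathfrak{so}(4)$ is the one at infinity, $W$ itself. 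Nor can any splitting of the $4$-dimensional monodromy representation into $2$-dimensional pieces be compatible with the local data: in a Kronecker splitting $\mathbb{C}^4\cong\mathbb{C}^2\otimes\mathbb{C}^2$ each local monodromy would be a product $M_i^+\otimes M_i^-$, and such a product is never a transvection (its unipotent part has rank $0$ or $2$, never $1$); in a direct-sum splitting each rank-one unipotent local monodromy restricts nontrivially to exactly one summand, and since each summand must have determinant-one local monodromies and eigenvalues $e^{\pm2\pi i\mu}$ at infinity, a short count forces each summand to carry exactly two finite singularities, i.e.\ to be a rigid three-puncture (hypergeometric) system --- incompatible with the transcendental $s$-dependence the theorem asserts. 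A telling symptom: the $2\times2$ system that the paper genuinely extracts from the four-dimensional data, namely \eqref{2x2new}, has exponent differences $(\mu,\mu,\mu,\mu)$ and is governed by the Okamoto transform \eqref{PVImu'}, not by \eqref{PVImu} with your exponents $(0,0,0,2\mu)$.

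The correct home for your $\mathfrak{sl}(2)\oplus\mathfrak{sl}(2)$ intuition is one level down, on the deformation unknown $W$ rather than on the Fuchsian system, and that is exactly how the paper argues. Antisymmetry together with $W^2=\mu^2I$ says that $W$ is an isoclinic rotation, i.e.\ lies in a single $\mathfrak{so}(3)\cong\mathfrak{sl}(2)$ summand of $\mathfrak{so}(4)$, hence is encoded by three functions $a,b,c$, equivalently by a $3\times3$ antisymmetric matrix $V$ (Proposition 2, formula \eqref{W}); moreover the scalarity of $W^2$ is equivalent to invariance under fractional linear transformations of the poles (Lemma 3), which is what actually cuts the would-be two-time Garnier deformation down to functions of the single cross-ratio $s$ --- a fact you assert but do not establish (note that \eqref{tri-system} has \emph{five} singular points, so projective normalization alone leaves two moduli, not one). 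The Schlesinger/Darboux--Egoroff equations for $W$ then reduce to the same ODEs \eqref{abcODE} as for the three-dimensional solution $V$, and the theorem follows from the classical $n=3$ equivalence with PVI$\mu$. Your $2\times2$ system with exponents $(0,0,0,2\mu)$ does appear in that argument, but as the $\nu=1/2$ twisted-period system \eqref{2x21} of the \emph{three}-dimensional manifold; its decoupling rests on the zero eigenvalue of $\hat\mu=\mathrm{diag}(\mu,0,-\mu)$ (see \eqref{t2decoupling}), a mechanism with no analogue for $\hat\mu=\mathrm{diag}(\mu,\mu,-\mu,-\mu)$. A direct $4\to2$ reduction of the linear problem would require something like Katz's middle convolution, not a Lie-algebra splitting; short of that, your proof does not close without substituting Proposition 2's algebraic identification for the splitting step.
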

It was proved in (\cite{Dub2Dtft}, Appendix E) that the solutions of \eqref{PVImu} parametrize semisimple Frobenius manifolds of dimension 3. Thus the above results can be rephrased as a correspondence between Frobenius manifolds of dimension 3 and 4, the latter belonging to the tri-hamiltonian subclass. We proceed to derive a full construction associating to any Painlev\'e VI$\mu$ transcendent a tri-hamiltonian prepotential in four variables. Explicitly:
\begin{itemize}
\item Start with a 3-dimensional semisimple Frobenius manifold $M$, or equivalently with a Painvlev\'e VI$\mu$ transcendent.
\item Let $\{u_i\}_{i=1, 2, 3}$ be the canonical coordinates of $M$ and $\Psi$ be the \emph{transition matrix}
\begin{equation}\label{transitionPsi}
\Psi=(\psi_{i\alpha}),\qquad \psi_{i\alpha}=\sqrt{\eta_i}\frac{\partial u_i}{\partial t^\alpha}
\end{equation}
where $\eta_i=\eta(\partial/\partial u_i, \partial/\partial u_i)$. Define
$$
\Phi(s)\doteq \Psi(u_1=0, u_2=1, u_3=s)= \Psi(u) (u_2-u_1)^{-\hat{\mu}}
$$
where $\hat{\mu}=\text{diag}(\mu, 0, -\mu)$ and the identification $s=(u_3-u_1)/(u_2-u_1)$ is understood in the second equality.
\item Let $\{p_1, p_2, t^3\}$ be a basis of twisted periods of $M$ for the value $\nu=\mu+1/2$ of the twisting parameter. Define
$$
\chi_{1i}(\epsilon, s)\doteq\frac{\partial p_i}{\partial t^2}\qquad \chi_{2i}(\epsilon, s)\doteq\frac{\partial p_i}{\partial t^1}\qquad\;\;\; i=1, 2
$$
where the right hand sides are to be evaluated on the submanifold $\{u_1=-\epsilon, u_2=1-\epsilon, u_3=s-\epsilon\}$.
\item Build the following 4x4 matrix:
\begin{align}\label{4dtrans}
\hat{\Psi}(v_1, v_2, v_3, v_4)= &\left(\begin{array}{cccc}
\phi_{11}(s)		&\phi_{12}(s)		&\phi_{12}(s)		&\phi_{13}(s)\\
\phi_{21}(s)		&\phi_{22}(s)		&\phi_{22}(s)		&\phi_{23}(s)\\
\phi_{31}(s)		&\phi_{32}(s)		&\phi_{32}(s)		&\phi_{33}(s)\\
0			&i 				&-i 				&0
\end{array}\right)\cdot\\\nonumber
					&\left(\begin{array}{cccc}
-\frac{(v_2-v_1)^{\mu}}{(\epsilon-s)^\mu} 	&0							&0											&0\\
0							&\frac{(v_2-v_1)^{\mu}}{(\epsilon-s)^\mu} 	&0											&0\\
0							&0							&\frac{1}{2}\frac{(v_2-v_1)^{-\mu}}{[\epsilon(\epsilon-1)]^\mu}		&0\\
0							&0							&0											&\frac{(v_2-v_1)^{-\mu}}{[\epsilon(\epsilon-1)]^\mu}
\end{array}\right)\cdot\\\nonumber
					&\left(\begin{array}{cccc}
\chi_{11}(\epsilon, s)		&\chi_{12}(\epsilon, s)			&0				&0\\
\chi_{21}(\epsilon, s)		&\chi_{22}(\epsilon, s)			&0				&0\\
0					&0						&\chi_{11}(\epsilon, s)	&\chi_{12}(\epsilon, s)\\
0					&0						&\chi_{21}(\epsilon, s)	&\chi_{22}(\epsilon, s)
\end{array}\right)
\end{align}
where $\phi_{i\alpha}$  are the elements of $\Phi$, and the variables $s, \epsilon$ are now to be re-interpreted as follows:
\begin{equation}\label{s, eps}
s=\frac{(v_3-v_1)(v_4-v_2)}{(v_2-v_1)(v_4-v_3)}\qquad\epsilon=\frac{v_2-v_4}{v_2-v_1}
\end{equation}
\end{itemize}
\begin{Thm} The matrix $\hat{\Psi}$ is the transition matrix of a 4-dimensional Frobenius manifold with tri-hamiltonian structure, written in the canonical coordinates $v_1, v_2, v_3, v_4$. Moreover, all such manifolds are obtained in this way.
\end{Thm}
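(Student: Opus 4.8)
The plan is to prove the statement by appealing to Dubrovin's reconstruction of semisimple Frobenius manifolds from isomonodromy data (\cite{Dub2Dtft}, \cite{DubPainP}): such a manifold is uniquely determined by, and can be rebuilt from, an admissible solution of the isomonodromic deformation equations of its associated Fuchsian system. Thus the main task is to verify that the explicitly constructed matrix $\hat{\Psi}(v_1,v_2,v_3,v_4)$ of \eqref{4dtrans} is a genuine transition matrix, i.e. that it satisfies the three defining requirements. Writing $\hat{\mu}_4\doteq\text{diag}(\mu,\mu,-\mu,-\mu)$ for the maximally degenerate $4$-dimensional spectrum, these are: (i) the product $\hat{\Psi}^t\hat{\Psi}$ equals a constant non-degenerate symmetric matrix $\hat{\eta}$, which is then the flat metric of the reconstructed manifold; (ii) $\hat{\eta}$ anticommutes with $\hat{\mu}_4$, so that the matrix $W\doteq\hat{\Psi}\hat{\mu}_4\hat{\Psi}^{-1}=\hat{\Psi}\hat{\mu}_4\hat{\eta}^{-1}\hat{\Psi}^t$ is antisymmetric --- and then $W^2=\mu^2 I$ holds automatically because $\hat{\mu}_4^2=\mu^2 I$, which is precisely the tri-hamiltonian condition \eqref{tri-system2}; (iii) $\hat{\Psi}$ solves the isomonodromic (Schlesinger) equations of \eqref{tri-system}, equivalently the monodromy of that system is independent of $v_1,\dots,v_4$. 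The correct first-column normalisation encoding the unit and the homogeneity under the Euler field \eqref{Euler} are arranged by the power factors in the diagonal middle factor of \eqref{4dtrans}; these are checkable but of lower difficulty than (iii).

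The three-factor form $\hat{\Psi}=A(s)\,D(v,\epsilon)\,X(\epsilon,s)$ of \eqref{4dtrans} --- with $A$ built from $\Phi$ together with the extra fourth row, $D$ the diagonal scaling matrix, and $X$ the block factor of twisted periods --- is tailored to make (i) and (ii) tractable. By construction $\Phi(s)$ is the normalised transition matrix of the $3$-dimensional manifold $M$, so it already realises the $3$-dimensional spectrum $\hat{\mu}=\text{diag}(\mu,0,-\mu)$ with $\Phi^t\Phi$ controlled and $\hat{\mu}$ antisymmetric with respect to it. The doubling of the central column of $A(s)$, combined with the block structure of $X$, is exactly what replaces the central $0$-exponent by the split pair $\pm\mu$ and so converts $\{\mu,0,-\mu\}$ into $\{\mu,\mu,-\mu,-\mu\}$; the twist $\nu=\mu+\tfrac12$ is the value effecting this shift. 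Concretely, the bilinear (Wronskian) identities satisfied by a basis of twisted periods force the $\chi$-dependent factors in $\hat{\Psi}^t\hat{\Psi}$ to assemble into a constant matrix, after which a direct multiplication shows $\hat{\eta}=\hat{\Psi}^t\hat{\Psi}$ is constant and anti-block with respect to $\hat{\mu}_4$, giving both (i) and (ii).

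The heart of the argument, and the step I expect to be the main obstacle, is (iii). Here the change of variables \eqref{s, eps} is essential: it presents the two non-trivial moduli of the five-point configuration $(v_1,v_2,v_3,v_4,\infty)$ as the cross-ratio $s$ and the variable $\epsilon$ encoding the flow along the unit field $\partial_{t^1}=e$. I would show that under this substitution the rank-$4$ system \eqref{tri-system} reduces to the rank-$3$ isomonodromic system of $M$ --- which governs the $s$-dependence and which, by Theorem~1, carries the Painlev\'e VI$\mu$ transcendent through $\Phi(s)$ --- coupled to the linear Gauss--Manin system satisfied by the twisted periods, which governs the $\epsilon$-dependence through $\chi(\epsilon,s)$. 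Since the monodromy of the PVI$\mu$ system and that of the twisted-period system are both manifestly independent of $v_1,\dots,v_4$, so is the monodromy of \eqref{tri-system}. The delicate points are to check that the twisted periods for the distinguished value $\nu=\mu+\tfrac12$ supply exactly the right number of independent solutions with the correct local exponents $\pm\mu$ at each finite pole $v_i$, and that the diagonal factor $D(v,\epsilon)$ matches these local behaviours so that $\hat{\Psi}$ has precisely the singularity structure $E_iW/(\epsilon-v_i)$ prescribed by \eqref{tri-system}.

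Finally, for the assertion that every semisimple tri-hamiltonian $4$-dimensional Frobenius manifold arises in this way I would argue by uniqueness. Given such a manifold, Theorem~1 identifies its isomonodromic data with a Painlev\'e VI$\mu$ transcendent, and by (\cite{Dub2Dtft}, Appendix~E) this transcendent corresponds to a unique $3$-dimensional semisimple Frobenius manifold $M$. Feeding $M$ into the construction yields, by the first part, a tri-hamiltonian $4$-dimensional manifold carrying the same PVI$\mu$ transcendent and hence the same monodromy data; the reconstruction theorem then forces it to coincide, up to the standard normalisation ambiguities, with the manifold we started from. This closes the correspondence in both directions.
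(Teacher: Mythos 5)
Your outline has the right architecture, but the step you yourself flag as the heart of the matter --- point (iii) --- is precisely where all the mathematical content of the theorem lives, and your proposal supplies no mechanism for it. You assert that under the substitution \eqref{s, eps} the rank-4 system splits into the PVI$\mu$ part (in $s$) and a twisted-period part (in $\epsilon$), and that $\nu=\mu+\tfrac12$ is ``the value effecting this shift''; but these are the statements to be proven, not tools one may invoke. The paper's proof never verifies the explicit matrix \eqref{4dtrans} against the Schlesinger equations at all. It first shows (Proposition 2) that the matrix $W$ of \eqref{W}, built from a 3-dimensional Darboux--Egoroff solution, is itself a 4-dimensional Darboux--Egoroff solution with $W^2=\mu^2I$ --- whence isomonodromy is automatic (Darboux--Egoroff $=$ Schlesinger) and the 4-dimensional Frobenius manifold exists by Dubrovin's reconstruction theorem. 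It then \emph{derives} \eqref{4dtrans} as the general homogeneous fundamental solution of \eqref{Wlinsys}, via the following device, which is what your plan is missing: since $W^2=\mu^2 I$, the two 2-dimensional eigenspaces of $W$, spanned by the vectors \eqref{hatphi}, are invariant under the linear system, and a homogeneous solution in an eigenspace is $(v_2-v_1)^{\pm\mu}$ times a function of $(\epsilon,s)$ only; the third $sl(2)$ generator acts by $\partial_{\tilde{e}}\psi=(UW+WU)\psi$, and $UW+WU$ commutes with $W$, hence is block-diagonal in the eigenbasis --- this single identity collapses the multi-variable system to the $2\times2$ Fuchsian ODE \eqref{2x2new} in $\epsilon$ with poles at $0,1,s$. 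Only then is $\nu=\mu+\tfrac12$ identified, not guessed: one computes the residues \eqref{Ri} for $n=3$ and observes that exactly at this value of $\nu$ one component decouples and the surviving $2\times2$ block coincides with \eqref{2x2new} (Proposition 3). Note also that your steps (i)--(ii) are not independent of this: the Wronskian identity you invoke to get constancy of $\hat{\Psi}^t\hat{\Psi}$ is itself derived from the traces of the residue matrices of \eqref{2x2new}, so it presupposes that the $\chi$-factors solve that system.

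A secondary remark on the converse. Your uniqueness-via-monodromy argument for ``all such manifolds arise this way'' can be made to work, but it is heavier than necessary and leans on Theorem 1 being a solution-level bijection, together with normalization bookkeeping that you wave at rather than control. The paper's route is elementary linear algebra: any skew-symmetric $4\times4$ matrix with $W^2=\mu^2 I$ is an isoclinic rotation, hence of the form \eqref{W} up to the sign of the last row and column; so every 4-dimensional tri-hamiltonian Darboux--Egoroff solution literally \emph{is} one produced from a 3-dimensional solution, with the residual ambiguity (interchange of the eigenvectors $\hat{\phi}_{(2)},\hat{\phi}_{(3)}$, i.e.\ a column sign in \eqref{4dtrans}) made explicit. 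If you keep your verification-style strategy, you must either reproduce the eigenspace/$\partial_{\tilde{e}}$ computation above or find a genuine substitute for it; as written, the proposal defers exactly the steps that constitute the proof.
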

Recall that the prepotential can be computed by quadrature from the transition matrix; it is uniquely determined up to Legendre-type transformations (\cite{Dub2Dtft}, Appendix B), which reflect the freedom in the choice of a selected column of $\hat{\Psi}$.\\
The distinctive feature of the construction is the appearance of the twisted periods of $M$ (\cite{DubAlmost*}) for the special value $\mu+1/2$ of the twisting parameter; they are related to the solutions of a certain two-component Fuchsian system, to which the full linear system for $\hat{\Psi}$ is naturally reduced.\\
The work is organized as follows: in Section 1 we introduce tri-hamiltonian structures, prove Lemma 1 and review some foundational material on the Darboux-Egoroff system and its interpretation in terms of isomonodromic deformations of Fuchsian operators. Section 2 is the core of the paper and contains the proofs of Theorems 1 and 2. The last section is devoted to examples: we apply the main construction to semisimple Frobenius structures arising in Hurwitz theory (\cite{DubMod}). There are three Hurwitz spaces of dimension 3; for two of them we propose (and prove explicitly in one case) a simple geometric interpretation for the $n=3/ n=4$  correspondence. The third (the $A_3$ singularity) looks quite different and in some sense more interesting. We compute it explicitly and provide some insight on its possible interpretation.\\
\\

I would like to thank my advisor Boris Dubrovin for suggesting to me the topic of this work and for his expert guidance throughout its preparation.
\section{Preliminaries}
\subsection{Tri-hamiltonan structures}
Let $M$ be a Frobenius manifold with prepotential $F(t^1, \dots, t^n)$. We will assume the standard normalization
\begin{equation}\label{antidiag}
\eta_{\alpha\beta}=\delta_{\alpha+\beta, n+1}
\end{equation}
of the metric, which can always be achieved by a suitable choice of flat coordinates provided $\eta(\partial/\partial t^1, \partial/\partial t^1)=0$. Introduce the \emph{grading operator}
\begin{equation}\label{hatmu}
\hat{\mu}\doteq \frac{2-d}{2}-\nabla\partial_E=\text{diag}(\mu_1, \dots, \mu_n)\qquad \mu_\alpha=\frac{2-d}{2}-d_\alpha
\end{equation}
where $d_\alpha$ are the homogeneity degrees of the flat coordinates, given by \eqref{Euler}. It follows from \eqref{antidiag} that
\begin{equation}\label{Nspec}
\mu_\alpha+\mu_{n+1-\alpha}=0
\end{equation}
Our starting point is the following result, relating the curvature of the metric \eqref{tildeeta} to the form of the grading operator:
\begin{Prop} The metric
$$
\tilde{\eta}^{\alpha\beta}=\eta^{\nu\alpha}(\mathcal{U}^2)^\beta_\nu\qquad \mathcal{U}=\partial_E\cdot
$$
is flat if and only if
\begin{equation}\label{WWDVV}
(\hat{\mu}^2(X\cdot Y))\cdot Z= X\cdot(\hat{\mu}^2(Y\cdot Z))
\end{equation}
for all vector fields $X, Y, Z$ on $M$.
\end{Prop}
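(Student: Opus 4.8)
The plan is to characterize flatness of $\tilde\eta$ through the theory of flat pencils of metrics, exploiting the fact that $\tilde\eta$, $g$, and $\eta$ are built from successive powers of the multiplication operator $\mathcal{U}$. First I would recall the general criterion from Dubrovin's work on flat pencils: a contravariant metric of the form $\eta^{\alpha\mu}(\mathcal{U}^k)^\beta_\mu$, when paired with the flat metric $\eta$, has an associated Levi-Civita connection whose curvature can be expressed in terms of the covariant derivatives of $\mathcal{U}$ and the associativity of the Frobenius product. Since $\eta$ is flat and $g=\eta^{\alpha\mu}\mathcal{U}^\beta_\mu$ is flat (and compatible with $\eta$) by the foundational results of axiom (B), the interesting new object is the \emph{second} power $\mathcal{U}^2$, so the computation naturally reduces to controlling one further application of $\mathcal{U}$.

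The key technical input is the behavior of $\mathcal{U}$ under the Levi-Civita connection $\nabla$ of $\eta$. I would use the standard identity $\nabla_\alpha \mathcal{U}^\beta_\gamma = \mu_\gamma\, c^\beta_{\alpha\gamma} - \text{(terms fixed by the grading)}$, more precisely the fact that the operator $\mathcal{U}$ satisfies $\nabla \mathcal{U} = \mathrm{id} - \hat\mu$ acting appropriately through the product, which follows from differentiating the definition $\mathcal{U}=\partial_E\cdot$ and using the quasi-homogeneity relations encoded in $\hat\mu$. The point is that the grading operator $\hat\mu$ measures precisely the failure of $\mathcal{U}$ to be $\nabla$-parallel, so that second derivatives of $\mathcal{U}^2$ — which govern the curvature of $\tilde\eta$ — collect into an expression involving $\hat\mu^2$ sandwiched between products. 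Carrying out this computation, one expects the flatness of $\tilde\eta$ to be equivalent to the symmetry of the trilinear expression $(\hat\mu^2(X\cdot Y))\cdot Z$ in the appropriate arguments, which is exactly \eqref{WWDVV}.

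Concretely, I would proceed as follows. First, write the curvature tensor of $\tilde\eta$ using the standard formula relating the Christoffel symbols of a metric $\tilde\eta^{\alpha\beta}=\eta^{\alpha\mu}(\mathcal{U}^2)^\beta_\mu$ to those of $\eta$, exploiting that $\eta$ is flat so one may work in flat coordinates where $\eta$-Christoffel symbols vanish. Second, substitute the expression for $\nabla\mathcal{U}$ in terms of $\hat\mu$ and the product to reduce all derivatives of $\mathcal{U}^2$ to algebraic (pointwise) expressions in $\mathcal{U}$, $\hat\mu$, and $c^\gamma_{\alpha\beta}$. Third, use the associativity of the product (the WDVV equations) together with the symmetry $\mu_\alpha+\mu_{n+1-\alpha}=0$ from \eqref{Nspec} to simplify, at which stage the curvature should manifestly vanish if and only if the reassociation in \eqref{WWDVV} holds. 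The equivalence is an "if and only if" because the obstruction to flatness is captured by a single tensorial quantity whose vanishing is precisely \eqref{WWDVV}, with no further constraints dropping out.

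The hard part will be the bookkeeping in the second step: expressing the second covariant derivative of $\mathcal{U}^2$ cleanly and isolating the genuinely obstructing term from the many terms that cancel by flatness of $g$ and $\eta$ or by associativity. One must verify that all contributions not of the form $\hat\mu^2$-sandwiched products cancel identically, relying on the already-established flatness of the pencil $(\eta,g)$; this cancellation is where the compatibility of $g$ with $\eta$ does the essential work, and it is easy to lose track of index symmetrizations. A clean way to organize this is to phase the whole argument in terms of the deformed connection on the extended space, but the most transparent route is likely the direct curvature computation in flat coordinates, carefully tracking which symmetrizations survive.
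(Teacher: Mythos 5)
Your plan is essentially the paper's own proof: it works in the flat coordinates of $\eta$, derives the contravariant Christoffel symbols $\tilde{\Gamma}^{\alpha\beta}_\gamma=\sum_{\nu}(1-\mu_\beta+\mu_\nu)g^{\beta\nu}c^\alpha_{\nu\gamma}$ from the identity $\partial_\gamma\mathcal{U}^\alpha_\beta=(1-\mu_\alpha+\mu_\beta)c^\alpha_{\beta\gamma}$ (the precise form of your ``$\nabla\mathcal{U}$'' formula, a consequence of quasi-homogeneity and \eqref{Nspec}), kills the terms quadratic in $\tilde{\Gamma}$ by associativity, and identifies the unique piece of $\partial_\lambda\tilde{\Gamma}^{\beta\gamma}_\delta$ not symmetric under $\lambda\leftrightarrow\delta$ as $-\sum_\nu\mu_\nu^2c^{\gamma\nu}_\lambda c^\beta_{\nu\delta}=-\tilde{\eta}^{\gamma\rho}[\hat{\mu}^2(\partial_\lambda\cdot\partial_\rho)\cdot\partial_\delta]^\beta$, whose symmetry is exactly \eqref{WWDVV}. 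The one correction: all cancellations come from associativity and this symmetrization alone --- flatness of $g$ and compatibility of the pencil $(\eta,g)$ are never invoked (in the paper the flat pencil is a \emph{corollary} of this proposition, not an input), so organizing the bookkeeping around pencil compatibility, as your last paragraph suggests, would have you looking for the cancellation mechanism in the wrong place.
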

\begin{Lemma} In the coordinates $t^\alpha,$ the contravariant Christoffel symbols of $\tilde{\eta}$ read
\begin{equation}\label{Christoffels}
\tilde{\Gamma}^{\alpha\beta}_\gamma=\sum_{\nu=1}^n(1-\mu_\beta+\mu_\nu)g^{\beta\nu}c^\alpha_{\nu\gamma}
\end{equation}
where $g$ is the intersection form \eqref{intform}.
\end{Lemma}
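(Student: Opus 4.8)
The plan is to verify directly that the expression \eqref{Christoffels} is the contravariant Levi-Civita connection of $\tilde{\eta}$. Recall that for a nondegenerate contravariant metric the contravariant Christoffel symbols are characterized, among all objects $\tilde{\Gamma}^{\alpha\beta}_\gamma$, by the two conditions (i) compatibility, $\partial_\gamma\tilde{\eta}^{\alpha\beta}=\tilde{\Gamma}^{\alpha\beta}_\gamma+\tilde{\Gamma}^{\beta\alpha}_\gamma$, and (ii) vanishing torsion, i.e.\ $\tilde{\eta}^{\gamma\sigma}\tilde{\Gamma}^{\alpha\beta}_\sigma$ is symmetric in $\alpha$ and $\gamma$. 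Since these conditions determine the connection uniquely, it suffices to check that \eqref{Christoffels} satisfies them; flatness plays no role here, being the separate content of Proposition 1.

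The engine of the computation is a single first-order identity for the multiplication operator $\mathcal{U}=\partial_E\,\cdot\,$. Writing $\mathcal{U}^\beta_\lambda=\eta^{\beta\rho}E^\sigma c_{\sigma\lambda\rho}$ with $E^\sigma=d_\sigma t^\sigma+r_\sigma$, so that $\partial_\gamma E^\sigma=d_\gamma\delta^\sigma_\gamma$, and differentiating the quasi-homogeneity axiom (B) three times to get $E^\sigma\partial_\sigma c_{\lambda\rho\gamma}=(\tfrac{d}{2}+\mu_\lambda+\mu_\rho+\mu_\gamma)c_{\lambda\rho\gamma}$, one finds that upon raising the index $\rho$ by \eqref{Nspec} (so that $\mu_\rho$ becomes $-\mu_\beta$) and using $d_\gamma+\tfrac{d}{2}+\mu_\gamma=1$ — which is just the definition \eqref{hatmu} of $\mu_\gamma$ — all the terms collapse into
\begin{equation}
\partial_\gamma\mathcal{U}^\beta_\lambda=(1+\mu_\lambda-\mu_\beta)c^\beta_{\lambda\gamma}.
\end{equation}
This is exactly where the coefficient $1-\mu_\beta+\mu_\nu$ of \eqref{Christoffels} originates. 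Applied both to $g^{\alpha\lambda}=\eta^{\alpha\mu}\mathcal{U}^\lambda_\mu$ and to $\tilde{\eta}^{\alpha\beta}=g^{\alpha\lambda}\mathcal{U}^\beta_\lambda$ (note $\tilde{\eta}^{\alpha\beta}=\eta^{\alpha\mu}(\mathcal{U}^2)^\beta_\mu=g^{\alpha\mu}\mathcal{U}^\beta_\mu$), it expresses $\partial_\gamma\tilde{\eta}^{\alpha\beta}$ as a sum of two explicit terms.

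To check (i) I would expand $\partial_\gamma\tilde{\eta}^{\alpha\beta}=(\partial_\gamma g^{\alpha\lambda})\mathcal{U}^\beta_\lambda+g^{\alpha\lambda}\partial_\gamma\mathcal{U}^\beta_\lambda$ and compare with the symmetrization $\tilde{\Gamma}^{\alpha\beta}_\gamma+\tilde{\Gamma}^{\beta\alpha}_\gamma$. Two structural facts carry the argument: associativity of the product, in the form $c^\nu_{\sigma\rho}c^\beta_{\nu\gamma}=c^\nu_{\sigma\gamma}c^\beta_{\nu\rho}$, which shows that $g^{\alpha\nu}c^\beta_{\nu\gamma}=\mathcal{U}^\nu_\gamma\,\eta^{\alpha\rho}\eta^{\beta\kappa}c_{\nu\rho\kappa}$ is symmetric in $\alpha,\beta$; and the spectral relation \eqref{Nspec}, which in the normalization \eqref{antidiag} reads $\mu_\mu\eta^{\alpha\mu}=-\mu_\alpha\eta^{\alpha\mu}$ (no sum) and lets one transfer the grading weight across a contraction with $\eta$. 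The term $g^{\alpha\lambda}\partial_\gamma\mathcal{U}^\beta_\lambda$ reproduces one half of the symmetrized right-hand side at once, while $(\partial_\gamma g^{\alpha\lambda})\mathcal{U}^\beta_\lambda$ splits, after relocating the $\mu$-weights through $\eta$ via \eqref{Nspec} and rearranging by associativity, into the remaining $(1-\mu_\alpha)$-piece plus exactly the weighted contraction $\sum_\nu\mu_\nu g^{\beta\nu}c^\alpha_{\nu\gamma}$; the two halves then reassemble into \eqref{Christoffels}. Condition (ii) is lighter: the same associativity identity gives $\tilde{\eta}^{\gamma\sigma}c^\alpha_{\nu\sigma}=\mathcal{U}^\sigma_\nu\,g^{\gamma\rho}c^\alpha_{\rho\sigma}$, so that $\tilde{\eta}^{\gamma\sigma}\tilde{\Gamma}^{\alpha\beta}_\sigma$ becomes a contraction, over an index independent of $\alpha$ and $\gamma$, of the $\alpha,\gamma$-symmetric object $g^{\gamma\rho}c^\alpha_{\rho\sigma}$, whence symmetry is manifest.

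The main obstacle is the bookkeeping in condition (i): both terms of $\partial_\gamma\tilde{\eta}^{\alpha\beta}$ generate contractions in which the weight $\mu_\nu$ sits on the \emph{summed} index, so they are not matrix products and cannot be manipulated by naive index raising. The delicate point is to recognize that each such weighted contraction can be reorganized only by a combined use of \eqref{Nspec}, to relocate the weight across $\eta$, and of associativity, to relocate it across $\mathcal{U}$, while tracking the separate $\mu_\alpha$, $\mu_\beta$ and $\mu_\nu$ contributions so that the asymmetric coefficient $1-\mu_\beta+\mu_\nu$ emerges in precisely the claimed position.
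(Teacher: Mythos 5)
Your proposal is correct and takes essentially the same route as the paper: the same characterization \eqref{contravariantGamma} of the contravariant Christoffel symbols, the same key identity \eqref{bubu} derived from quasi-homogeneity and \eqref{Nspec}, and the same associativity argument making the torsion condition manifest (the paper, too, rewrites $\tilde{\eta}\tilde{\Gamma}$ as a contraction of the symmetric object $g^{\alpha\rho}c^\beta_{\rho\nu}$ against factors independent of the two relevant indices); your treatment of the compatibility condition is in fact more explicit than the paper's one-line remark that it is ``immediately checked''. One small imprecision in your bookkeeping: $g^{\alpha\lambda}\partial_\gamma\mathcal{U}^\beta_\lambda$ is not literally equal to either half of the symmetrized right-hand side, since its weighted part $\sum_\nu\mu_\nu g^{\alpha\nu}c^\beta_{\nu\gamma}$ is that of $\tilde{\Gamma}^{\beta\alpha}_\gamma$ while its unweighted part carries the coefficient $(1-\mu_\beta)$ belonging to $\tilde{\Gamma}^{\alpha\beta}_\gamma$ — but the cross-over reassembly you invoke at the end (via \eqref{Nspec} and associativity, i.e.\ the $\eta$-antisymmetry of $\hat{\mu}$ against the $\eta$-symmetry of multiplication operators) is exactly what closes the computation.
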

\begin{proof}
Recall that the contravariant Christoffel symbols are uniquely characterized by the equations
\begin{equation}\label{contravariantGamma}
\partial_\gamma\tilde{\eta}^{\alpha\beta}=\tilde{\Gamma}_\gamma^{\alpha\beta}+\tilde{\Gamma}_\gamma^{\beta\alpha}\qquad \tilde{\eta}^{\alpha\nu}\tilde{\Gamma}^{\beta\gamma}_\nu=\tilde{\eta}^{\beta\nu}\tilde{\Gamma}^{\alpha\gamma}_\nu
\end{equation}
The first is immediately checked using the identity
\begin{equation}\label{bubu}
\partial_\gamma\mathcal{U}_\beta^\alpha=(1-\mu_\alpha+\mu_\beta)c_{\beta\gamma}^\alpha
\end{equation}
which is a consequence of the quasi-homogeneity axiom and the normalization \eqref{Nspec}. The second follows from associativity since
\begin{align*}
\tilde{\eta}^{\alpha\nu}\tilde{\Gamma}_\nu^{\beta\gamma}&= \sum_{\lambda=1}^n(1-\mu_\gamma+\mu_\lambda)g^{\gamma\lambda} g^{\alpha\rho}(\mathcal{U}^\nu_\rho c^\beta_{\lambda\nu})\\
&=\sum_{\lambda=1}^n(1-\mu_\gamma+\mu_\lambda)g^{\gamma\lambda}\mathcal{U}^\nu_\lambda(g^{\alpha\rho} c^\beta_{\rho\nu})
\end{align*}
and the last expression is symmetric in $\alpha, \beta$ again by associativity.
\end{proof}
\begin{proof}[Proof of the Proposition] In terms of the contravariant Christoffel symbols, the Riemann curvature tensor reads
$$
R^{\alpha\beta\gamma}_\delta=\tilde{\Gamma}^{\alpha\beta}_\lambda\tilde{\Gamma}^{\lambda\gamma}_\delta- \tilde{\Gamma}^{\alpha\gamma}_\lambda\tilde{\Gamma}^{\lambda\beta}_\delta+\tilde{\eta}^{\alpha\lambda}\left(\partial_\lambda
\tilde{\Gamma}_\delta^{\beta\gamma}-\partial_\delta\tilde{\Gamma}_\lambda^{\beta\gamma}\right)
$$
The first piece vanishes since
$$
\tilde{\Gamma}^{\alpha\beta}_\lambda\tilde{\Gamma}^{\lambda\gamma}_\delta=\sum_{\nu, \rho}(1-\mu_\beta+\mu_\nu)(1-\mu_\gamma+\mu_\rho)g^{\beta\nu}g^{\gamma\rho}(c^{\alpha}_{\nu\lambda}c^\lambda_{\rho\delta})
$$
is symmetric in $\gamma, \beta$ by associativity. For the second piece, using \eqref{bubu} we compute
$$
\partial_\lambda\tilde{\Gamma}_\delta^{\beta\gamma}=\sum_{\nu=1}^n(1-\mu_\gamma+\mu_\nu)[(1-\mu_\gamma-\mu_\nu)c^{\gamma\nu}_\lambda c^\beta_{\nu\delta}+g^{\gamma\nu}\partial_\lambda c_{\nu\delta}^\beta]
$$
In the above expression everything is symmetric under the exchange of $\lambda$ and $\delta$, except for the term
$$
-\sum_{\nu=1}^n\mu_\nu^2c^{\gamma\nu}_\lambda c^\beta_{\nu\delta}=-\tilde{\eta}^{\gamma\rho}[\hat{\mu}^2(\partial_\lambda\cdot\partial_\rho)\cdot\partial_\delta]^\beta
$$
The proposition is proved.
\end{proof}
Recall that two contravariant metrics $g_1, g_2$ form a \emph{flat pencil} if
\begin{itemize}
\item[(i)] For generic $\epsilon$ the metric
$$
g_{\epsilon}=g_1+\epsilon g_2
$$
is flat.
\item[(ii)] The contravariant Christoffel symbols of $g_{\epsilon}$ have the form
$$
\Gamma_{\epsilon}=\Gamma_1+\epsilon\Gamma_2
$$
where $\Gamma_i$ are the Christoffel symbols of $g_i$.
\end{itemize}
\begin{Cor} Under the condition \eqref{WWDVV}, $\tilde{\eta}$, $g$ and $\eta$ form a 2-parameters flat pencil of metrics.
\end{Cor}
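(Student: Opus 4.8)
The plan is to verify the two defining properties of a flat pencil directly for the three-parameter span $G\doteq a\eta + b\,g + c\,\tilde{\eta}$, the key device being a shift of the Euler field by the unit. Replacing $\partial_E$ with $\partial_E+\lambda\partial_1$ leaves $\eta$, the product, and---since $\partial_1$ is $\eta$-parallel---the grading operator $\hat{\mu}$ unchanged, so it yields a genuine Frobenius structure satisfying the same condition \eqref{WWDVV}; at the level of the multiplication operator it is simply $\mathcal{U}\mapsto\mathcal{U}+\lambda I$. Hence the intersection form \eqref{intform} of the shifted structure is $g+\lambda\eta$, flat by \cite{DubFlat} (this recovers the classical pencil $(\eta,g)$), while the metric \eqref{tildeeta} becomes
$$
\tilde{\eta}_\lambda = \eta(\mathcal{U}+\lambda I)^2 = \tilde{\eta}+2\lambda\,g+\lambda^2\eta,
$$
which is flat for every $\lambda$ by the Proposition, \eqref{WWDVV} being preserved.

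Applying Lemma 2, i.e. formula \eqref{Christoffels}, to the shifted structure---where $c^\alpha_{\nu\gamma}$ and the $\mu_\alpha$ are untouched and $g$ is replaced by $g+\lambda\eta$---I find that the contravariant Christoffel symbols of $\tilde{\eta}_\lambda$ are $\tilde{\Gamma}^{\alpha\beta}_\gamma+2\lambda\,\Gamma^{\alpha\beta}_\gamma$, where $\Gamma^{\alpha\beta}_\gamma=(\tfrac12-\mu_\beta)c^{\alpha\beta}_\gamma$ are the Christoffel symbols of $g$ (the $\lambda$-linear term reproduces them after contracting with $\eta$ via \eqref{Nspec}, exactly as in the proof of Lemma 2). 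Thus the natural candidate symbols for the full pencil, $\hat{\Gamma}^{\alpha\beta}_\gamma\doteq b\,\Gamma^{\alpha\beta}_\gamma+c\,\tilde{\Gamma}^{\alpha\beta}_\gamma$, are linear in the parameters and independent of $a$.

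For property (ii) I must check the two equations \eqref{contravariantGamma} for $G$ with the symbols $\hat{\Gamma}$. The first holds by linearity, since $\eta,g,\tilde{\eta}$ satisfy it with $0,\Gamma,\tilde{\Gamma}$ respectively. For the second I would not compute afresh, but rather read off the mixed symmetries from the polynomial identity in $\lambda$ expressing the second equation of \eqref{contravariantGamma} for the pair $(\tilde{\eta}_\lambda,\tilde{\Gamma}+2\lambda\Gamma)$: its coefficients of $\lambda^3,\lambda^2,\lambda^1$ give precisely the $(\eta,g)$, $(\eta,\tilde{\eta})$ and $(g,\tilde{\eta})$ mixed relations (the $\lambda^2$ coefficient combined with the already-known symmetry of $g^{\alpha\nu}\Gamma^{\beta\gamma}_\nu$), while the pure contributions reduce to the single-metric cases contained in Lemma 2 and in the flatness of $g$. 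These are exactly the terms arising when the second equation of \eqref{contravariantGamma} for $G$ is expanded in $a,b,c$, so it holds identically.

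Once (ii) is established, $\hat{\Gamma}$ are the true Christoffel symbols of $G$, and the curvature formula used in the proof of the Proposition displays $R=R(a,b,c)$ as a homogeneous quadratic form in $(a,b,c)$ with no $a^2$ term. Its $b^2$ and $c^2$ coefficients vanish because $g$ and $\tilde{\eta}$ are flat; the $ab$ coefficient vanishes because $\eta+\lambda g$ is flat; and the vanishing of $R(\lambda^2,2\lambda,1)$ for all $\lambda$---i.e. the flatness of $\tilde{\eta}_\lambda$---then kills the $ac$ and $bc$ coefficients. Hence $R\equiv 0$, $G$ is flat for generic parameters, and (i)--(ii) together give the asserted $2$-parameter flat pencil. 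I expect the only subtle point to be the cross-compatibility of $g$ and $\tilde{\eta}$; the shift device is what turns this into a bookkeeping of powers of $\lambda$, so that, beyond the Proposition and Lemma 2, the sole extra input is the elementary algebra of $(\mathcal{U}+\lambda I)^2$.
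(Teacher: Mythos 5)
Your proof is correct and rests on the same key device as the paper's: shifting the Euler field by the unit is exactly the paper's translation $t^1\mapsto t^1+\epsilon_1/2$ (since $\partial\mathcal{U}/\partial t^1=I$), and both arguments extract from it the family $\tilde{\eta}+2\lambda g+\lambda^2\eta$ with Christoffel symbols $\tilde{\Gamma}+2\lambda\Gamma$. The only divergence is bookkeeping: the paper absorbs the second parameter into a constant multiple of $\eta$ and concludes directly from the flatness of $\tilde{\eta}$, while you polarize the quadratic dependence of the curvature on the pencil parameters and quote \cite{DubFlat} for the $(\eta,g)$ cross-term.
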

\begin{proof}
The identity
$$
\frac{\partial}{\partial t^1}\mathcal{U}=I
$$
yields
$$
\frac{\partial}{\partial t^1}\tilde{\eta}^{\alpha\beta}=2g^{\alpha\beta}, \qquad \frac{\partial^2}{(\partial t^1)^2}\tilde{\eta}^{\alpha\beta}=2\eta^{\alpha\beta}
$$
Therefore the 2-parameters pencil
$$
g_{\epsilon_1, \epsilon_2}=\tilde{\eta}+\epsilon_1 g+\epsilon_2 \eta
$$
can be expressed as
$$
g_{\epsilon_1, \epsilon_2}^{\alpha\beta}(t)=\tilde{\eta}^{\alpha\beta}(t^1+\frac{\epsilon_1}{2}, t^2, \dots, t^n)+\left(\epsilon_2-\frac{\epsilon_1^2}{4}\right)\eta^{\alpha\beta}
$$
Then one checks directly, substituting in \eqref{contravariantGamma}, that the contravariant Christoffel symbols of $g_{\epsilon_1, \epsilon_2}$ are given by
$$
\Gamma_{\epsilon_1, \epsilon_2}(t)=\tilde{\Gamma}(t^1+\frac{\epsilon_1}{2}, t^2, \dots, t^n)=\tilde{\Gamma}(t)+\epsilon_1\Gamma(t)
$$
where $\Gamma$ are the Christoffel symbols of the intersection form, 
$$
\Gamma^{\alpha\beta}_\gamma=\left(\frac{1}{2}-\mu_\beta\right)c^{\alpha\beta}_\gamma
$$
(see \cite{Dub2Dtft}). At this point flatness of the whole pencil follows immediately from the flatness of $\tilde{\eta}$.
\end{proof}
It is clear that condition (C) in the Introduction implies \eqref{WWDVV}: it means precisely that the square of the grading operator is scalar, 
$$
\hat{\mu}^2=\mu^2 I
$$
We use the latter slightly stronger notion as the definition of tri-hamiltonian Frobenius manifold because it is more manageable, and coincides with \eqref{WWDVV} under the assumption of semisimplicity, which we will add henceforth. It also coincides with Pavlov-Tsarev's original definition in \cite{PT}.
\subsection{Semisimplicity and Darboux-Egoroff} A Frobenius manifold $M$ is called \emph{semisimple} (or \emph{massive}) if at any point of $M$ the Frobenius algebra on the tangent space contains no nilpotents.  In this case one locally constructs \emph{canonical coordinates}  $u_1, \dots, u_n$ reducing the multiplication table to the standard semisimple form
\begin{equation}\label{prod}
\frac{\partial}{\partial u_i}\cdot\frac{\partial}{\partial u_j}=\delta_{ij}\frac{\partial}{\partial u_i}
\end{equation}
and the metric to diagonal form
\begin{equation}\label{diageta}
\eta =\sum_{i=1}^n \eta_i(u) du_i^2
\end{equation}
whereas the unit and the Euler vector field become
\begin{equation}\label{eE}
\partial_e=\sum_{i=1}^n\frac{\partial}{\partial u_i}\qquad\partial_E=\sum_{i=1}^nu_i\frac{\partial}{\partial u_i}
\end{equation}
respectively. The flat and canonical coordinate systems provide complementary descriptions of the Frobenius structure: in the first the metric is constant and the geometric data of the manifold consist of the product, which contains the prepotential. In the second the product is reduced to normal form, and the non-trivial object becomes the metric; in this new picture the role of WDVV is replaced by a classical differential-geometric system. The relation between the two pictures is described by the \emph{transition matrix} \eqref{transitionPsi}.\\
\indent More precisely: to the diagonal metric \eqref{diageta}, we associate the off-diagonal matrix $\Gamma=(\gamma_{ij})$ of \emph{rotation coefficients}, defined by
\begin{equation}\label{gamma}
\gamma_{ij}(u)\doteq\frac{1}{\sqrt{\eta_i(u)}}\frac{\partial }{\partial u_i}\sqrt{\eta_j(u)}\qquad i\neq j
\end{equation}
Introduce the matrix
\begin{equation}\label{defV}
V(u)\doteq[\Gamma(u), U]\qquad U\doteq\text{diag}(u_1, \dots, u_n)
\end{equation}
\begin{Thm}[Dubrovin] The matrix $V$ coincides with the grading operator written in the frame of normalized idempotents:
\begin{equation}\label{Vmuhat}
V=\Psi\hat{\mu}\Psi^{-1}
\end{equation}
It satisfies the system of nonlinear equations
\begin{equation}\label{DE}
\frac{\partial}{\partial u_i}V=[V_i, V]\qquad  i=1, \dots, n
\end{equation}
$$
V_i\doteq [\Gamma, E_i]\equiv \text{ad}_{E_i}\text{ad}_U^{-1}V
$$
where $E_i=(\delta_{ij}\delta_{ik})$. In particular, the transition matrix satisfies the linear system
\begin{equation}\label{fundlinsys}
\frac{\partial}{\partial u_i}\Psi=V_i\Psi\qquad i=1,\dots, n
\end{equation}
\end{Thm}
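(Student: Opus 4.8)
The plan is to work in the orthonormal frame of idempotents $f_i=\eta_i^{-1/2}\partial_{u_i}$, in which the transition matrix \eqref{transitionPsi} becomes the change-of-basis matrix: from $\partial/\partial t^\alpha=\sum_i\psi_{i\alpha}f_i$ one reads off both $\Psi^t\Psi=\eta$ (orthonormality of the $f_i$) and the fact that each $\partial/\partial t^\alpha$ is $\nabla$-parallel, $\nabla$ being the Levi-Civita connection of $\eta$. The argument rests on two preliminary facts. First, since the unit $e=\partial/\partial t^1=\sum_i\partial_{u_i}$ is $\nabla$-parallel, the one-form $\eta(e,\cdot)$ is closed; in flat coordinates it has constant coefficients, hence equals $d\Phi$ for a flat function $\Phi$, while in canonical coordinates $\eta(e,\cdot)=\sum_i\eta_i\,du_i$. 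Thus $\eta_i=\partial_{u_i}\Phi$ and $\partial_{u_j}\eta_i=\partial_{u_i}\eta_j$, i.e. the metric \eqref{diageta} is of Egoroff type and the rotation coefficients are symmetric, $\gamma_{ij}=\gamma_{ji}$. Second, a direct computation of the Christoffel symbols of the diagonal metric gives $\nabla_{\partial_{u_j}}f_i=\gamma_{ij}f_j$ for $i\neq j$ and $\nabla_{\partial_{u_i}}f_i=-\sum_{k\neq i}\gamma_{ki}f_k$.

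I would first derive the linear system \eqref{fundlinsys}. Expanding the parallelism condition $\nabla_{\partial_{u_j}}(\partial/\partial t^\alpha)=0$ in the frame $f_i$ and collecting the coefficient of each $f_k$ yields $\partial_{u_j}\psi_{k\alpha}=\gamma_{kj}\psi_{j\alpha}$ for $k\neq j$ and $\partial_{u_j}\psi_{j\alpha}=-\sum_{i\neq j}\gamma_{ij}\psi_{i\alpha}$. On the other hand $V_j=[\Gamma,E_j]$ has entries $(V_j)_{kl}=\gamma_{kj}\delta_{jl}-\delta_{kj}\gamma_{jl}$, so $(V_j\Psi)_{k\alpha}=\gamma_{kj}\psi_{j\alpha}$ for $k\neq j$ and $(V_j\Psi)_{j\alpha}=-\sum_{l\neq j}\gamma_{jl}\psi_{l\alpha}$. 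The off-diagonal rows match on the nose; the diagonal row matches only after invoking the Egoroff symmetry $\gamma_{ij}=\gamma_{ji}$ established above. This proves \eqref{fundlinsys}.

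Next I would identify $V=[\Gamma,U]$ with the grading operator in the idempotent frame. Since $\Psi$ is the change of basis from the flat frame to the $f_i$, the $(1,1)$-tensor $\hat\mu=\tfrac{2-d}{2}I-\nabla\partial_E$ of \eqref{hatmu}, whose matrix in flat coordinates is diagonal, has matrix $\Psi\hat\mu\Psi^{-1}$ in the frame $f_i$. From $\Psi^t\Psi=\eta$ and the $\eta$-skew-symmetry $\hat\mu^t\eta+\eta\hat\mu=0$ forced by \eqref{Nspec}, one gets that $\Psi\hat\mu\Psi^{-1}$ is skew-symmetric, hence has vanishing diagonal; and $[\Gamma,U]$ is visibly off-diagonal as well. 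It therefore suffices to match off-diagonal entries. Computing $\nabla_{f_j}\partial_E$ from $\partial_E=\sum_k u_k\partial_{u_k}$ with the connection coefficients above, the off-diagonal part of $\nabla\partial_E$ comes out to $\gamma_{mj}(u_m-u_j)$ (using $\gamma_{jm}=\gamma_{mj}$), so the off-diagonal part of $\Psi\hat\mu\Psi^{-1}$ is $\gamma_{mj}(u_j-u_m)=([\Gamma,U])_{mj}$. This establishes \eqref{Vmuhat}.

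Finally, the nonlinear system \eqref{DE} is an immediate consequence: differentiating $V=\Psi\hat\mu\Psi^{-1}$ and substituting \eqref{fundlinsys} together with $\partial_{u_i}\Psi^{-1}=-\Psi^{-1}V_i$ gives $\partial_{u_i}V=V_iV-VV_i=[V_i,V]$; and since $V=-\mathrm{ad}_U\Gamma$ on off-diagonal matrices we have $\Gamma=-\mathrm{ad}_U^{-1}V$, whence $V_i=[\Gamma,E_i]=\mathrm{ad}_{E_i}\mathrm{ad}_U^{-1}V$, so the right-hand side depends on $V$ alone. The main obstacle here is not any single computation but assembling the correct framework: the crux is recognizing the Egoroff symmetry of the canonical metric and the $\eta$-skew-symmetry of $\hat\mu$, since these are precisely what force the two a priori dissimilar-looking expressions—$V_j\Psi$ versus $\partial_{u_j}\Psi$ on their diagonal rows, and $[\Gamma,U]$ versus $\Psi\hat\mu\Psi^{-1}$ on their off-diagonal parts—to coincide.
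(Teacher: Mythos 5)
Your proof is correct. The paper itself gives no proof of this theorem --- it is stated as review material attributed to Dubrovin (the reference being \cite{Dub2Dtft}) --- so the only meaningful comparison is with the standard argument in that reference, and yours is essentially that argument: the Egoroff property $\eta_i=\partial_{u_i}\Phi$ deduced from parallelism of the unit $e=\sum_i\partial_{u_i}$, the connection coefficients of the diagonal metric expressed in the orthonormal frame $f_i$, and the $\eta$-skew-symmetry of $\hat\mu$ coming from \eqref{Nspec} to reduce \eqref{Vmuhat} to an off-diagonal check. The two places where something could silently go wrong --- matching the diagonal rows of $\partial_{u_j}\Psi$ and $V_j\Psi$ in the verification of \eqref{fundlinsys}, and identifying the off-diagonal part of $\nabla\partial_E$ in the frame $f_i$ with $\gamma_{mj}(u_m-u_j)$ --- are precisely where you invoke the symmetry $\gamma_{ij}=\gamma_{ji}$, and both computations check out; the concluding steps, namely differentiating \eqref{Vmuhat} against \eqref{fundlinsys} to get \eqref{DE} and inverting $V=-\mathrm{ad}_U\Gamma$ to write $V_i=\mathrm{ad}_{E_i}\mathrm{ad}_U^{-1}V$ so that the system closes in $V$ alone, are also correct.
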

Formula \eqref{DE} is known as the \emph{Darboux-Egoroff system}.\\
Conversely, let $V$ be any skew-symmetric $n\times n$ solution of \eqref{DE}. Then the linear system
\begin{equation}\label{linsys}
\frac{\partial}{\partial u_i}\psi=V_i\psi\qquad i=1, \dots, n
\end{equation}
is compatible, i.e. it possesses an $n$-dimensional space of solutions. Moreover, \eqref{linsys} preserves the eigenvectors and the eigenspaces of $V$: assuming $V$ to be diagonalizable, there exists a basis of solutions $\psi_{(\alpha)}=(\psi_{1\alpha}, \dots, \psi_{n\alpha})^t$, $\alpha=1, \dots, n$ satisfying
$$
V\psi_{(\alpha)}=\partial_E\psi_{(\alpha)}=\mu_{\alpha}\psi_{(\alpha)}
$$
Here $\partial_E$ is the Euler vector field in the canonical coordinates and the constants $\mu_\alpha$ are the eigenvalues of $V$. Therefore the solutions $\psi_{(\alpha)}$ are homogeneous functions of degree the corresponding eigenvalue. In this way we obtain an \emph{homogeneous} fundamental solution $\Psi=(\psi_{i\alpha})$ of \eqref{linsys}. We can assume without losing generality that none of the component $\psi_{i\alpha}$ vanishes identically, otherwise the system \eqref{DE} would split.\\
Select now an element of the basis, label it as $\psi_{(1)}$, and (after restricting to the open set $\psi_{i1}\neq0, i=1, \dots, n$) substitute in the formulas
\begin{equation}\label{reconstruction}
\eta_{\alpha\beta}\doteq\sum_{i=1}^n\psi_{i\alpha}\psi_{i\beta}
\end{equation}
$$
dt^\alpha\doteq\sum_{i=1}^n\eta^{\alpha\beta}\psi_{i1}\psi_{i\beta}du_i
$$
$$
c_{\alpha\beta\gamma}\doteq\sum_{i=1}^n\frac{\psi_{i\alpha}\psi_{i\beta}\psi_{i\gamma}}{\psi_{i1}}
$$
Then $\eta_{\alpha\beta}$ is a constant and nondegenerate matrix, and $c_{\alpha\beta\gamma}$ are third derivatives with respect to the $t^\alpha$ of a prepotential $F$.\\
The above discussion is summarized in the following standard result:
\begin{Thm}[Dubrovin] There is a 1:1 correspondence between semisimple Frobenius manifolds and diagonalizable solutions of the Darboux-Egoroff system with a marked eigenvector.
\end{Thm}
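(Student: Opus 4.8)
The plan is to establish the correspondence in both directions and then verify that the two constructions are mutually inverse, all of the non-trivial analytic input having already been packaged into Theorem 1 (Dubrovin) and the reconstruction formulas \eqref{reconstruction}.

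\emph{From a Frobenius manifold to a marked solution.} Given a semisimple $M$, I would pass to canonical coordinates, form the rotation coefficients \eqref{gamma} and the matrix $V=[\Gamma,U]$ of \eqref{defV}. Theorem 1 then gives for free that $V$ solves the Darboux--Egoroff system \eqref{DE} and that $V=\Psi\hat\mu\Psi^{-1}$; in particular $V$ is diagonalizable, its eigenvalues being the spectrum $\mu_\alpha$ of the grading operator. The marked eigenvector is the column $\psi_{(1)}$ of $\Psi$ attached to the unit: since $\partial/\partial t^1=\partial_e=\sum_i\partial/\partial u_i$, one has $\partial u_i/\partial t^1=1$ and hence $\psi_{i1}=\sqrt{\eta_i}$, so that $\psi_{(1)}$ is intrinsically determined. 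This produces a diagonalizable solution of \eqref{DE} together with a distinguished eigenvector.

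\emph{From a marked solution to a Frobenius manifold.} Conversely, given a diagonalizable solution $V$ of \eqref{DE} with marked eigenvector $\psi_{(1)}$, I would run the reconstruction \eqref{reconstruction}. Constancy of $\eta_{\alpha\beta}$ follows by differentiating along the linear system \eqref{linsys} and using that each $V_k=[\Gamma,E_k]$ is skew-symmetric, so that $\sum_i\partial_k(\psi_{i\alpha}\psi_{i\beta})$ pairs an antisymmetric matrix against a symmetric one and vanishes; nondegeneracy is clear since $\eta=\Psi^t\Psi$ and $\Psi$ is invertible. Closedness of the one-forms $dt^\alpha$ reduces, for $i\neq k$, to the identity $\gamma_{ik}=\gamma_{ki}$, i.e.\ precisely the Egoroff symmetry of the rotation coefficients, so the $t^\alpha$ are genuine coordinates. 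The change of frame is $\partial/\partial t^\alpha=\sum_i(\psi_{i\alpha}/\psi_{i1})\partial/\partial u_i$, under which the diagonal product and metric of canonical coordinates go over to a commutative associative product with unit $\partial/\partial t^1$ and invariant metric $\eta$; associativity and the Frobenius property on tangent spaces are therefore automatic, being inherited from the semisimple normal form.

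\emph{The main obstacle.} The crux is to show that the completely symmetric tensor $c_{\alpha\beta\gamma}$ of \eqref{reconstruction} is a third derivative, i.e.\ that $\partial_{t^\delta}c_{\alpha\beta\gamma}$ is totally symmetric; given the manifest symmetry in $\alpha,\beta,\gamma$ it suffices to prove symmetry under $\delta\leftrightarrow\alpha$. Here the skew-symmetry of the $V_k$ used above is not enough: one must feed in the full Darboux--Egoroff equations \eqref{DE} (equivalently the compatibility $\partial_i\gamma_{jk}=\gamma_{ji}\gamma_{ik}$ for distinct indices, together with $\sum_k\partial_k\gamma_{ij}=0$). Expanding $\partial_{t^\delta}c_{\alpha\beta\gamma}$ by the chain rule in the $u_i$ and substituting $\partial_{u_k}\psi_{i\alpha}=(V_k)_{ij}\psi_{j\alpha}$, the contributions that are not manifestly symmetric cancel exactly by these relations; I expect this to be the one genuinely computational step. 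Integrating three times then yields the prepotential $F$ with $\partial_\alpha\partial_\beta\partial_\gamma F=c_{\alpha\beta\gamma}$, and quasi-homogeneity (axiom (B)) follows from the homogeneity $\partial_E\psi_{(\alpha)}=\mu_\alpha\psi_{(\alpha)}$ of the basis, which fixes the degrees $d_\alpha=\tfrac{2-d}{2}-\mu_\alpha$ and the Euler field $\partial_E=\sum_iu_i\,\partial/\partial u_i$.

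\emph{Mutual inverseness.} Finally I would check that the two passages are inverse. Starting from $M$, extracting $(V,\psi_{(1)})$ and reconstructing returns the original $\eta$, $c_{\alpha\beta\gamma}$ and hence $F$, because the reconstruction formulas are exactly the inverse change of frame of \eqref{transitionPsi}. Starting from $(V,\psi_{(1)})$, building $F$ and then recomputing its canonical coordinates and transition matrix returns the same data; the only residual ambiguities are the signs of the $\sqrt{\eta_i}$ and a relabelling of the $u_i$, all of which are absorbed by fixing the marked eigenvector. This establishes the stated $1{:}1$ correspondence.
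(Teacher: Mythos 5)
Your proposal is correct and follows essentially the same route as the paper, which presents this theorem as a summary of the preceding discussion: the forward direction is exactly Dubrovin's result \eqref{Vmuhat} identifying $V=\Psi\hat{\mu}\Psi^{-1}$ as a diagonalizable solution of \eqref{DE}, and the converse is the reconstruction \eqref{reconstruction} from a homogeneous fundamental solution with marked column $\psi_{(1)}$. The verifications you sketch (constancy of $\eta$, closedness of $dt^\alpha$ via Egoroff symmetry, symmetry of $\partial_\delta c_{\alpha\beta\gamma}$ via the full Darboux--Egoroff equations, and mutual inverseness) are precisely the standard computations the paper leaves implicit by citing the result as Dubrovin's.
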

It is interesting that the constraint (A), (B) of the Introduction assume a straightforward meaning once expressed in terms of the Darboux-Egoroff system: indeed \eqref{DE} immediately implies
\begin{equation}\label{inv1}
\partial_e V=0, \qquad \partial_E V=0
\end{equation}
so that any solution is invariant under affine transformations $u_i\to au_i+b$.\\
It turns out condition (C) means precisely invariance of $V$ under the full group of fractional linear transformations
\begin{equation}\label{flt}
u_i\to\frac{au_i+b}{cu_i+d}\qquad ad-bc\neq0
\end{equation}
Indeed, the vector fields $\partial_e, \partial_E$ can be completed to the $sl(2)$ algebra generating \eqref{flt} by adding the vector field
$$
\partial_{\tilde{e}}\doteq\sum_{i=1}^nu_i^2\frac{\partial}{\partial u_i}\equiv \partial_E\cdot\partial_E
$$
\begin{Lemma} For a solution $V$ of \eqref{DE}, the following conditions are equivalent
\begin{itemize}
\item[(i)] The matrix $V^2$ is diagonal.
\item[(ii)] The matrix $V$ satisfies
\begin{equation}\label{inv2}
\partial_{\tilde{e}}V=0 
\end{equation}
\end{itemize}
\end{Lemma}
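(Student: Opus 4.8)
The plan is to reduce both conditions to a single entrywise identity by exploiting the Darboux--Egoroff equations \eqref{DE}. Since $\partial_{\tilde e}=\sum_i u_i^2\partial_i$, equation \eqref{DE} gives immediately
\[
\partial_{\tilde e}V=\sum_{i=1}^n u_i^2[V_i,V]=\Big[\sum_{i=1}^n u_i^2 V_i,\,V\Big],
\]
so everything hinges on evaluating the diagonal-weighted sums $\sum_i u_i^k V_i$. Using $V_i=[\Gamma,E_i]$, whose entries are $(V_i)_{jk}=\gamma_{ji}\delta_{ik}-\delta_{ji}\gamma_{ik}$, together with $V_{jk}=(u_k-u_j)\gamma_{jk}$ read off from \eqref{defV} and \eqref{gamma}, a one-line computation yields
\[
\sum_{i}V_i=0,\qquad \sum_i u_iV_i=V,\qquad \sum_i u_i^2V_i=UV+VU.
\]
The first two identities recover $\partial_e V=0$ and $\partial_E V=[V,V]=0$, i.e. \eqref{inv1}; the third is the new ingredient.

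Substituting the third identity into the bracket above and simplifying the anticommutator, I would obtain
\[
\partial_{\tilde e}V=[UV+VU,\,V]=UV^2-V^2U=[U,V^2].
\]
Writing this componentwise and using that $U$ is diagonal gives
\[
(\partial_{\tilde e}V)_{jk}=(u_j-u_k)(V^2)_{jk}.
\]
Because the canonical coordinates are pairwise distinct, $u_j\neq u_k$ for $j\neq k$; hence the right-hand side vanishes precisely when all off-diagonal entries of $V^2$ are zero. For $j=k$ the prefactor $u_j-u_k$ vanishes identically, so no constraint arises on the diagonal, consistently with the fact that ``$V^2$ diagonal'' is a condition on the off-diagonal part only. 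This establishes $\partial_{\tilde e}V=0\iff V^2\text{ diagonal}$, which is the asserted equivalence (ii)$\iff$(i).

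There is no serious obstacle here: the argument is a direct computation once the weighted sums $\sum_i u_i^k V_i$ are recognized. The only point requiring care is the bookkeeping of the entries of $V_i$ and the consistent use of the skew-symmetry and vanishing diagonal of $V$, which guarantee that both sides of $\partial_{\tilde e}V=[U,V^2]$ are off-diagonal, so that the equivalence is not obscured by spurious diagonal terms. The distinctness of the $u_i$, valid on the open dense locus where canonical coordinates are defined, is what upgrades the matrix identity to the pointwise statement that $V^2$ is diagonal.
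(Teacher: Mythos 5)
Your proof is correct and follows essentially the same route as the paper: both reduce $\partial_{\tilde e}V$ to the commutator $[U,V^2]$ via the identity $\sum_i u_i^2 V_i = UV+VU$ (the paper writes this as $[\Gamma,U^2]$) and conclude from the distinctness of the canonical coordinates. Your explicit entrywise bookkeeping and the final remark on why $[U,V^2]=0$ forces $V^2$ diagonal merely spell out steps the paper leaves implicit.
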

\begin{proof}
This is a simple calculation:
$$
\partial_{\tilde{e}} V=\sum_{i=1}^nu_i^2[V_i, V]=[[\Gamma, U^2], V]=[UV, V]+[VU, V]=[U, V^2]
$$
\end{proof}
Assuming that for the solution $V$ the system \eqref{DE} does not split (that is, $V$ is not block-diagonal), condition (i) is the same as
\begin{equation}\label{Vsquared}
V^2=\mu^2 I
\end{equation}
Equivalently, $V$ must be an skew-symmetric diagonalizable matrix with only 2 eigenvalues $\pm\mu$, which is exactly our definition of tri-hamiltonianity (recall that  by \eqref{Vmuhat} the eigenvalues of $V$ coincide with those of the grading operator). It is clear from the Lax form of Darboux-Egoroff that this last condition is preserved by the system, i.e. it indeed defines a reduction.
\begin{Rem} It is worth noting that equation \eqref{inv2} implies flatness of the third metric even without any homogeneity assumption. In other words, if the rotation coefficients of a diagonal Egoroff metric $\eta$ satisfy
$$
\partial _k\gamma_{ij}=\gamma_{ik}\gamma_{kj}\qquad i, j, k\;\; \text{distinct}
$$
$$
\partial_e\gamma_{ij}=0
$$
and the additional constraint
$$
\partial_{\tilde{e}}\gamma_{ij}=-(u_i+u_j)\gamma_{ij}
$$
(but not necessarily the homogeneity $\partial_E\gamma_{ij}=-\gamma_{ij}$), then diagonal metric
$$
\tilde{\eta}=\sum_{i=1}^n\frac{\eta_i}{u_i^2}du_i^2
$$
is flat (and compatible with $\eta$).
\end{Rem}
In the lowest dimensional case $n=2$, every Frobenius manifold has tri-hamiltonian structure, since \eqref{inv1} completely determines the solution
$$
V=\mu\left(\begin{array}{cc}
0	&1\\
-1	&0
\end{array}\right)
$$
up to the constant $\mu$, and \eqref{inv2} is automatically satisfied. In the next section we study the first nontrivial case $n=4$ in details\\
\indent Before proceeding we will review a few more facts from the theory of Frobenius manifolds, specifically concerning the connection between Darboux-Egoroff and isomonodromic deformations of Fuchsian linear operators. First we introduce, in addition to \eqref{prod}, a second product between vector fields by the rule\footnote{Let us stress that whenever needed we assume implicitly to be working ``outside of the discriminant", i.e. in a domain where the canonical coordinates do not vanish. This condition is also necessary for non-degeneracy of the metrics $\tilde{\eta}, g$.}
\begin{equation}\label{prod*}
\frac{\partial}{\partial u_i}\star\frac{\partial}{\partial u_j}=\frac{1}{u_i}\delta_{ij}\frac{\partial}{\partial u_i}
\end{equation}
and a pencil of affine connections depending on a complex parameter $\nu$
$$
\tilde{\nabla}^{\nu}_XY\doteq\tilde{\nabla}_XY+\nu X\star Y
$$
where $X, Y$ are vector fields and $\tilde{\nabla}$ is the Levi-Civita connection of the intersection form. This is called the \emph{almost-dual deformed connection} and is proved in \cite{DubAlmost*} to be identically flat in $\nu$. In particular there exist an $n$-dimensional space of flat coordinates
$$
p=p(t^1, \dots, t^n;\nu)\;\;\; \text{such that}\;\;\; \tilde{\nabla}^{\nu}dp=0
$$
(expressed here as function of the flat coordinates $t^\alpha$ of $\eta$), called the \emph{twisted periods} of the Frobenius manifold. The parameter $\nu$ is called twisting parameter.\\
The next result from \cite{DubAlmost*} plays a central role in the following:
\begin{Thm}[Dubrovin] Let $p(t;\nu)$ be a twisted period and
$$
\chi=(\chi^1, \dots, \chi^n)^t, \qquad\chi^\alpha\doteq\eta^{\alpha\beta}\frac{\partial}{\partial t^\beta}p
$$
be the gradient of $p$ with respect to the coordinates $t^\alpha$. Introduce dependence on a shift parameter $\epsilon$ in $\chi$ by setting
$$
\chi(t;\epsilon)\doteq\chi(t^1-\epsilon, t^2, \dots, t^n)
$$
Then $\chi$ satisfies the linear system
\begin{equation}\label{FuchSys}
\frac{\partial}{\partial\epsilon}\chi=\sum_{i=1}^n\frac{R_i}{\epsilon-u_i}\chi
\end{equation}
\begin{equation}
\frac{\partial}{\partial u_i}\chi=-\frac{R_i}{\epsilon-u_i}\chi, \qquad i=1, \dots, n
\end{equation}
with residue matrices
\begin{equation}\label{Ri}
R_i=-\Psi^{-1}E_i\Psi\left(\frac{1}{2}-\nu+\hat{\mu}\right)
\end{equation}
In particular, the $(u_1, \dots, u_n)$-parametric family of Fuchsian differential operators
\begin{equation}\label{FuchOp}
\frac{d}{d\epsilon}-\sum_{i=1}^n\frac{R_i}{\epsilon-u_i}
\end{equation}
is isomonodromic.
\end{Thm}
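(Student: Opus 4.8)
The plan is to unwind the flat-section equation $\tilde\nabla^\nu dp=0$ into a first-order linear system for the gradient $\chi$, to pass to canonical coordinates so that the multiplication operator becomes diagonal, and finally to read off the residues and deduce isomonodromy. First I would reformulate flatness as a Hessian identity. Writing the deformed connection on $1$-forms, $\tilde\nabla^\nu dp=0$ is equivalent to
$$(\tilde\nabla_X dp)(Y)=\nu\,dp(X\star Y)\qquad\text{for all }X,Y,$$
where $\tilde\nabla$ is the Levi-Civita connection of the intersection form $g$ and $\star$ is the dual product, which equals multiplication by $\mathcal{U}^{-1}$. Since this expresses the second derivatives of $p$ linearly in its first derivatives, it is already a first-order system for the $\eta$-gradient $\chi^\alpha=\eta^{\alpha\beta}\partial_\beta p$: in flat coordinates one obtains $\partial_\alpha\chi=A_\alpha\chi$, with the matrices $A_\alpha$ built from the multiplication tensor $c$, the operator $\mathcal{U}$, the grading operator $\hat\mu$, and $\nu$. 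The grading operator enters through the contravariant Christoffel symbols of $g$, which by the formula $\Gamma^{\alpha\beta}_\gamma=(\tfrac12-\mu_\beta)c^{\alpha\beta}_\gamma$ recalled in the Corollary already carry the $\tfrac12-\mu_\beta$ factor, while $\nu$ comes from the $\star$-term.

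Next I would introduce the shift and pass to canonical coordinates. The essential mechanism is the identity $\partial_{t^1}\mathcal{U}=I$ already used in the Corollary: since $\mathcal{U}$ is then affine-linear in $t^1$, replacing $t^1$ by $t^1-\epsilon$ sends $\mathcal{U}\mapsto\mathcal{U}-\epsilon\,I$, whose inverse is the resolvent $(\mathcal{U}-\epsilon)^{-1}$. In canonical coordinates $\mathcal{U}=U=\mathrm{diag}(u_1,\dots,u_n)$, so this resolvent is $\mathrm{diag}\big((u_i-\epsilon)^{-1}\big)$, which is exactly what produces the simple poles at $\epsilon=u_i$. Working in the normalized idempotent frame, where the gradient becomes $\Psi\chi$ with components $\eta_i^{-1/2}\partial_{u_i}p$ and where $\Psi$ itself obeys $\partial_{u_i}\Psi=V_i\Psi$, the first-order system is seen to take the form
$$\partial_{u_i}(\Psi\chi)=\Big[V_i+\frac{E_i\big(\tfrac12-\nu+V\big)}{\epsilon-u_i}\Big]\Psi\chi.$$
Conjugating back by $\Psi$ cancels the $V_i$ term against $\partial_{u_i}\Psi=V_i\Psi$, and using $\Psi^{-1}V\Psi=\hat\mu$ yields $\partial_{u_i}\chi=-\frac{R_i}{\epsilon-u_i}\chi$ with $R_i=-\Psi^{-1}E_i\Psi(\tfrac12-\nu+\hat\mu)$, which is \eqref{Ri}. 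The $\epsilon$-equation then costs nothing new: by construction $\partial_\epsilon=-\partial_{t^1}=-\partial_e=-\sum_i\partial_{u_i}$ on the shifted $\chi$, so summing the $u_i$-equations gives $\partial_\epsilon\chi=\sum_i\frac{R_i}{\epsilon-u_i}\chi$, which is \eqref{FuchSys}.

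Finally, isomonodromy is almost free once the system is established. The $n$-dimensional space of twisted periods $p_{(1)},\dots,p_{(n)}$ has linearly independent differentials, so their gradients assemble into an invertible matrix $X(u;\epsilon)$ solving simultaneously $\partial_\epsilon X=\sum_i\frac{R_i}{\epsilon-u_i}X$ and $\partial_{u_i}X=-\frac{R_i}{\epsilon-u_i}X$. These are precisely the Schlesinger deformation equations for the family \eqref{FuchOp}; their compatibility is guaranteed because all of them are directional derivatives of a single well-defined fundamental matrix, and the residue at infinity $\sum_i R_i=-(\tfrac12-\nu+\hat\mu)$ is constant in $u$. Hence the monodromy of \eqref{FuchOp} is independent of $(u_1,\dots,u_n)$ and the family is isomonodromic.

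I expect the main obstacle to be the first two steps jointly: converting the second-order flat-section equation into the first-order system in canonical coordinates and, in particular, correctly assembling the constant factor $\tfrac12-\nu+\hat\mu$. This factor mixes three separate contributions — the $\tfrac12$ coming from the Levi-Civita connection of $g$ through $\Gamma^{\alpha\beta}_\gamma=(\tfrac12-\mu_\beta)c^{\alpha\beta}_\gamma$, the $\nu$ coming from the $\star$-deformation, and $\hat\mu$ (equivalently $V=\Psi\hat\mu\Psi^{-1}$) entering through the eigenstructure in the idempotent frame — and disentangling them demands careful bookkeeping of the two metrics $\eta$ and $g$, of the rotation coefficients of $g$ versus those of $\eta$, and of the Darboux-Egoroff relations $\partial_{u_i}\Psi=V_i\Psi$.
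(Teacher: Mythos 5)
You should note at the outset that the paper itself contains no proof of this statement: it is quoted from Dubrovin's almost-duality paper \cite{DubAlmost*} (the paper explicitly introduces it as ``the next result from \cite{DubAlmost*}''), so the comparison can only be against the standard derivation, which is indeed the route you follow. Your skeleton is correct and internally consistent. The shift mechanism via $\partial_{t^1}\mathcal{U}=I$, the passage to the normalized idempotent frame where the gradient becomes $\Psi\chi$ with components $\eta_i^{-1/2}\partial_{u_i}p$, the cancellation of $V_i$ against $\partial_{u_i}\Psi=V_i\Psi$ of \eqref{fundlinsys}, the identification $\Psi^{-1}V\Psi=\hat{\mu}$ from \eqref{Vmuhat}, and the computation $\sum_iR_i=-(\tfrac12-\nu+\hat{\mu})$ are all correct. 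Your derivation of the $\epsilon$-equation by summing the $u_i$-equations is also right, though it silently uses that the $R_i$ are invariant under the simultaneous shift $u_j\to u_j-\epsilon$; this follows from $\partial_e\Psi=\sum_jV_j\Psi=0$ (a consequence of $\sum_j V_j=[\Gamma,I]=0$ and \eqref{inv1}) and deserves to be said, since it is exactly what makes $\epsilon$ enter only through the differences $\epsilon-u_i$. The isomonodromy conclusion from the joint system in $(\epsilon,u)$ is the standard Schlesinger argument and is fine.

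The one genuine gap is the step you yourself flag: the displayed equation
$$
\partial_{u_i}(\Psi\chi)=\Big[V_i+\frac{E_i\big(\tfrac12-\nu+V\big)}{\epsilon-u_i}\Big]\Psi\chi
$$
is not derived but asserted, and — as your own conjugation argument shows — it is \emph{equivalent} to the conclusion \eqref{Ri}; so as written the proposal assumes precisely what has to be proved. The real content of the theorem is extracting this equation from $\tilde{\nabla}^{\nu}dp=0$, i.e.\ expressing the covariant Hessian of $p$ (with the Christoffel symbols of the intersection form and the $\nu\,X\star Y$ term) in canonical coordinates; that computation is outlined only in words in your first paragraph. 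Two consistency checks confirm you have assembled the constant factor correctly, and would be worth including: specializing to $n=3$, $\nu=1/2$ reproduces the matrices \eqref{t2decoupling} of the paper, and the requirement that each flat coordinate $t^\beta$ be a twisted period exactly at $\nu=\tfrac12-\mu_\beta$ forces $R_i\chi=0$ there, which pins the factor to $\tfrac12-\nu+\hat{\mu}$ rather than, say, $\tfrac12-\nu-\hat{\mu}$. But these are checks, not a derivation, so your proposal should be regarded as a correct and well-organized plan whose central computation remains to be carried out.
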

\begin{Rem}The converse statement also holds. Indeed, the Schlesinger equations for the family \eqref{FuchOp} coincide with the Darboux-Egoroff system and do not depend on $\nu$ (although the solutions $\chi$ of the Fuchsian system of course do).
\end{Rem}
Let us consider the special case $n=3, \nu=1/2$. We have
\begin{equation}\label{t2decoupling}
\hat{\mu}=\mu\left(\begin{array}{ccc}
1	&0	&0\\
0	&0	&0\\
0	&0	&-1
\end{array}\right), \qquad R_i=\mu\left(\begin{array}{ccc}
-\psi_{i1}\psi_{i3}	&0	&\psi_{i3}^2\\
-\psi_{i1}\psi_{i2}	&0	&\psi_{i2}\psi_{i3}\\
-\psi_{i1}^2		&0	&\psi_{i1}\psi_{i3}
\end{array}\right)
\end{equation}
Thus the Fuchsian system \eqref{FuchSys} splits into a quadrature and the 2x2 system
\begin{equation}\label{2x21}
\frac{\partial}{\partial\epsilon}\chi=\sum_{i=1}^3\frac{A_i}{\epsilon-u_i}\chi, \qquad A_i=\mu\left(
\begin{array}{cc}
-\psi_{i1}\psi_{i3}	&\psi_{i3}^2\\
-\psi_{i1}^2		&\psi_{i1}\psi_{i3}
\end{array}\right)
\end{equation}
This is a two-components Fuchsian system with 4 simple poles (at  the canonical coordinates and $\infty$), and by a classical result (\cite{Sch}) the isomonodromy equations for such system are equivalent to the Painlev\'e VI equation (\cite{Painl}, \cite{Gamb}).\\ Explicitly, introduce the invariant parameter 
$$
s\doteq\frac{u_3-u_1}{u_2-u_1}
$$
and the rescaled transition matrix
\begin{equation}\label{phi}
\Phi(s)=\Psi(u_1, u_2, u_3)\;(u_2-u_1)^{-\hat{\mu}}\equiv \Psi(0, 1, s)
\end{equation}
Then after the gauge transformation
$$
\chi\to\left(\begin{array}{cc}
(u_2-u_1)^{-\mu}	&0\\
0				&(u_2-u_1)^\mu
\end{array}\right)\chi
$$
and the change of variable
$$
\epsilon\to(u_2-u_1)\epsilon+u_1
$$
the system \eqref{2x21} takes the standard form
\begin{equation}\label{2x22}
\frac{\partial}{\partial\epsilon}\chi=\left(\frac{A_1}{\epsilon}+\frac{A_2}{\epsilon-1}+\frac{A_3}{\epsilon-s}\right)\chi, \qquad A_i(s)=\mu\left(
\begin{array}{cc}
-\phi_{i1}\phi_{i3}	&\phi_{i3}^2\\
-\phi_{i1}^2		&\phi_{i1}\phi_{i3}
\end{array}\right)
\end{equation}
(we re-denoted here by the same letter the new residue matrices and independent variable). The residue at infinity is
$$
A_4=-(A_1+A_2+A_3)=\mu\left(\begin{array}{cc}
1	&0\\
0	&-1
\end{array}\right)
$$
Since the top-right entry of $A_4$ is zero, the top right entry of
$$
\epsilon(\epsilon-1)(\epsilon-s)
\left(\frac{A_1}{\epsilon}+\frac{A_2}{\epsilon-1}+\frac{A_3}{\epsilon-s}\right)
$$
is a degree one polynomial in $\epsilon$, with $s$-dependent coefficients; let $y(s)$ denote the unique zero of this polynomial. Then the system \eqref{2x22} is isomonodromic if and only if $y$ satisfies the special version \eqref{PVImu} of Painlev\'e VI, depending on a single parameter $\mu$.
\begin{Cor} The Darboux-Egoroff system in dimension $n=3$ is equivalent to the Painlev\'e VI$\mu$ equation.
\end{Cor}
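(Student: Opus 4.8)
The plan is to obtain the equivalence by chaining the correspondences already set up, reading both sides as problems modulo the symmetries \eqref{inv1}. The starting point is the Remark observing that the Schlesinger equations for the isomonodromic Fuchsian family \eqref{FuchOp} coincide with the Darboux--Egoroff system \eqref{DE} and are independent of the twisting parameter $\nu$. It therefore suffices to analyze \eqref{FuchOp} at a single convenient value of $\nu$, and the content of the corollary becomes the claim that in dimension three these Schlesinger equations collapse to the single scalar equation \eqref{PVImu}.

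I would then specialize to $\nu=1/2$. In dimension three the spectral symmetry \eqref{Nspec} forces $\hat\mu=\mathrm{diag}(\mu,0,-\mu)$, so the factor $\tfrac12-\nu+\hat\mu$ in \eqref{Ri} reduces to $\hat\mu$ and the residues acquire the form \eqref{t2decoupling} with vanishing middle column. This produces the decoupling recorded in the text: the first and third components of $\chi$ close into the traceless nilpotent rank-one $2\times2$ system \eqref{2x21}, while the second component is reconstructed from them by a quadrature. The step that needs care is that this reduction is faithful at the level of isomonodromy. Since the monodromy of the decoupled scalar equation is controlled by that of the $2\times2$ block, a deformation of $(u_1,u_2,u_3)$ should be isomonodromic for \eqref{FuchOp} if and only if it is isomonodromic for \eqref{2x21}; moreover the entries of the $A_i$ recover $\psi_{i1}$ and $\psi_{i3}$, and hence, via the orthonormality $\Psi^t\Psi=\eta$ (which determines the remaining column up to sign), the full transition matrix and the solution $V$ of \eqref{DE}. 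Thus no Darboux--Egoroff datum is lost in passing to the $2\times2$ block.

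Next I would invoke the classical theorem of Schlesinger: the isomonodromic deformations of a rank-two Fuchsian system with four simple poles are governed by Painlev\'e VI. After the gauge transformation and fractional-linear change of variable described before the statement, the poles sit at $0,1,s,\infty$ as in \eqref{2x22}, the finite residues $A_1,A_2,A_3$ are nilpotent, and $A_4=\mathrm{diag}(\mu,-\mu)$. These exponents --- trivial formal monodromy at the three finite singular points and $\theta_\infty=2\mu$ at infinity --- are exactly the ones that cut the generic four-parameter Painlev\'e VI down to the one-parameter family \eqref{PVImu}. Because the $(1,2)$-entry of $A_4$ vanishes, the associated apparent singularity $y(s)$ is well defined as the unique zero of the degree-one $(1,2)$-entry of $\epsilon(\epsilon-1)(\epsilon-s)\big(\tfrac{A_1}{\epsilon}+\tfrac{A_2}{\epsilon-1}+\tfrac{A_3}{\epsilon-s}\big)$, and \eqref{2x22} is isomonodromic precisely when $y$ solves \eqref{PVImu}. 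Reversing the chain --- solve \eqref{PVImu}, reconstruct the isomonodromic residues $A_i(s)$, then $\Psi$ and $V$ --- yields the converse, and through Dubrovin's one-to-one correspondence the full equivalence with semisimple three-dimensional Frobenius manifolds.

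The main obstacle I expect is the interface between the two middle steps: verifying that the $3\times3\to2\times2$ reduction is an equivalence of isomonodromy problems (losing no deformation parameters and introducing none), and pinning down that the Painlev\'e VI so produced is the specific $\mu$-reduction \eqref{PVImu}, with the coefficients exactly as written, rather than a generic member of the four-parameter family. Everything else is either a quoted classical result or the bookkeeping of the gauge transformation and of the definition of $y(s)$, which I would treat as routine.
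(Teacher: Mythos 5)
Your proposal is correct and follows essentially the same route as the paper: the Remark identifying Darboux--Egoroff with the Schlesinger equations of \eqref{FuchOp}, the specialization to $n=3$, $\nu=1/2$ that decouples the second component and leaves the $2\times2$ system \eqref{2x21}, and the classical Schlesinger result with exponents $(0,0,0,2\mu)$ pinning the isomonodromy condition down to \eqref{PVImu} via the zero $y(s)$ of the top-right entry. Your only addition is the explicit check that the $3\times3\to2\times2$ reduction is faithful (recovering $\psi_{i1},\psi_{i3}$, hence $V$, from the residues $A_i$), a point the paper leaves implicit in the splitting.
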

An explicit parametric representation for the prepotential of the semisimple 3-dimensional Frobenius manifold in terms of the corresponding Painlev\'e VI$\mu$ transcendent was found by Guzzetti (\cite{Guzz}). An alternative approach, based on a representation of PVI in terms of elliptic functions, was proposed by Manin (\cite{Man}).
\section{Tri-hamiltonian structures in dimension 4}
We now turn to the study of 4-dimensional Frobenius manifold with tri-hamil\-tonian structure. A simple observation shows that again they correspond to solutions of PVI$\mu$:
\begin{Prop}Let
\begin{equation}\label{3d}
V(u_1, u_2, u_3)=\left(\begin{array}{ccc}
0	&-c(s)	&b(s)\\
c(s)	&0		&-a(s)\\
-b(s)	&a(s)	&0
\end{array}\right)\qquad s=\frac{u_3-u_1}{u_2-u_1}
\end{equation}
be a 3-dimensional solution of the Darboux-Egoroff system. Then
\begin{equation}\label{W}
W(v_1, v_2, v_3, v_4)=\left(\begin{array}{cccc}
0	&-c(s)	&b(s)	&-a(s)\\
c(s)	&0		&-a(s)	&-b(s)\\
-b(s)	&a(s)	&0		&-c(s)\\
a(s)	&b(s)	&c(s)		&0
\end{array}\right)\qquad s=\frac{(v_3-v_1)(v_4-v_2)}{(v_2-v_1)(v_4-v_3)}
\end{equation}
is a 4-dimensional solution with tri-hamiltonian structure of the same system. Moreover, any such 4-dimensional solution is of the form \eqref{W}, up to changing the sign of the last row and column.
\end{Prop}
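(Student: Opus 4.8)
The plan is to split the statement into an algebraic part (the normal form of $W$) and a differential part (the Darboux--Egoroff equations \eqref{DE}), connected by the fractional-linear invariance recorded earlier. I would first determine the shape of $W$ from the tri-hamiltonian condition alone. For a skew-symmetric $4\times4$ matrix $W=(w_{ij})$, the equation $W^2=\mu^2 I$ splits into the diagonal conditions $\sum_{j\neq i}w_{ij}^2=-\mu^2$ and the vanishing of the six off-diagonal entries of $W^2$. Solving the latter bilinear relations pairwise, using that no entry vanishes (the non-splitting hypothesis), forces the three ``opposite-edge'' pairs to be proportional, $w_{34}=\varepsilon w_{12}$, $w_{24}=-\varepsilon w_{13}$, $w_{23}=\varepsilon w_{14}$ with $\varepsilon=\pm1$. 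The two signs are exchanged by reversing the sign of the last row and column, and the choice $\varepsilon=+1$ with $a=-w_{14}$, $b=w_{13}$, $c=-w_{12}$ reproduces \eqref{W}; the diagonal conditions become $a^2+b^2+c^2=-\mu^2$, equivalently the spectrum of $W$ (and of its upper-left block) is $\{\pm\mu\}$, so that block has the form \eqref{3d}.

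Next I would use invariance to reduce to a single variable. By the Lemma characterizing $\partial_{\tilde e}V=0$, the condition $W^2=\mu^2 I$ forces $\partial_{\tilde e}W=0$, which with \eqref{inv1} means $W$ is annihilated by the whole $sl(2)$ generating \eqref{flt}; hence $W$ depends only on the cross-ratio $s$ of \eqref{W}. In the same way a $3$-dimensional solution depends only on the affine invariant $s=(u_3-u_1)/(u_2-u_1)$, and \eqref{DE} collapses to a single matrix ODE $dV/ds=[\tilde V_3,V]$, obtained by evaluating \eqref{DE} at $(0,1,s)$, i.e. to an explicit first-order system for $(a,b,c)$. Both directions of the Proposition follow once I show that the $4$-dimensional system \eqref{DE} for $W=W(s)$ reduces to exactly this same system.

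For that reduction I would exploit that $\partial_e,\partial_E,\partial_{\tilde e}$ all annihilate $W$. Writing the defect $F_i\doteq\partial_{v_i}W-[W_i,W]$ and using $\sum_iW_i=0$, $\sum_iv_iW_i=[\Gamma,U]=W$ (with $U=\mathrm{diag}(v_1,\dots,v_4)$), and $\sum_iv_i^2W_i=[\Gamma,U^2]$ together with the algebraic identity $[[\Gamma,U^2],W]=[U,W^2]=0$, one obtains the three relations $\sum_iF_i=\sum_iv_iF_i=\sum_iv_i^2F_i=0$. These form an independent (Vandermonde) system on the four skew matrices $F_i$, so \eqref{DE} is equivalent to any single one of its four equations, which reduces to an ODE in $s$. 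I would identify this ODE with the $3$-dimensional one by evaluating the remaining equation in the degeneration $v_4\to\infty$: there $s\to(v_3-v_1)/(v_2-v_1)$, the rotation coefficients become block-diagonal, $W_3$ collapses onto the $3\times3$ matrix $\tilde V_3$, and the upper-left block of $\partial_{v_3}W=[W_3,W]$ becomes precisely $dV/ds=[\tilde V_3,V]$, the off-diagonal block yielding an automatically consistent equation for $(a,b,c)$.

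I expect the main obstacle to be this last matching: controlling the full $4\times4$ commutator $[W_3,W]$ and justifying that checking the reduced equation on the slice $v_4\to\infty$, rather than at a generic finite configuration, is legitimate. The three $sl(2)$-relations are what make the problem tractable, since they cut four matrix equations down to one; the remaining verification is then a bounded computation that I would either carry out at a generic configuration or, more cleanly, deduce from the Möbius-covariance of the Darboux--Egoroff/Schlesinger system.
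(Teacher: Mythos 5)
Your proposal is sound and genuinely different from the paper's proof, which is only a few lines long: the forward direction is done there by directly plugging \eqref{W} into \eqref{DE} and observing that one obtains the same system \eqref{abcODE} that the $3$-dimensional ansatz \eqref{3d} produces; tri-hamiltonianity is read off from the fact that $W$ depends on the cross-ratio alone; and the converse normal form is obtained by citing the classification of skew-symmetric matrices in $SO(4)$ as left/right isoclinic rotations. Your two substitutes are both correct and more self-contained. For the normal form you solve the six bilinear equations given by the off-diagonal entries of $W^2$ directly, arriving at exactly the paper's conclusion, including the sign ambiguity $\varepsilon=\pm1$ absorbed by flipping the last row and column (one harmless slip: the spectrum of the upper-left $3\times3$ block is $\{0,\pm\mu\}$, not $\{\pm\mu\}$). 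For the differential part, your three relations $\sum_iF_i=\sum_iv_iF_i=\sum_iv_i^2F_i=0$ do hold, and for purely algebraic reasons: they use only that $W$ is a function of the cross-ratio (so $\partial_eW=\partial_EW=\partial_{\tilde e}W=0$) together with $[[\Gamma,U^2],W]=[U,W^2]=0$, valid because $W^2$ is scalar; and since the vector orthogonal to $(1,\dots,1)$, $(v_i)$, $(v_i^2)$ has all components nonzero (they are $3\times3$ Vandermonde determinants), \eqref{DE} is indeed equivalent to the single equation $F_3=0$, pointwise and hence identically. This makes transparent \emph{why} four matrix PDEs can collapse to the one-variable system \eqref{abcODE}, which the paper's ``plug in and compute'' leaves implicit, and it also supplies the step the paper's converse glosses over, namely why the entries of a tri-hamiltonian solution depend only on the cross-ratio (your appeal to the Lemma on $\partial_{\tilde e}V=0$ together with \eqref{inv1} is exactly right).

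The one step that cannot stand as written is the identification of $F_3=0$ with \eqref{abcODE} ``in the degeneration $v_4\to\infty$'': vanishing of the limit of $F_3$ does not imply vanishing of $F_3$ at finite configurations, so by itself this proves neither direction. The clean fix is the invariance property you half-anticipate. Dividing $F_3$ by $\partial_{v_3}s=\frac{(v_4-v_1)(v_4-v_2)}{(v_2-v_1)(v_4-v_3)^2}\neq0$, each entry takes the form $w_{jk}'(s)$ minus a combination of products $w_{j3}w_{3k}$ whose rational coefficients are ratios of differences $v_i-v_j$ in which every variable occurs with the same degree in numerator and denominator; such expressions are M\"obius-invariant and hence functions of $s$ alone, so they may legitimately be evaluated on any slice, including $(0,1,z,w)$ with $w\to\infty$. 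Equivalently, one simply computes at a generic finite configuration: for instance the $(1,4)$ entry gives $w_{14}'=w_{13}w_{34}\,\frac{(v_2-v_1)(v_4-v_3)}{(v_3-v_1)(v_4-v_2)}=w_{13}w_{34}/s$, i.e.\ $a'=bc/s$, and the $(1,3)$, $(1,2)$ entries give the other two equations of \eqref{abcODE}. With that supplement your argument is complete, and the residual computation is a quarter of the verification the paper compresses into the phrase ``plugging \eqref{W} into \eqref{DE}''.
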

\begin{proof}
The Darboux-Egoroff system for \eqref{3d} reads
\begin{equation}\label{abcODE}
\frac{d}{ds}a=\frac{bc}{s}\;\qquad\; \frac{d}{ds}b=\frac{ac}{1-s}\;\qquad\; \frac{d}{ds}c=\frac{ab}{s(s-1)}
\end{equation}
Plugging \eqref{W} into \eqref{DE} we again obtain \eqref{abcODE}. The invariance under fractional linear transformations of  $v_i$ is clear from the fact the $W$ only depends on the cross-ratio $s$.\\
The converse statement follows from the observation  that any skew-symmetric 4x4 matrix squaring to a multiple of the identity (see \eqref{Vsquared}) is of the form \eqref{W}, up to the overall sign of the last row and column. Indeed, \eqref{tri-system2} implies that $W/(i\mu)$ is orthogonal, and the only skew-symmetric matrices in $SO(4)$ are off-diagonal left and right isoclinic rotations. 
\end{proof}
\begin{Rem} Throughout the section, we fix the solution \eqref{3d} and work simultaneously with the 3-dimensional Frobenius manifold associated to it and with the 4-dimensional manifold associated to \eqref{W}. As above, we denote by $u_i, v_i$ the respective canonical coordinate, but use the same letter $s$ for the respective invariant parameters. Here the ``invariance" is meant under affine transformations of the $u_i$, and under fractional linear transformations of the $v_i$.
\end{Rem}
As a direct corollary we obtain Theorem 1; it suffices to note that \eqref{tri-system} coincides with \eqref{FuchOp} (for $n=4$) after gauging by the transition matrix. Thus we established a correspondence between 3-dimensional semisimple Frobenius manifolds and 4-dimensional semisimple Frobenius manifolds with tri-hamiltonian structure. Proposition 2 shows that the relation between the respective solutions of Darboux-Egoroff is extremely simple. Our next object of study is the 4-dimensional prepotential associated to \eqref{W}; it turns out that this second object is related to the 3-dimensional manifold in a highly nontrivial way.\\
The main goal is to describe the homogeneous fundamental solution of the linear system
\begin{equation}\label{Wlinsys}
\frac{\partial}{\partial v_i}\psi= W_i\psi\qquad W_i=\text{ad}_{E_i}\text{ad}_{\hat{U}}^{-1}W
\end{equation}
(here $\hat{U}=\text{diag}(v_1, \dots v_4)$) i.e. the transition matrix of the 4-dimensional manifold. From the latter we can derive (up to quadrature) the tri-hamiltonian prepotential using the formulas \eqref{reconstruction}. The idea is to write \eqref{Wlinsys} in a basis of eigenvectors of $W$ built out of 3-dimensional data and depending only on the cross-ratio, namely:
\begin{equation}\label{hatphi}
\hat{\phi}_{(1)}(s)=\left(\begin{array}{c}
\phi_{11}\\
\phi_{21}\\
\phi_{31}\\
0
\end{array}\right)\qquad
\hat{\phi}_{(2)}(s)=\left(\begin{array}{c}
\phi_{12}\\
\phi_{22}\\
\phi_{32}\\
i
\end{array}\right)
\end{equation}
$$
\hat{\phi}_{(3)}(s)=\left(\begin{array}{c}
\phi_{12}\\
\phi_{22}\\
\phi_{23}\\
-i
\end{array}\right)\qquad
\hat{\phi}_{(4)}(s)=\left(\begin{array}{c}
\phi_{13}\\
\phi_{23}\\
\phi_{33}\\
0
\end{array}\right)
$$
where where $\phi_{i\alpha}$ have been defined in $\eqref{phi}$. It is readily checked that\footnote{Note that choosing the opposite sign in the last row and column of $W$ simply interchanges the eigenvectors $\hat{\phi}_{(2)}$ and $\hat{\phi}_{(3)}$. At the end of the construction, this results in changing the sign of the last column of \eqref{4dtrans}. For simplicity we will disregard this ambiguity in the following.}
$$
W\hat{\phi}_{(\alpha)}=\mu\hat{\phi}_{(\alpha)}, \qquad\alpha=1, 2
$$
$$
W\hat{\phi}_{(\alpha)}=-\mu\hat{\phi}_{(\alpha)}, \qquad\alpha=3, 4
$$
where $\mu=-(a^2+b^2+c^2)$. Recall that each eigenspace is invariant under the linear system, and that solutions lying inside an eigenspace are homogeneous functions of degree the corresponding eigenvalue.\\
Let us restrict our attention to the $\mu$-eigenspace: the linear system implies
$$
\partial_e\psi=0\qquad\partial_E\psi=W\psi=\mu\psi
$$
Thus, if we rescale $\psi$ by $(v_2-v_1)^{-\mu}$ the result is invariant under affine transformations, and we can express it as a function of the cross-ratio $s$ and an additional affine-invariant independent parameter. We chose it to be
\begin{equation}\label{epsilon}
\epsilon\doteq\frac{v_2-v_4}{v_2-v_1}
\end{equation}
Summarizing, we have
\begin{equation}\label{psi1}
\psi=(v_2-v_1)^\mu[\alpha_1(\epsilon, s)\hat{\phi}_{(1)}(s)+\alpha_2(\epsilon, s)\hat{\phi}_2(s)]
\end{equation}
Similarly, in the $-\mu$-eigenspace
\begin{equation}\label{psi2}
\psi=(v_2-v_1)^{-\mu}[\beta_1(\epsilon, s)\hat{\phi}_{(3)}(s)+\beta_2(\epsilon, s)\hat{\phi}_4(s)]
\end{equation}
We are now ready to prove our main result:
\begin{Thm} The homogeneous solutions of the linear system \eqref{Wlinsys} are in 1:1 correspondence with the solutions of the Fuchsian system
\begin{equation}\label{2x2new}
\frac{\partial}{\partial\epsilon}\chi=\left(\frac{B_1}{\epsilon}+\frac{B_2}{\epsilon-1}+\frac{B_3}{\epsilon-s}\right)\chi
\end{equation}
$$
B_i(s)=\mu\left(\begin{array}{cc}
\phi_{i2}^2		&2\phi_{i2}\phi_{i3}\\
\phi_{i1}\phi_{i2}	&2\phi_{i1}\phi_{i3}
\end{array}\right)
$$
\end{Thm}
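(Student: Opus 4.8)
The plan is to carry out the reduction announced in \eqref{psi1}--\eqref{psi2}: substitute the eigenbasis ansatz into the linear system \eqref{Wlinsys} and show that the two coefficient functions in each eigenspace satisfy the two-component system \eqref{2x2new}. I work throughout in the $\mu$-eigenspace, the case of the $-\mu$-eigenspace \eqref{psi2} being identical after exchanging $\mu\leftrightarrow-\mu$ and the pair $(\hat\phi_{(1)},\hat\phi_{(2)})$ with $(\hat\phi_{(3)},\hat\phi_{(4)})$. Writing $\psi=(v_2-v_1)^\mu\,\Xi$ with $\Xi=\alpha_1\hat\phi_{(1)}+\alpha_2\hat\phi_{(2)}$ and setting $\chi=(\alpha_1,\alpha_2)^t$, the goal is an $\epsilon$-flow for $\chi$ of the stated Fuchsian form, with the three poles normalized to $\{0,1,s\}$.

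First I would record the two structural facts that make the reduction well posed. The eigenspace sub-bundles of $W$ are preserved by \eqref{Wlinsys}: if $\rho=W\psi-\mu\psi$, a direct computation using the Lax equation $\partial_{v_i}W=[W_i,W]$ gives $\partial_{v_i}\rho=W_i\rho$, so $\rho$ satisfies the same linear system and vanishes identically once it vanishes at one point; hence every homogeneous solution indeed has the form \eqref{psi1}. Second, as recorded just before \eqref{psi1}, such a solution obeys $\partial_e\psi=0$ and $\partial_E\psi=\mu\psi$; these two relations absorb the two affine directions among $v_1,\dots,v_4$, leaving the genuine invariants $s$ and $\epsilon$ of \eqref{s, eps} and \eqref{epsilon}. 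Consequently only two independent combinations of the four equations \eqref{Wlinsys} survive; the special-conformal generator $\partial_{\tilde e}=\sum_i v_i^2\partial_{v_i}$, being the unique $sl(2)$ direction under which $\epsilon$ is not invariant, is the natural source of the $\epsilon$-flow, while a field transverse to the $sl(2)$ orbits supplies the $s$-flow.

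The computational core is to evaluate these two surviving combinations. Applying $\partial_{v_i}$ to \eqref{psi1} and using the chain rule, the derivatives of the coefficients produce $\partial_\epsilon\chi$ and $\partial_s\chi$, while the derivatives $\partial_s\hat\phi_{(\alpha)}$ of the eigenvectors are governed by the three-dimensional linear system \eqref{fundlinsys} for $\Phi$ (concretely by $V_3$ evaluated at $u=(0,1,s)$). These eigenvector derivatives generically leave the $\mu$-eigenspace, so matching the components along the $-\mu$-eigenspace on the two sides of \eqref{Wlinsys} is a consistency check that must close; it does so precisely because $W$ has the rigid form \eqref{W} forced by tri-hamiltonicity, which ties the four-dimensional rotation coefficients to the three-dimensional $a,b,c$. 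The components along $\hat\phi_{(1)},\hat\phi_{(2)}$ then assemble into the sought $\epsilon$-flow for $\chi$. I expect this matching of transverse components, together with the bookkeeping of the gauge factor $(v_2-v_1)^\mu$ and of the chain rule from $(v_1,\dots,v_4)$ to $(\epsilon,s)$, to be the principal obstacle: everything is finite and explicit, but it must be organized so that the three poles fall exactly at $\epsilon=0,1,s$.

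Finally I would identify the reduced system with \eqref{2x2new} through the twisted periods of the three-dimensional manifold $M$. Specializing Dubrovin's residues \eqref{Ri} to $M$ at the twisting value $\nu=\mu+\tfrac12$, with $\hat\mu=\mathrm{diag}(\mu,0,-\mu)$ as in \eqref{t2decoupling}, gives $\tfrac12-\nu+\hat\mu=\mathrm{diag}(0,-\mu,-2\mu)$ and hence $R_i=-\Phi^{-1}E_i\Phi\,\mathrm{diag}(0,-\mu,-2\mu)$. Since the first diagonal entry vanishes, the first column of every $R_i$ is zero, so the Fuchsian system \eqref{FuchSys} for the gradients of these twisted periods splits off a closed two-component subsystem in the second and third entries; evaluating its residues with $\Phi^{-1}=\eta\Phi^t$ reproduces exactly the matrices $B_i$ of \eqref{2x2new}, the shift parameter of \eqref{FuchSys} playing the role of $\epsilon$ and the poles sitting at the normalized canonical coordinates $0,1,s$. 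This pins down the reduced $\epsilon$-flow as \eqref{2x2new} and yields the asserted bijection: a homogeneous solution of \eqref{Wlinsys} produces a solution $\chi$ of \eqref{2x2new}, while conversely $\chi$ together with the eigenvectors \eqref{hatphi} reconstructs $\psi$ through \eqref{psi1}, the compatible $s$-flow being automatic because the Schlesinger equations of \eqref{FuchSys} coincide with Darboux-Egoroff and therefore hold for $\Phi$ by construction.
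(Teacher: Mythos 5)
Your setup is sound, and you correctly isolate the two structural facts: homogeneous solutions live in the eigenbundles of $W$, hence have the form \eqref{psi1}--\eqref{psi2}, and the special-conformal generator $\partial_{\tilde e}$ is the natural source of the $\epsilon$-flow. But there is a genuine gap at the center of the argument: you never derive the reduced $2\times2$ system, and the argument you substitute for that derivation does not close the loop. The paper's proof at this point is a concrete computation organized so that the difficulties you flag as the ``principal obstacle'' never arise: because $\partial_{\tilde e}s=0$, applying $\partial_{\tilde e}$ to \eqref{psi1} produces \emph{no} derivatives of the eigenvectors at all (your plan to control $\partial_s\hat\phi_{(\alpha)}$ via $V_3$ and to match transverse components addresses a problem that the right choice of vector field eliminates); and because $W^2=\mu^2I$, the matrix $\tilde W=UW+WU$ in $\partial_{\tilde e}\psi=\tilde W\psi$ (equation \eqref{tildeW}) commutes with $W$, hence is block diagonal in the basis \eqref{hatphi}, so the flow closes on $(\alpha_1,\alpha_2)$. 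Writing out $\hat\Phi^{-1}\tilde W\hat\Phi$ and using $\partial_{\tilde e}(v_2-v_1)=(v_2-v_1)(v_1+v_2)$, $\partial_{\tilde e}\epsilon=-(v_2-v_1)\epsilon(\epsilon-1)$ yields an explicit Fuchsian system with residues $C_1,C_2,C_3$ which is \emph{not} \eqref{2x2new}: one still needs the gauge transformation $\alpha\mapsto(\epsilon-s)^\mu\,\mathrm{diag}(-1,1)\,\alpha$ (and, in the $-\mu$-eigenspace, $\beta\mapsto\epsilon^\mu(\epsilon-1)^\mu\,\mathrm{diag}(2,1)\,\beta$) to land on \eqref{2x2new}. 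None of this computation, nor the gauge transformations it requires, appears in your proposal.

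The substitute you offer is circular. Your final paragraph correctly shows that \eqref{2x2new} coincides with the $\nu=\mu+1/2$ twisted-period system of the three-dimensional manifold --- but that is the paper's Proposition 3, proved \emph{after} this theorem, and it identifies the \emph{target} system with something else; it says nothing about which system the coefficients $(\alpha_1,\alpha_2)$ of a homogeneous solution actually satisfy, which is what the theorem asserts. Saying that both are Fuchsian with poles at $0,1,s$ pins down nothing. Two further problems: your claim that the $-\mu$-eigenspace follows ``after exchanging $\mu\leftrightarrow-\mu$'' is not correct --- the content of the theorem is that \emph{both} eigenspaces reduce, after \emph{different} gauge transformations, to the \emph{same} system \eqref{2x2new} with the same $\mu$ (this is exactly why the same block $\chi_{ij}$ appears twice in \eqref{4dtrans}); a naive $\mu\to-\mu$ symmetry would give two different systems and a weaker statement. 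Finally, the bijection needs only the invertibility of these gauge maps away from $\epsilon=0,1,s$; the $s$-flow and the isomonodromy of \eqref{FuchSys} that you invoke for the converse are not part of this theorem.
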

\begin{proof}
We begin by observing that the linear system implies
\begin{equation}\label{tildeW}
\partial_{\tilde{e}}\psi=\tilde{W}\psi, \qquad \tilde{W}=UW+WU
\end{equation}
Since $W^2=\mu^2I$, $\tilde{W}$ commutes with $W$ and therefore is block diagonal in the basis $\hat{\phi}_{(\alpha)}$ \eqref{hatphi}. Let us compute it explicitly. If $\hat{\Phi}=(\hat{\phi}_{i\alpha})$, it follows from the normalization
\begin{equation}\label{norm}
\sum_{i=1}^3\phi_{i\alpha}\phi_{i\beta}=\eta_{\alpha\beta}=\delta_{\alpha+\beta, 4}
\end{equation}
that
$$
\hat{\Phi}^{-1}=\left(\begin{array}{cccc}
\phi_{13}		&\phi_{23}	&\phi_{33}	&0\\
\phi_{12}/2	&\phi_{22}/2	&\phi_{32}/2	&-i/2\\
\phi_{12}/2	&\phi_{22}/2	&\phi_{32}/2	&i/2\\
\phi_{11}		&\phi_{21}	&\phi_{31}	&0\\
\end{array}\right)
$$
Then a simple computation yields
$$
\hat{\Phi}^{-1}\tilde{W}\hat{\Phi}=\mu\left(\begin{array}{cccc}
2G_{13}	&2G_{23}		&0			&0\\
G_{12}	&G_{22}+v_4	&0			&0\\
0		&0			&-G_{22}-v_4	&-G_{23}\\
0		&0			&-2G_{12}	&-2G_{13}
\end{array}\right)
$$
where
$$
G_{\alpha\beta}\doteq\sum_{i=1}^3v_i\phi_{i\alpha}(s)\phi_{i\beta}(s)
$$
Let us restrict to the first eigenspace. Here and in the following we denote for brevity $v_{ij}\equiv v_i-v_j$. Substituting \eqref{psi1} into \eqref{tildeW} we get
\begin{align*}
\partial_{\tilde{e}}\psi		&=\left(\mu v_{21}^{\mu-1}\partial_{\tilde{e}}v_{21}+v_{21}\partial_{\tilde{e}}\epsilon\frac{\partial}{\partial\epsilon}\right)\left(\begin{array}{c}\alpha_1\\ \alpha_2\end{array}\right)=\\
					&=v_{21}^\mu\mu\left(\begin{array}{cc}
2G_{13}	&2G_{23}\\
G_{12}	&G_{22}+v_4
\end{array}\right)\left(\begin{array}{c}\alpha_1\\ \alpha_2\end{array}\right)
\end{align*}
since $\partial_{\tilde{e}}s=0$. Using $\partial_{\tilde{e}}v_{21}=v_{21}(v_1+v_2)$ and $\partial_{\tilde{e}}\epsilon=-v_{21}\epsilon(\epsilon-1)$ (see \eqref{epsilon}), we arrive at the formula
\begin{equation}\label{C}
\frac{\partial}{\partial\epsilon}\alpha=C(\epsilon, s)\alpha\qquad \alpha\equiv\left(\begin{array}{c}\alpha_1\\ \alpha_2\end{array}\right)
\end{equation}
$$
C(\epsilon, s)=\frac{\mu}{\epsilon(\epsilon-1)v_{21}}\left(\begin{array}{cc}
v_1+v_2-2G_{13}	&-2G_{23}\\
-G_{12}			&v_1+v_2-G_{22}-v_4
\end{array}\right)
$$
Equation \eqref{C} takes a nice form once we explicit $C$ as a function of $s$ and $\epsilon$. For example, for the top left entry
\begin{align*}
C_{11}	&=\frac{\mu}{\epsilon(\epsilon-1)v_{21}}\Big(v_{21}\phi_{11}\phi_{13}+v_{12}\phi_{21}\phi_{23}+(v_{13}+v_{23})\phi_{31}\phi_{33}\Big)=\\
		&=\frac{\mu}{\epsilon(\epsilon-1)}\left[\phi_{11}\phi_{13}-\phi_{21}\phi_{23}+\left(-s\frac{\epsilon-1}{\epsilon-s}+\epsilon\frac{1-s}{\epsilon-s}\right)\phi_{31}\phi_{33}\right]=\\
		&=\mu\left[\frac{-\phi_{11}\phi_{13}+\phi_{21}\phi_{23}+\phi_{31}\phi_{33}}{\epsilon}+\frac{\phi_{11}\phi_{13}-\phi_{21}\phi_{23}+\phi_{31}\phi_{33}}{\epsilon-1}+\frac{-2\phi_{31}\phi_{33}}{\epsilon-s}\right]=\\
		&=\mu\left[\frac{1}{\epsilon}\phi_{12}^2+\frac{1}{\epsilon-1}\phi_{22}^2+\frac{1}{\epsilon-s}(\phi_{32}^2-1)\right]
\end{align*}
where we used the normalization \eqref{norm} in the last line, as well as the inverse relation $\eta^{\alpha\beta}\phi_{i\alpha}\phi_{j\beta}=\delta_{ij}$ (in particular, $2\phi_{i1}\phi_{i3}+\phi_{i2}^2=1$).
Proceeding in this way, we obtain
$$
C(s, \epsilon)=\frac{C_1(s)}{\epsilon}+\frac{C_2(s)}{\epsilon-1}+\frac{C_3(s)}{\epsilon-s}
$$
$$
C_1=\mu\left(\begin{array}{cc}
\phi_{12}^2		&-2\phi_{12}\phi_{13}\\
-\phi_{11}\phi_{12}	&2\phi_{11}\phi_{13}
\end{array}\right)
$$
$$
C_2=\mu\left(\begin{array}{cc}
\phi_{22}^2		&-2\phi_{22}\phi_{23}\\
-\phi_{21}\phi_{22}	&2\phi_{21}\phi_{23}
\end{array}\right)
$$
$$
C_3=\mu\left(\begin{array}{cc}
\phi_{32}^2-1		&-2\phi_{32}\phi_{33}\\
-\phi_{31}\phi_{32}	&2\phi_{31}\phi_{33}-1
\end{array}\right)
$$
Thus we see that the system \eqref{C} is the same as \eqref{2x2new}, up to the sign of the off-diagonal entries of the residue matrices and a shift by $-\mu I$ in $C_3$: the gauge transformation
$$
\alpha\to\chi=(\epsilon-s)^\mu\left(\begin{array}{cc}
-1	&0\\
0	&1
\end{array}\right)\alpha
$$
transforms one system into the other.\\
Repeating the same procedure in the second eigenspace (i.e. starting from \eqref{psi2}) we obtain
$$
\frac{\partial}{\partial\epsilon}\beta=\left(\frac{D_1}{\epsilon}+\frac{D_2}{\epsilon-1}+\frac{D_3}{\epsilon-s}\right)\beta\qquad \beta\equiv\left(\begin{array}{c}\beta_1\\ \beta_2\end{array}\right)
$$
$$
D_1=\mu\left(\begin{array}{cc}
\phi_{12}^2-1		&\phi_{12}\phi_{13}\\
2\phi_{11}\phi_{12}	&2\phi_{11}\phi_{13}-1
\end{array}\right)
$$
$$
D_2=\mu\left(\begin{array}{cc}
\phi_{22}^2-1		&\phi_{22}\phi_{23}\\
2\phi_{21}\phi_{22}	&2\phi_{21}\phi_{23}-1
\end{array}\right)
$$
$$
D_3=\mu\left(\begin{array}{cc}
\phi_{32}^2		&\phi_{32}\phi_{33}\\
2\phi_{31}\phi_{32}	&2\phi_{31}\phi_{33}
\end{array}\right)
$$
and the gauge transformation
$$
\beta\to\chi=\epsilon^\mu(\epsilon-1)^\mu\left(\begin{array}{cc}
2	&0\\
0	&1
\end{array}\right)\beta
$$
again leads to \eqref{2x2new}. This completes the proof.
\end{proof}
It can be checked directly that the system \eqref{2x2new} is isomonodromic in $s$ (this is anyway a direct consequence of the next proposition), i.e.
$$
\frac{\partial}{\partial s}\chi=-\frac{B_3}{\epsilon-s}\chi
$$
Consequently, it is described by some Painlev\'e VI transcendent; more precisely, since the matrices $B_i$ have eigenvalues $\mu, 0$ and the residue at infinity is
$$
B_4=-(B_1+B_2+B_3)=\mu\left(\begin{array}{cc}
-1	&0\\
0	&-2
\end{array}\right),
$$
the unique zero $y(s)$ of the top right entry of
$$
\epsilon(\epsilon-1)(\epsilon-s)\left(\frac{B_1}{\epsilon}+\frac{B_2}{\epsilon-1}+\frac{B_3}{\epsilon-s}\right)
$$
solves the Pailev\'e VI equation
\begin{align}\label{PVImu'}
\frac{d^2y}{ds^2}=	&\frac{1}{2}\left(\frac{1}{y}+\frac{1}{y-1}+\frac{1}{y-s}\right)\left(\frac{dy}{ds}\right)^2-\left(\frac{1}{s}+\frac{1}{s-1}+\frac{1}{y-s}\right)\frac{dy}{ds}+\\
				&+\frac{1}{2}\frac{y(y-1)(y-s)}{s^2(s-1)^2}\left((\mu-1)^2-\frac{\mu^2s}{y^2}+\frac{\mu^2(s-1)}{(y-1)^2}+\frac{(1-\mu^2)s(s-1)}{y-s}^2\right)\nonumber
\end{align}
which is obtained from PVI$\mu$ by a certain Okamoto transformation (\cite{Okam}). This suggests that the systems \eqref{2x22} and \eqref{2x2new} are closely related; the next result clarifies this relation, and also provides an explicit description of the solutions of \eqref{2x2new}.
\begin{Prop} The system \eqref{2x2new} is equivalent to \eqref{FuchSys} for $n=3, \nu=\mu+1/2$. More precisely, let $p(t, \mu+1/2)$ be a twisted period of the 3-dimensional Frobenius manifold with twisting parameter $\nu=\mu+1/2$, and
$$
\chi=\left(\begin{array}{c}\chi_1\\ \chi_2\end{array}\right)\doteq\left(\begin{array}{c}\partial p/\partial t^2\\ \partial p/\partial t^1\end{array}\right)
$$
Introduce dependence on a shift parameter $\epsilon$ in $\chi$ by setting
$$
\chi(t;\epsilon)\doteq\chi(t^1-\epsilon, t^2, t^3)
$$
Then, after the rescaling
$$
\chi\to\left(\begin{array}{cc}
(u_2-u_1)^{-\mu}	&0\\
0				&1
\end{array}\right)\chi
$$
and the change of variable
$$
\epsilon\to(u_2-u_1)\epsilon+u_1
$$
we obtain a solution of \eqref{2x2new}.
\end{Prop}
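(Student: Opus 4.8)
The plan is to obtain \eqref{2x2new} as the non-trivial block of the Fuchsian system \eqref{FuchSys} of Dubrovin's twisted-period theorem, specialized to $n=3$ and the twisting value $\nu=\mu+\tfrac12$, and then to normalize it by the same gauge-and-rescaling procedure that carried \eqref{2x21} into the standard form \eqref{2x22}. The whole argument thus reduces to re-running that earlier computation with one value of a parameter changed.

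First I would substitute $\nu=\mu+\tfrac12$ into the residue formula \eqref{Ri}. For $n=3$ the grading operator is $\hat\mu=\mathrm{diag}(\mu,0,-\mu)$, so the scalar shift collapses to
$$
\tfrac12-\nu+\hat\mu=-\mu I+\hat\mu=\mathrm{diag}(0,-\mu,-2\mu).
$$
Using $\Psi^{-1}=\eta^{-1}\Psi^{t}$ (which follows from $\Psi^{t}\Psi=\eta$ in \eqref{reconstruction}) together with the antidiagonal normalization, the residues $R_i$ come out in exactly the shape of \eqref{t2decoupling}, except that the vanishing column is now the \emph{first} one rather than the middle one. Hence the gradient component $\partial p/\partial t^3$ decouples and is recovered by quadrature, while the pair $(\partial p/\partial t^2,\partial p/\partial t^1)=(\chi_1,\chi_2)$ satisfies the closed $2\times2$ system obtained by deleting the first row and column of each $R_i$, whose residues are $\mu\left(\begin{smallmatrix}\psi_{i2}^2 & 2\psi_{i2}\psi_{i3}\\ \psi_{i1}\psi_{i2} & 2\psi_{i1}\psi_{i3}\end{smallmatrix}\right)$.

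Next I would apply the normalization used after \eqref{2x21}: the change of variable $\epsilon\to(u_2-u_1)\epsilon+u_1$ moves the poles $u_1,u_2,u_3$ to $0,1,s$ and leaves the residue matrices unchanged, while the rescaling $\chi\to M\chi$ with $M=\mathrm{diag}\big((u_2-u_1)^{-\mu},1\big)$ replaces each residue $A$ by $M^{-1}AM$. Re-expressing $\psi_{i\alpha}$ through the cross-ratio quantities $\phi_{i\alpha}$ of \eqref{phi}, namely $\psi_{i1}=\phi_{i1}(u_2-u_1)^{\mu}$, $\psi_{i2}=\phi_{i2}$, $\psi_{i3}=\phi_{i3}(u_2-u_1)^{-\mu}$, the powers $(u_2-u_1)^{\pm\mu}$ produced by this substitution cancel exactly against the conjugation by $M$, and the residues land precisely on the matrices $B_i$ of \eqref{2x2new}. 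This is a direct calculation of the same type already performed for $\nu=\tfrac12$, and isomonodromy in $s$ then follows from the general theorem, yielding the asserted equivalence.

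The one point demanding care is the bookkeeping of exponents: one must check that $\nu=\mu+\tfrac12$ is exactly the value placing the zero in the first diagonal slot of $\tfrac12-\nu+\hat\mu$, so that it is the $\partial p/\partial t^3$ component (and no other) that becomes a quadrature, and that the two surviving weights $\pm\mu$ are matched by the \emph{asymmetric} rescaling $M=\mathrm{diag}\big((u_2-u_1)^{-\mu},1\big)$ --- not the symmetric one used in the $\nu=\tfrac12$ reduction --- so that all powers of $(u_2-u_1)$ disappear from the final residues. Everything else is routine.
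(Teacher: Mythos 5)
Your proof is correct and follows essentially the same route as the paper: specialize the residue formula \eqref{Ri} at $n=3$, $\nu=\mu+1/2$ so that $\tfrac12-\nu+\hat\mu=\mathrm{diag}(0,-\mu,-2\mu)$ kills the first column of each $R_i$, leaving a closed $2\times2$ system for $(\partial p/\partial t^2,\partial p/\partial t^1)$ with $\chi^1=\partial p/\partial t^3$ recovered by quadrature, and then rescale to pass from the $\psi_{i\alpha}$ to the $\phi_{i\alpha}$ and land on the matrices $B_i$. In fact you spell out the gauge bookkeeping (the asymmetric conjugation cancelling the weights $\pm\mu$ of $\psi_{i1},\psi_{i3}$) that the paper compresses into the phrase ``after the opportune rescaling,'' and your accounting checks out.
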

\begin{proof} For $n=3, \nu=\mu+1/2$ the matrices $R_i$ \eqref{Ri} become
$$
R_i=\mu\left(\begin{array}{ccc}
0	&\psi_{i2}\psi_{i3}	&2\psi_{i3}^2\\
0	&\psi_{i2}^2		&2\psi_{i2}\psi_{i3}\\
0	&\psi_{i1}\psi_{i2}	&2\psi_{i1}\psi_{i3}
\end{array}\right)
$$
Thus the first component decouples from the system and we are left with a 2-dimensional system for the second and third component, which (after the opportune rescaling) coincides with \eqref{2x2new}. The proposition is proved.
\end{proof}
\begin{Rem} The $\nu$-twisted periods of a semisimple Frobenius manifold $M$ can be characterized as the functions on $M$ that have diagonal covariant Hessian in the canonical coordinates, and are homogeneous of degree $\nu+1/2-d/2$; in particular, the flat coordinates $t^\alpha$ are twisted periods with twisting parameter $1/2-\mu_\alpha$. For example, for $n=3$ the degrees of the flat coordinates are 
\begin{equation}\label{degrees}
\text{deg}\,t^1=1\qquad\text{deg}\,t^2=1+\mu\qquad\text{deg}\,t^3=1+2\mu
\end{equation}
where $\mu$ is related to the charge by $\mu=-d/2$. For $\nu=1/2$ the degree of the twisted periods is $1+\mu$, and the second component of \eqref{FuchSys} decouples (see \eqref{t2decoupling}), leading to \eqref{2x21}. For $\nu=\pm\mu+1/2$ the degree coincides with that of $t^3, t^1$ respectively, again producing 2x2 reduced systems, the first of which appears in our construction. The corresponding Painlev\'e transcendents are all related by Okamoto transformations.
\end{Rem}
At this stage combining Theorem 6 and Proposition 3 completes the proof of Theorem 2. The expression of the middle matrix in \eqref{4dtrans} reflects the specific form of the gauge transformations appearing in the proof of Theorem 6.\\
We conclude this section with some remarks on the relation between the normalization of $(\chi_{(1)}, \chi_{(2)})$ and the 4-dimensional metric. Let
$$w\doteq\text{det}(\chi_{(1)}, \chi_{(2)})
$$
be the Wronskian of the solution. From the identities
$$
\frac{\partial}{\partial\epsilon}\log w=\text{Tr}\left(\frac{B_1}{\epsilon}+\frac{B_2}{\epsilon-1}+\frac{B_3}{\epsilon-s}\right)\qquad\frac{\partial}{\partial s}\log w=\text{Tr}\left(-\frac{B_3}{\epsilon-s}\right)
$$
we obtain
\begin{equation}\label{wronsk}
w=\text{const}\,\epsilon^\mu(\epsilon-1)^\mu(\epsilon-s)^\mu
\end{equation}
Then from \eqref{4dtrans} and the normalizations \eqref{norm}  one easily finds
$$
\eta=\hat{\Psi}^t\hat{\Psi}=\text{const}\left(\begin{array}{cccc}
0	&0	&0	&-1\\
0	&0	&1	&0\\
0	&1	&0	&0\\
-1	&0	&0	&0
\end{array}\right)
$$
with the same constant of \eqref{wronsk}.
\section{Examples: Hurwitz spaces}
Hurwitz spaces represent one of the richest class of examples of semi-simple Frobenius manifolds. While the arising of Frobenius structure on Hurwitz spaces is primarily related to their role in the theory of integrable system, as the moduli spaces of $g$-phase solutions of certain integrable hierarchies (\cite{Krich}, \cite{Dub-Nov}), it turns out to have deep connections with a number of related topics (e.g. topological Landau-Ginzburg models, singularity theory, orbit spaces of Coxeter groups and their extensions).\\
Hurwitz spaces are moduli spaces of ramified coverings of the projective line of fixed genus, degree and ramification profile over the point $\infty\in\mathbb{P}^1$. They are especially well-suited to our present context for two reasons: first, tri-hamiltonian structures appear naturally within this class of examples, and there is a simple criterion to recognize them. Second,  the task of computing explicitly the twisted periods of a Frobenius manifold appears very hard in its full generality, but for Hurwitz spaces we have an  somewhat manageable expression for them in terms of certain period integrals on the spectral curve, which arise in the framework of Givental's twisted Picard-Lefschetz theory (\cite{GivTwisted}) (in fact, this is the reason for the name ``twisted periods" itself).\\
\indent We briefly review the main definitions (for a complete discussion of Frobenius structure on Hurwitz spaces, see \cite{Dub2Dtft}, \cite{DubMod}). Let $\rho=(r_1, \dots, r_L)$ be a set of positive integers and $D$ be their sum. A \emph{Hurwitz cover} (or \emph{spectral curve}) \emph{of type} $g, \rho$ is a pair $(C_g, \lambda)$, where
\begin{itemize}
\item $C_g$ is a smooth genus $g$ curve with $L$ marked points $x_1, \dots, x_L$ and a marked canonical homology basis $\{a_k, b_k\}_{k=1, \dots, g}$.
\item $\lambda$ is a degree $D$ meromorphic function on $C_g$ whose branch points are all simple except for poles, which occur precisely at the points $x_i$ with multiplicity $r_i$, $i=1, \dots, L$. We refer to the map $\lambda$ as the \emph{superpotential}.
\end{itemize}
The \emph{Hurwitz space} $\mathcal{M}_{g, \rho}$ is the moduli space of Hurwitz covers of type $g, \rho$, where the equivalence relation is given by isomorphisms of curves compatible with $\lambda$. Its dimension is $n=2g-2+d+L$ and the simple critical values $u_i=\lambda(P_i)\;, d\lambda(P_i)=0,\; i=1, \dots, n$ provide local coordinates on it. These will be the canonical coordinates in the Frobenius structure. We further assume that none of them is zero, according to our general rule of working outside of the discriminant of the Frobenius manifold. Here this means that we exclude from the Hurwitz space the divisor where the zeros of the superpotential are not all simple.\\
To construct the metric we need to select a certain moduli-dependent 1-form on the spectral curve, called the \emph{quasi-momentum differential} and denoted by $\phi$. We will not reproduce here the precise definition and the freedom in the choice of such differential (see \cite{Dub2Dtft}, \cite{DubMod} for a complete list), limiting ourselves to specifying our $\phi$ case by case. We just mention that typically $\phi$ will be either an holomorphic differential or an Abelian differential with poles at the marked points, and it will always be normalized to have constant (i.e. moduli-independent) $a$-periods.\\
Given $\phi$, we define
\begin{equation}\label{reseta}
\eta=\sum_{i=1}^n\eta_idu_i^2\qquad\eta_i\doteq\text{Res}_{P_i}\frac{\phi^2}{d\lambda}
\end{equation}
(recall that $P_i$ denotes the branch point corresponding to the critical value $u_i$). Then the metric is flat and, together with the standard formulas \eqref{prod}, \eqref{eE}, defines a semisimple Frobenius structure on $\mathcal{M}_{g, \rho}$.\\
We now provide precise statements for the facts mentioned at the beginning of the section:
\begin{Thm} A Hurwitz space $\mathcal{M}_{g, \rho}$ of dimension at least 4 has tri-hamiltonian structure if and only if $\rho=(1, \dots, 1)$, i.e. all poles of the superpotential are simple.
\end{Thm}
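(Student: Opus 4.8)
The plan is to translate tri-hamiltonicity into a condition on the spectrum and then read the spectrum off Dubrovin's flat coordinates. By the discussion of Section 1, under semisimplicity the Frobenius manifold is tri-hamiltonian exactly when $\hat{\mu}^2=\mu^2 I$, i.e. when the eigenvalues $\mu_1,\dots,\mu_n$ of the grading operator take only the two values $\pm\mu$. Equivalently, writing $\mu_\alpha=\tfrac{2-d}{2}-d_\alpha$, only two distinct quasi-homogeneity degrees $d_\alpha$ may occur. So the whole statement reduces to computing these degrees for the Hurwitz structure on $(\mathcal{M}_{g,\rho},\phi)$ and deciding when there are just two of them.

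First I would recall the flat coordinates in Dubrovin's normal form (\cite{Dub2Dtft}, \cite{DubMod}) and sort them by how many powers of $\lambda$ they carry under the scaling $\lambda\mapsto\kappa\lambda$ that generates $\partial_E$. They fall into two families. The \emph{global} coordinates --- the $L-1$ action/residue pairs attached to the poles, together with the $g$ pairs of period coordinates built respectively from $\phi$ and from $\lambda\phi$ --- carry zero or one power of $\lambda$; after the normalization $d_1=1$ they produce the degrees $1-d$ and $1$ in equal number, hence the extreme eigenvalues $\pm d/2$. The \emph{local} coordinates $t^{j,\alpha}\sim\mathrm{res}_{x_j}\lambda^{\alpha/r_j}\phi$ record the principal part of $\phi$ at the pole $x_j$ and exist only for $\alpha=1,\dots,r_j-1$, i.e. only when $r_j\ge 2$. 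A direct scaling count, using $\tfrac{1}{1+a}=d$ for the weight $a$ of $\phi$, gives
\[
d_{j,\alpha}=1-d\,\frac{r_j-\alpha}{r_j},\qquad \mu_{j,\alpha}=\frac{d\,(r_j-2\alpha)}{2r_j}.
\]

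The conclusion is now arithmetic. Since $0<\alpha/r_j<1$, every local eigenvalue $\mu_{j,\alpha}$ lies \emph{strictly} inside the interval $(-d/2,\,d/2)$ and can therefore never coincide with $\pm\mu=\pm d/2$. Hence the spectrum collapses to $\{\pm d/2\}$ if and only if no local coordinate is present, that is, if and only if $r_j=1$ for every $j$. In that case $D=L$, the dimension equals the even number $2(g+L-1)$, the two degrees $1$ and $1-d$ occur with equal multiplicity $g+L-1=n/2$, and the structure is tri-hamiltonian with $\mu=d/2\neq0$; conversely any pole of order $\ge 2$ injects at least one interior eigenvalue and destroys the degeneracy. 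I would finish by checking the configurations singled out by the hypothesis $n\ge 4$: a single pole ($L=1$) that is simple forces $g=0$ and $n=0$, while $r_1\ge 2$ reproduces the non-degenerate $A_{n-1}$ manifold; and one verifies directly that no even-dimensional Hurwitz space of dimension $\ge 4$ escapes this dichotomy.

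The one step requiring genuine care is the degree bookkeeping of the second paragraph --- extracting the degrees from \cite{Dub2Dtft}, \cite{DubMod} and confirming that the outcome is insensitive to the admissible choice of primary differential $\phi$. Different choices alter the numerical value of $d$ through the weight of $\phi$, but not the qualitative fact that a pole of order $\ge 2$ always contributes degrees interior to $[1-d,\,1]$; once this is settled the equivalence is immediate.
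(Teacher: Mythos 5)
Your route is the same as the paper's: the published proof is precisely the two-line sketch ``read off the homogeneity degrees of the flat coordinates from Dubrovin's list and check that they split into two blocks of equal degree exactly when $\rho=(1,\dots,1)$, for any admissible choice of $\phi$'', and your decomposition into global coordinates (the $L-1$ pole pairs and the $g$ period pairs, carrying the two extreme eigenvalues in equal number) and local coordinates (present only at poles of order at least $2$, carrying strictly intermediate eigenvalues), together with a separate treatment of the case $L=1$, $g=0$ where no global coordinates exist, is the correct skeleton for carrying that sketch out.

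The problem is that the degree bookkeeping itself --- the step you rightly identify as the crux --- is wrong, and the intermediate claims your argument invokes are false. Test your formulas on the paper's own $A_3$ example ($g=0$, $L=1$, $r_1=4$, $\phi=dz$): there $\deg t^1=1$, $\deg t^2=3/4$, $\deg t^3=1/2$, $d=1/2$, and the spectrum is $\{\pm 1/4,\,0\}$, whereas your formula $\mu_{j,\alpha}=d(r_j-2\alpha)/(2r_j)$ gives $\{\pm 1/8,\,0\}$. More importantly, your central claim that a local eigenvalue ``can never coincide with $\pm d/2$'' fails here: the extreme eigenvalues $\pm 1/4=\pm d/2$ are carried precisely by local coordinates, and the degrees $1/2,\,3/4,\,1$ fill $[1-d,\,1]$ including both endpoints, contradicting your closing claim that a pole of order $\geq 2$ always contributes degrees interior to $[1-d,\,1]$ for every admissible $\phi$. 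Your identification of the global eigenvalues with $\pm d/2$ also breaks for second-kind primary differentials: on $\mathcal{M}_{1,(2)}$ the choice $\phi\sim d(\lambda^{1/2})$ has charge $d=0$, and your formulas then collapse the entire spectrum to $\{0\}$, while the true spectrum is $\{0,\pm 1/2\}$. The correct bookkeeping (normalize $\deg\lambda=1$ and let $w$ be the scaling weight of $\phi$, so $d=1-2w$) is: global coordinates have degrees $w$ and $1+w$, hence $\hat{\mu}$-eigenvalues $\pm 1/2$; local coordinates at a pole of order $r_j$ have degrees $1+w-\alpha/r_j$, hence eigenvalues $\alpha/r_j-1/2$, lying strictly in $(-1/2,1/2)$, for $\alpha=1,\dots,r_j-1$. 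These eigenvalue sets are independent of $w$, which is the actual reason the statement is insensitive to the choice of $\phi$. With these substitutions your dichotomy does close: all $r_j=1$ gives spectrum $\{\pm 1/2\}$, hence tri-hamiltonian structure; some $r_j\geq 2$ together with at least one pole pair or handle gives at least three distinct eigenvalues; and the remaining case $g=0$, $L=1$ is the $A_n$ manifold, whose spectrum consists of $n\geq 4$ distinct values.
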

\begin{proof} It suffices to look at the homogeneity degrees of the flat coordinates of $\eta$ (which are given explicitely in \cite{Dub2Dtft}) and check that they split into two blocks of the same degree exactly when $\rho=(1, \dots, 1)$. The statement holds for any admissible choice of quasi-momentum differential.
\end{proof}
For the next result we need to fix some notation (see Givental's paper \cite{GivTwisted} for details). Let $(C_g, \lambda)$ be a point in the moduli space, and $y_1, \dots, y_d$ be the zeros of $\lambda$. For every value of $\nu$, consider the (typically $\infty$-sheeted) covering of $\hat{C}_g\equiv C_g\setminus\{x_i, y_j\}$ where the function $\lambda^\nu$ is defined. If $q=e^{2\pi i\nu}$, it is described by the local system $L(q)$ on $\hat{C}_g$ whose fiber is $\mathbb{Z}[q, q^{-1}]$ and whose monodromy is multiplication by $q$ around $y_j$ and by $q^{-m_i}$ around $x_i$. 
\begin{Thm} Let $H_1(\hat{C}_g, \{y_j\}; L(q))$ be the first  homology group of $\hat{C}_g$ with coefficients in the local system $L(q)$, relative to (a tubular neighborhood of) the points $\{y_j\}$. Then, for all $\gamma\in H_1(\hat{C}_g, \{y_j\}; L(q))$, the function
$$
p_\gamma(u;\nu)=\int_\gamma\lambda^\nu\phi
$$
is a twisted period for the Frobenius structure on $\mathcal{M}_{g, \rho}$ with quasi-momentum differential $\phi$.
\end{Thm}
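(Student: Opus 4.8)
The plan is to verify that $p_\gamma$ satisfies the two properties that characterize a $\nu$-twisted period of a semisimple Frobenius manifold: homogeneity of degree $\nu+\tfrac12-\tfrac d2$ with respect to the Euler field, and vanishing of the off-diagonal entries of its covariant Hessian in the canonical coordinates, the Hessian being taken with respect to the Levi-Civita connection $\tilde\nabla$ of the intersection form $g$. That these two properties suffice is seen by unwinding $\tilde\nabla^\nu dp=0$: since $(\tilde\nabla^\nu_X dp)(Y)=(\tilde\nabla_X dp)(Y)-\nu\,dp(X\star Y)$, and using $\partial/\partial u_i\star\partial/\partial u_j=\delta_{ij}u_i^{-1}\partial/\partial u_i$, the flatness condition reads
\begin{equation*}
\bigl(\tilde\nabla_{\partial_i}dp\bigr)(\partial_j)=\nu\,dp(\partial_i\star\partial_j)=\delta_{ij}\frac{\nu}{u_i}\,\partial_i p,
\end{equation*}
so that the off-diagonal covariant Hessian must vanish while the diagonal part is pinned down once homogeneity is known. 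Thus it is enough to check these two properties for $p_\gamma=\int_\gamma\lambda^\nu\phi$.

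Homogeneity is a degree count. On $\mathcal M_{g,\rho}$ the superpotential is quasi-homogeneous of degree one, $\partial_E\lambda=\lambda$, consistently with the canonical coordinates $u_i=\lambda(P_i)$ having degree one in \eqref{eE}. The degree $\delta$ of the quasi-momentum differential $\phi$ is then fixed by \eqref{reseta}: since $\mathcal L_{\partial_E}\eta=(2-d)\eta$ forces $\eta_i$ to have degree $-d$, and $\eta_i=\text{Res}_{P_i}\phi^2/d\lambda$ is of degree $2\delta-1$, one gets $\delta=\tfrac12(1-d)$. The relative cycle $\gamma$ carries no scaling weight (it is merely transported along the Euler flow, with the local system $L(q)$ unchanged because $q=e^{2\pi i\nu}$ is fixed), so $p_\gamma$ is homogeneous of degree $\nu+\tfrac12(1-d)=\nu+\tfrac12-\tfrac d2$, exactly the degree required.

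The substance of the argument is the vanishing of the off-diagonal covariant Hessian. For the diagonal intersection form $g=\sum_i(\eta_i/u_i)\,du_i^2$ the only Christoffel symbols entering $\bigl(\tilde\nabla_{\partial_i}dp\bigr)(\partial_j)$ with $i\neq j$ are $\tfrac12\partial_j\log(\eta_i/u_i)$ and $\tfrac12\partial_i\log(\eta_j/u_j)$, so the identity to establish is
\begin{equation*}
\partial_i\partial_j\,p_\gamma=\tfrac12\bigl(\partial_j\log\tfrac{\eta_i}{u_i}\bigr)\partial_i p_\gamma+\tfrac12\bigl(\partial_i\log\tfrac{\eta_j}{u_j}\bigr)\partial_j p_\gamma,\qquad i\neq j.
\end{equation*}
I would prove this using the Rauch-type variational formulas governing the dependence of the spectral-curve data on the branch points, which show that $\partial_{u_i}$ of a normalized differential localizes as a residue at $P_i$ through the Bergman bidifferential $W$, schematically $\partial_{u_i}\phi=\tfrac12\,\text{Res}_{P_i}W(\cdot,R)\phi(R)/d\lambda(R)$. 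Here the choice of relative twisted homology is essential: because $\lambda^\nu$ has monodromy $q$ about the zeros $y_j$ and vanishes there, the multivalued integrand $\lambda^\nu\phi$ is well defined along a class in $H_1(\hat C_g,\{y_j\};L(q))$, and the boundary contributions produced by the exact pieces of $\partial_{u_i}(\lambda^\nu\phi)$ drop out, leaving only the residue at $P_i$. Differentiating this residue once more in $u_j$ with $j\neq i$ yields, by the same formula, a double residue at the distinct points $P_i,P_j$; the symmetry of $W$ together with the fact that the rotation coefficients $\gamma_{ij}$ of $\eta$ in \eqref{gamma} are precisely the off-diagonal values of the Bergman kernel (suitably normalized) should then identify this double residue with the two Christoffel terms above.

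The hard part will be exactly this final matching: carrying out the variational calculus for the \emph{twisted} integrand, tracking the $\lambda^\nu$ factor (which supplies the $\nu$-dependence of the period and, one must check, never perturbs the off-diagonal Hessian), and confirming that the double-residue expression for $\partial_i\partial_j p_\gamma$ reproduces the intersection-form Christoffel symbols with no spurious correction. Once the off-diagonal covariant Hessian is shown to vanish, the two verified properties are precisely the hypotheses of the characterization of twisted periods recalled above, and they yield $\tilde\nabla^\nu dp_\gamma=0$, so that $p_\gamma$ is a $\nu$-twisted period as claimed.
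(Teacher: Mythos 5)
Your overall strategy is sound, and its first half is essentially correct. Reducing the theorem to the characterization recalled in Remark 4 (homogeneity of degree $\nu+\tfrac12-\tfrac d2$ plus diagonal covariant Hessian in canonical coordinates) is the natural move, your unwinding of $\tilde{\nabla}^\nu dp=0$ is accurate, and the degree count giving $\deg\phi=\tfrac12(1-d)$ and hence $\deg p_\gamma=\nu+\tfrac12-\tfrac d2$ is correct. Two small points deserve attention there: the assertion that the diagonal entries of the Hessian are ``pinned down'' by homogeneity is true but not automatic --- when one writes $u_i\partial_i^2p=(h-1)\partial_ip-\sum_{j\neq i}u_j\partial_j\partial_ip$ and substitutes the off-diagonal conditions, the matching of the coefficients of $\partial_jp$, $j\neq i$, with the Christoffel symbols $\tilde{\Gamma}^j_{ii}$ requires the Egoroff symmetry $\partial_i\eta_j=\partial_j\eta_i$ of the metric \eqref{reseta}, which you never invoke; and the vanishing of the endpoint contributions at the zeros $y_j$ needs $\mathrm{Re}\,\nu>0$ or a regularization to be honest, since $\lambda^\nu\phi$ vanishes at the $y_j$ only in that regime.

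The genuine gap is that the entire analytic content of the theorem --- the off-diagonal Hessian identity $\partial_i\partial_jp_\gamma=\tfrac12(\partial_j\log\eta_i)\partial_ip_\gamma+\tfrac12(\partial_i\log\eta_j)\partial_jp_\gamma$ for $i\neq j$ --- is exactly the step you do not carry out. The Rauch-type variational formula is stated only ``schematically''; it is not verified for the class of quasi-momentum differentials actually admitted (these are in general meromorphic with poles at the marked points, normalized by constant $a$-periods, not merely holomorphic); the second variation with the twisted factor $\lambda^\nu$ (differentiation at fixed $\lambda$ versus fixed uniformizing coordinate, and the resulting exact terms) is not performed; and the ``final matching'' of the double residue against the rotation coefficients \eqref{gamma} is deferred in your own words as ``the hard part''. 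As written, this is a proof plan rather than a proof, with the plan stopping precisely where the theorem begins. For context, the paper itself does not prove this statement either: it explicitly omits the argument as requiring ``technical machinery of Frobenius structures on Hurwitz spaces'' and falls back on \cite{DubAlmost*} for the only case it uses, the simple singularity $\mathcal{M}_{0,(4)}$. So there is no paper proof to measure you against, but by the same token your proposal cannot be accepted as complete: to close it you would need to state the variational formulas precisely for the admissible $\phi$, justify differentiating under the twisted integral with moving endpoints, and execute the residue computation identifying $\partial_i\partial_jp_\gamma$ with the Christoffel terms.
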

The proof requires some technical machinery of Frobenius structures on Hurwitz spaces, and we will omit it here. Let us remark though that we will only use this theorem for the case of $\mathcal{M}_{0, (4)}$, which coincides (as a Frobenius manifold) with the universal unfolding of the simple singularity of type $A_3$. For Frobenius manifolds associated to simple singularities, the above result can be found in \cite{DubAlmost*}.
\begin{Rem}In practice a basis of twisted periods can be constructed as follows: near each critical point $P_i$ of $\lambda$ fix a vanishing 0-cycle, i.e. a pair of points in the fiber of $\lambda$  meeting at the critical point. Moving these two points along the fibers of $\lambda$ from the critical point to $\lambda^{-1}(0)$ defines a relative 1-cycle, which can then be lifted arbitrarily to the $\infty$-sheeted covering of $\hat{C}_g$ and regarded as a relative cycle with local coefficients. Doing this for each critical point one obtains a basis of $H_1(\hat{C}_g, \{y_j\}; L(q))$ (which is easily seen to have dimension $n$).
\end{Rem}
\indent Let us look specifically at the low-dimensional cases relevant to our context. According to Theorem 7, there are two 4-dimensional Hurwitz spaces with tri-hamiltonian structure: $\mathcal{M}_{0, (1, 1, 1)}$ and $\mathcal{M}_{1, (1, 1)}$. Consider in these two spaces the ``boundary divisor" obtained by letting one of the canonical coordinates tend to $\infty$, i.e. by letting two simple poles of the superpotential merge into a double pole; it is clear that the results are respectively the 3-dimensional Hurwitz spaces $\mathcal{M}_{0, (1, 2)}$ and $\mathcal{M}_{1, (2)}$. We intuitively expect these two pairs of Frobenius manifolds to be related by the correspondence of the previous section: in other words, our construction should correspond to the rather trivial operation of ``splitting a double pole into two simple poles'' in the case of Hurwitz spaces with all poles of the superpotential simple except one which is double. We prove this explicitly for the pair of genus one spaces in the first example below.\\
Now, in addition to $\mathcal{M}_{0, (1, 2)}$ and $\mathcal{M}_{1, (2)}$, there is a third 3-dimensional Hurwitz space, namely $\mathcal{M}_{0, (4)}$. In this case we know a priori that the associated 4-dimensional tri-hamiltonian manifold will not be a Hurwitz space, and we lose the previous interpretation of the $n=3\to n=4$ map. In the second example we compute the solutions of the Fuchisan system \eqref{2x2new} for the $A_3$ singularity and find an explicit algebraic expression of the 4-dimensional matrix $\hat{\Psi}$ as function of the canonical coordinates. This will yield a reasonable guess for the corresponding tri-hamiltonian prepotential.
\subsection{Genus one double coverings of $\mathbb{P}^1$} In this example we compute the solutions of Darboux-Egoroff for the Hurwitz spaces $\mathcal{M}_{1, (2)}$ and $\mathcal{M}_{1, (1, 1)}$ and show that they are related as in Proposition 2. As a by-product we obtain a nice procedure to reduce certain elliptic integrals involving the square root of a degree 4  polynomial to Weierstrass-type integrals.\\
Let us begin with $\mathcal{M}_{1, (2)}$. By definition it parametrizes hyperelliptic curves with a branch point at infinity:
\begin{equation}\label{3bp}
\rho^2=(\lambda-u_1)(\lambda-u_2)(\lambda-u_3)
\end{equation}
Using the Weierstrass uniformization,
\begin{equation}\label{Psup}
\lambda=\wp(z; \omega_1, \omega_2)+c\qquad \rho=\frac{1}{2}\frac{d\lambda}{dz}
\end{equation}
we identify the Hurwitz space with the family of elliptic functions $\lambda(z;\omega_1, \omega_2, c)$, parametrized by the periods of the Weierstrass $\wp$ function and the additive constant. Let the normalized holomorphic differential
$$
\phi\doteq \frac{dz}{\omega_1}=\frac{d\lambda}{2\omega_1\rho}
$$
be the quasi-momentum. From \eqref{reseta} we obtain 
\begin{equation}\label{3deta}
\eta_i=\frac{1}{2\omega_1u_{ij}u_{ik}}\qquad i=1, 2, 3
\end{equation}
where the indices $i, j, k$ are understood to be distinct. On the submanifold $u_1=0, u_2=1, u_3=s$, \eqref{3bp} and \eqref{Psup} reduce to
\begin{equation}\label{3bpnorm}
\bar{\rho}^2=\bar{\lambda}(\bar{\lambda}-1)(\bar{\lambda}-s)
\end{equation}
$$
\bar{\lambda}=\wp(z;\bar{\omega}_1, \bar{\omega}_2)+\frac{s+1}{3}\qquad\bar{\rho}=\frac{1}{2}\frac{d\bar{\lambda}}{dz}
$$
\begin{Lemma} The matrix $V$ for the metric \eqref{3deta} has the form
\begin{equation}\label{ellipticV}
V(u_1, u_2, u_3)=\left(\begin{array}{ccc}
0	&-c(s)	&b(s)\\
c(s)	&0		&-a(s)\\
-b(s)	&a(s)	&0
\end{array}\right)
\end{equation}
$$
a(s)=\frac{1}{2\sqrt{-s}}\bar{I}(s)\qquad b(s)=-\frac{1}{2\sqrt{s-1}}(\bar{I}(s)-1)\qquad c(s)=\frac{1}{2\sqrt{s(1-s)}}(\bar{I}(s)-s)
$$
$$
\bar{I}(s)=\frac{s+1}{3}-\frac{2\bar{\eta}_1}{\bar{\omega}_1}
$$
where $\bar{\eta}_1=\zeta(\bar{\omega}_1/2;\bar{\omega}_1, \bar{\omega}_2)$ and $\zeta$ is the Weierstrass zeta-function on the curve \eqref{3bpnorm}
\end{Lemma}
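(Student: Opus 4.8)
The plan is to reduce the statement to the computation of a single derivative of an elliptic period. By the definition \eqref{defV} of $V$ together with the formula \eqref{gamma} for the rotation coefficients, the off-diagonal entries are $V_{ij}=(u_j-u_i)\gamma_{ij}$ with $\gamma_{ij}=\eta_i^{-1/2}\,\partial_{u_i}\sqrt{\eta_j}$, so the whole matrix is determined once we know $\partial_{u_i}\sqrt{\eta_j}$ for the metric \eqref{3deta}. Since $V$ is invariant under affine transformations of the $u_i$ by \eqref{inv1}, it is a function of $s$ alone and may be evaluated on the slice $u_1=0,u_2=1,u_3=s$ at the very end; I would nonetheless differentiate at a general point and restrict only afterwards. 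The logarithmic derivative $\partial_{u_i}\log\sqrt{\eta_j}$ splits into an elementary part, coming from the explicit factors $u_{ij}u_{ik}$ in \eqref{3deta}, and a transcendental part proportional to $\partial_{u_i}\omega_1/\omega_1$, coming from the period $\omega_1=\oint_a \tfrac{d\lambda}{2\rho}$, which depends on the branch points through the curve \eqref{3bp}. The elementary part will produce all the rational pieces, so the only genuine work is the evaluation of $\partial_{u_i}\omega_1$.

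For that period derivative I would differentiate under the integral sign, using $2\rho\,\partial_{u_i}\rho=-\rho^2/(\lambda-u_i)$, to obtain
\[
\partial_{u_i}\omega_1=\tfrac14\oint_a\frac{d\lambda}{\rho(\lambda-u_i)}.
\]
Passing to the uniformizing coordinate via $\lambda=\wp(z)+c$, $d\lambda=2\rho\,dz$ and $\lambda-u_i=\wp(z)-e_i$ (with $e_i=u_i-c$ the three roots), this becomes $\tfrac12\oint_a \tfrac{dz}{\wp(z)-e_i}$. Here I would invoke the classical half-period translation identity
\[
\wp(z+w_i)=e_i+\frac{(e_i-e_j)(e_i-e_k)}{\wp(z)-e_i},\qquad \wp(w_i)=e_i,
\]
to rewrite the integrand, and then reduce to the two elementary periods $\oint_a dz=\omega_1$ and $\oint_a \wp\,dz=-2\bar\eta_1$; the latter is exactly the quasi-period, since $\wp=-\zeta'$ and $\zeta(z+\omega_1)-\zeta(z)=2\bar\eta_1$. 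After substituting $e_i=u_i-c$, this collapses to the clean closed form
\[
\frac{\partial_{u_i}\omega_1}{\omega_1}=\frac{\bar I-u_i}{2\,u_{ij}u_{ik}},\qquad \bar I=c-\frac{2\bar\eta_1}{\omega_1},
\]
in which the function $\bar I(s)$ of the Lemma appears of its own accord. I expect this period derivative to be the main obstacle: it is where the elliptic-function machinery genuinely enters, and where the branches of the square roots and the labelling of half-periods against roots must be tracked with care (a shift of contour by a half-period is harmless only because $\wp\,dz$ has vanishing residues).

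The remaining steps are bookkeeping. Substituting the formula for $\partial_{u_i}\omega_1/\omega_1$ into $V_{ij}=(u_j-u_i)\gamma_{ij}$, combining it with $\eta_j/\eta_i=(u_{ij}u_{ik})/(u_{ji}u_{jk})$, and restricting to $u_1=0,u_2=1,u_3=s$ (so that $c=(s+1)/3$ and $e_i=u_i-c$), each entry reduces to one of the stated expressions for $a,b,c$, up to the overall sign and choice of branch in $\sqrt{-s},\sqrt{s-1},\sqrt{s(1-s)}$, which is immaterial for $V$ and for the reconstructed Frobenius structure. As a consistency check I would verify that $a^2+b^2+c^2$ is the $s$-independent constant $-\tfrac14$; this both validates the algebra and matches the structural requirement that the eigenvalues of $V$ coincide with those of the grading operator, forcing $a^2+b^2+c^2$ to be constant along the solution.
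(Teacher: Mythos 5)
Your proposal is correct and takes essentially the same route as the paper: both reduce the Lemma to computing the period derivative $\partial_{u_i}\omega_1$ (the paper's $I_i=\tfrac{2}{\omega_1}\partial_{u_i}\omega_1=\oint_a\frac{\phi}{\lambda-u_i}$), both evaluate it via the half-period translation identity for $1/(\wp(z)-e_i)$ — which the paper derives on the spot by matching zeros, poles and expansions, exactly the classical proof you invoke — and both land on the same closed form $\partial_{u_i}\omega_1/\omega_1=(\bar I-u_i)/(2u_{ij}u_{ik})$ with $\bar I$ expressed through the quasi-period $\bar\eta_1$, followed by the same rotation-coefficient bookkeeping. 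Your consistency check $a^2+b^2+c^2=-\tfrac14$ is a sensible extra not present in the paper, and it does hold for the stated $a,b,c$.
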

\begin{proof}
 To compute the rotation coefficients we need to differentiate the period $\omega_1$ with respect to the branch points. To this end, realize the curve \eqref{3bp} by gluing two copies of the complex plane along cuts drawn from $u_1$ to $u_2$ and from $u_3$ to $\infty$. Thus
$$
\omega_1(u_1, u_2, u_3)=\oint_a\frac{d\lambda}{2\rho}=\oint_a\frac{d\lambda}{2\sqrt{(\lambda-u_1)(\lambda-u_2)(\lambda-u_3)}}
$$
where $a$ is a contour in the first sheet encircling the first cut. Introduce the elliptic integrals
\begin{equation}\label{Iidef}
I_i\doteq2\frac{1}{\omega_1}\frac{\partial\omega_1}{\partial u_i}=\oint_a\frac{\phi}{\lambda-u_i}\qquad i=1, 2, 3
\end{equation}
Now, $(\lambda-u_i)^{-1}$ and $\lambda(z-\omega_i/2)-u_i$ are elliptic functions on the curve \eqref{3bp} with the same zeroes and poles. Confronting their expansion at $\omega_i/2$ we find
$$
\frac{1}{\lambda-u_i}=\frac{1}{u_{ij}u_{ik}}\big(\lambda(z-\omega_i/2)-u_i\big)
$$
hence
\begin{equation}\label{Ii}
I_i(u_1, u_2, u_3)=\frac{1}{u_{ij}u_{ik}}\oint_a(\lambda-u_i)\phi
\end{equation}
After the change of variable $\bar{\lambda}=(\lambda-u_1)/u_{21}$ , we obtain
\begin{equation}\label{Ibar}
I_1=\frac{\bar{I}(s)}{u_{21}s}\qquad I_2=\frac{\bar{I}(s)-1}{u_{21}(1-s)}\qquad I_3=\frac{\bar{I}(s)-s}{u_{21}s(s-1)}
\end{equation}
where
$$
\bar{I}(s)=\frac{1}{\bar{\omega}_1}\oint_a\frac{\bar{\lambda} d\bar{\lambda}}{2\bar{\rho}}=\frac{s+1}{3}-\frac{2\bar{\eta}_1}{\bar{\omega}_1}
$$
At this point a straightforward computation using the definition \eqref{gamma}, \eqref{defV} completes the proof.
\end{proof}
We now repeat the same computation for the case $\mathcal{M}_{1, (1, 1)}$ of 4 finite branch points, i.e.
\begin{equation}\label{4bp}
\rho^2=(\lambda-v_1)(\lambda-v_2)(\lambda-v_3)(\lambda-v_4)
\end{equation}
We use the uniformization
\begin{equation}\label{Zsup}
\lambda=\zeta(z-x;\Omega_1, \Omega_2)-\zeta(z+x;\Omega_1, \Omega_2)+c\qquad \rho=\frac{d\lambda}{dz}
\end{equation}
in terms of the Weierstrass $\zeta$ function. Here, to avoid confusion with the previous case, we denoted the periods by capital letters $\Omega_i$. The additional coordinate $x$ describes the position of the poles of $\lambda$ in the uniform parameter $z$. Choosing as before the normalized holomorphic differential
$$
\phi=\frac{dz}{\Omega_1}=\frac{d\lambda}{\Omega_1\rho}
$$
as the quasi-momentum differential, we find
\begin{equation}\label{4deta}
\eta_i=\frac{2}{\Omega_1^2v_{ij}v_{ik}v_{il}}\qquad i=1, 2, 3, 4
\end{equation}
On the submanifold $v_1=0, v_2=1$ we have
\begin{equation}\label{4bpnorm}
\bar{\rho}^2=\bar{\lambda}(\bar{\lambda}-1)(\bar{\lambda}-P)(\bar{\lambda}-Q)
\end{equation}
$$
P=\frac{v_{31}}{v_{21}}=s\frac{\epsilon-1}{\epsilon-s}\qquad Q=\frac{v_{41}}{v_{21}}=1-\epsilon
$$
in terms of the parameters $s, \epsilon$ of \eqref{s, eps}. To compute the rotation coefficients of the metric \eqref{4deta} and compare them with \eqref{ellipticV}, the main step is to reduce the relevant elliptic integrals, involving the square root of a degree 4 polynomial, to elliptic integrals involving the square root of a degree 3 polynomial. A useful formula in this regard is the following:
\begin{Lemma} The identity
$$
\partial_{\tilde{e}}\Omega_1=-\frac{1}{2}\sum_{i=1}^4v_i\,\Omega_1
$$
holds true.
\end{Lemma}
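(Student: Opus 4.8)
The plan is to realize $\Omega_1$ directly as a period integral on the degree-four spectral curve and then reduce the whole statement to a single integration by parts. First I would use the uniformization \eqref{Zsup}: since $\rho=d\lambda/dz$ and the quasi-momentum $\phi=dz/\Omega_1$ is normalized to have unit $a$-period, one has
$$
\Omega_1=\oint_a dz=\oint_a\frac{d\lambda}{\rho}=\oint_a\frac{d\lambda}{\sqrt{R(\lambda)}},\qquad R(\lambda)\doteq\prod_{j=1}^4(\lambda-v_j),
$$
so that $\Omega_1$ is a complete elliptic integral attached to the curve \eqref{4bp}. This is the same planar branch-cut picture used in Lemma~7, now with four finite branch points.

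Next I would differentiate under the integral sign, keeping the cycle $a$ fixed as the branch points move. From $\partial R/\partial v_i=-R/(\lambda-v_i)$ one obtains
$$
\frac{\partial\Omega_1}{\partial v_i}=\frac{1}{2}\oint_a\frac{d\lambda}{(\lambda-v_i)\,\rho},
$$
and hence, applying $\partial_{\tilde{e}}=\sum_i v_i^2\,\partial/\partial v_i$,
$$
\partial_{\tilde{e}}\Omega_1=\frac{1}{2}\oint_a\left(\sum_{i=1}^4\frac{v_i^2}{\lambda-v_i}\right)\frac{d\lambda}{\rho}.
$$
The bracketed sum simplifies by partial fractions: writing $v_i^2=\lambda^2-(\lambda-v_i)(\lambda+v_i)$, summing over $i$, and using $\sum_i(\lambda-v_i)^{-1}=R'/R$, I find
$$
\sum_{i=1}^4\frac{v_i^2}{\lambda-v_i}=\lambda^2\,\frac{R'}{R}-4\lambda-\sum_{i=1}^4 v_i.
$$

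The crux is to kill the first two terms. The point is that $\lambda^2/\rho$ is a single-valued meromorphic function on the curve \eqref{4bp} (both $\lambda$ and $\rho$ being globally defined there), so the integral of its total differential around the closed cycle $a$ vanishes; since
$$
d\!\left(\frac{\lambda^2}{\rho}\right)=\frac{2\lambda\,d\lambda}{\rho}-\frac{1}{2}\,\lambda^2\,\frac{R'}{R}\,\frac{d\lambda}{\rho},
$$
this yields the identity $\oint_a\lambda^2(R'/R)\,\rho^{-1}d\lambda=4\oint_a\lambda\,\rho^{-1}d\lambda$. Substituting, the $\lambda^2 R'/R$ contribution cancels the $-4\lambda$ term exactly, leaving
$$
\partial_{\tilde{e}}\Omega_1=-\frac{1}{2}\left(\sum_{i=1}^4 v_i\right)\oint_a\frac{d\lambda}{\rho}=-\frac{1}{2}\sum_{i=1}^4 v_i\,\Omega_1,
$$
as claimed. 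I expect no conceptual obstacle: the only delicate points are bookkeeping ones, namely justifying differentiation of the period with the cycle held fixed while the branch points vary, and the vanishing of $\oint_a d(\lambda^2/\rho)$, both of which rest on passing from the planar branch-cut description to the globally defined meromorphic functions $\lambda,\rho$ on the curve.
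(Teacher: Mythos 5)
Your proof is correct, and after a common first step it diverges from the paper's argument in a genuinely different way. Both proofs begin identically: differentiating the period under the integral sign with the cycle held fixed gives $\partial_{\tilde e}\Omega_1=\tfrac12\oint_a\bigl(\sum_{i=1}^4 v_i^2/(\lambda-v_i)\bigr)\,d\lambda/\rho$, which is exactly the paper's formula \eqref{conto}, since $\phi\,\Omega_1=d\lambda/\rho$. The paper then evaluates this integral in the uniformizing coordinate $z$ of \eqref{Zsup}: it expands each $1/(\lambda-v_i)$ as an elliptic function, $1/(\lambda-v_i)=\frac{4}{v_{ij}v_{ik}v_{il}}\,\wp(z-\Omega_i/2)-\frac13\bigl(\frac{1}{v_{ij}}+\frac{1}{v_{ik}}+\frac{1}{v_{il}}\bigr)$, and integrates term by term; the $\wp$-contributions cancel because $\sum_i v_i^2/\prod_{j\neq i}(v_i-v_j)=0$ (Lagrange interpolation of $\lambda^2$ at four nodes), and the constant terms sum to $-\sum_i v_i$. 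You instead stay on the algebraic curve \eqref{4bp}: the partial-fraction identity $\sum_i v_i^2/(\lambda-v_i)=\lambda^2 R'/R-4\lambda-\sum_i v_i$ combined with $\oint_a d(\lambda^2/\rho)=0$ kills the non-constant terms against each other. That exactness step is legitimate for the reason you give: $\lambda^2/\rho$ is a single-valued meromorphic function on the curve, regular at the two points over $\lambda=\infty$ (where it tends to $\pm1$), and its only poles, the branch points, are avoided by the cycle. Both proofs are complete; yours is more elementary, bypassing the Weierstrass expansion (which itself needs a zeros-and-poles comparison), and it generalizes verbatim to hyperelliptic curves with any number of branch points, where $\oint_a d(\lambda^k/\rho)=0$ yields the analogous identities for higher symmetric functions; the paper's version has the merit of remaining inside the Weierstrass-uniformization formalism that the rest of the section uses to compute the rotation coefficients explicitly.
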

\begin{proof} Define, as in \eqref{Iidef}
\begin{equation}\label{Ji}
J_i=2\frac{1}{\Omega_1}\frac{\partial}{\partial v_i}\Omega_1=\oint_a\frac{\phi}{\lambda-v_i}\qquad i=1, 2, 3, 4
\end{equation}
so that
\begin{equation}\label{conto}
\partial_{\tilde{e}}\Omega_1=\frac{1}{2}\sum_{i=1}^4v_i^2J_i\,\Omega_1=\frac{1}{2}\oint_a\sum_{i=1}^4\frac{v_i^2}{\lambda-v_i}\phi\;\Omega_1
\end{equation}
Comparing zeros and poles and expanding at $\Omega_i/2$ one checks the identity
$$
\frac{1}{\lambda-v_i}=\frac{4}{v_{ij}v_{ik}v_{il}}\wp(z-\Omega_i/2)-\frac{1}{3}\left(\frac{1}{v_{ij}}+\frac{1}{v_{ik}}+\frac{1}{v_{il}}\right)
$$
Substituting in \eqref{conto} we obtain the Lemma.
\end{proof}
Let us explain how to use the lemma to compare for example
$$
\bar{J}_1(\epsilon, s)\equiv\frac{1}{v_{21}}J_1=\frac{1}{\bar{\Omega}_1}\oint_a\frac{d\lambda}{\lambda\sqrt{\lambda(\lambda-1)(\lambda-P)(\lambda-Q)}}
$$
and
$$
\bar{I}_1(s)\equiv\frac{1}{u_{21}}I_1=\frac{1}{\bar{\omega}_1}\oint_a\frac{d\lambda}{2\lambda\sqrt{\lambda(\lambda-1)(\lambda-s)}}
$$
where
$$
\bar{\Omega}_1=\oint_a\frac{d\lambda}{\sqrt{\lambda(\lambda-1)(\lambda-P)(\lambda-Q)}}\qquad\bar{\omega}_1=\oint_a\frac{d\lambda}{2\sqrt{\lambda(\lambda-1)(\lambda-s)}}
$$
(recall that $P, Q$ are functions of $\epsilon, s$ as in \eqref{4bpnorm}). Using \eqref{Ji} and the lemma, compute
$$
\partial_{\tilde{e}}J_1=2\partial_{\tilde{e}}\frac{\partial_1\Omega_1}{\Omega_1}=\frac{2}{\Omega_1}\left(\partial_1\partial_{\tilde{e}}\Omega_1-2v_1\partial_1\Omega_1-\frac{\partial_1\Omega_1}{\Omega_1}\partial_{\tilde{e}}\Omega_1\right)=-1-2v_1J_1
$$
Substituting $J_1=v_{21}\bar{J}_1(\epsilon, s)$ the last equation becomes
$$
\frac{\partial}{\partial\epsilon}\bar{J}_1=\frac{1}{\epsilon(\epsilon-1)}(1-\bar{J}_1)
$$
which implies
\begin{equation}\label{barJ1}
\bar{J}_1=\frac{1}{\epsilon-1}(\epsilon\bar{I}_1(s)-1)
\end{equation}
because $\bar{J}_1(\epsilon, s)\to\bar{I}_1(s)$ for $\epsilon\to\infty$.
\begin{Prop} The matrix $W$ for the metric \eqref{4deta} has the form
$$
W(v_1, v_2, v_3, v_4)=\left(\begin{array}{cccc}
0	&-c(s)	&b(s)	&-a(s)\\
c(s)	&0		&-a(s)	&-b(s)\\
-b(s)	&a(s)	&0		&-c(s)\\
a(s)	&b(s)	&c(s)		&0
\end{array}\right)\qquad s=\frac{v_{31}v_{42}}{v_{21}v_{43}}
$$
where $a, b, c$ are the same of Lemma 4.
\end{Prop}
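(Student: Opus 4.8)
The plan is to compute the matrix $W=[\Gamma,\hat U]$ of the metric \eqref{4deta} directly from its rotation coefficients and to reduce the resulting four-branch-point elliptic integrals to the single three-branch-point integral $\bar I(s)$ that already appears in Lemma 4. A preliminary remark streamlines the bookkeeping: by Theorem 7 the space $\mathcal{M}_{1,(1,1)}$ (all poles simple) is tri-hamiltonian, so Proposition 2 guarantees \emph{a priori} that $W$ has the form \eqref{W} for some functions $\tilde a(s),\tilde b(s),\tilde c(s)$ of the cross-ratio alone. The entire content of the statement is therefore the identification $\tilde a=a$, $\tilde b=b$, $\tilde c=c$, and for this it suffices to match the upper-left $3\times3$ block, i.e. to show $W_{12}=-c(s)$, $W_{13}=b(s)$, $W_{23}=-a(s)$.

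First I would express these entries through the period-derivative integrals $J_i$ of \eqref{Ji}. From $\eta_j=2/(\Omega_1^2\prod_{m\neq j}v_{jm})$ and $\tfrac1{\Omega_1}\partial_{v_i}\Omega_1=\tfrac12 J_i$ one gets, for $i\neq j$,
\[
\partial_{v_i}\log\eta_j=-J_i-\frac1{v_{ij}},
\]
so that by \eqref{gamma} and \eqref{defV}
\[
W_{ij}=\gamma_{ij}(v_j-v_i)=\tfrac12\sqrt{\tfrac{\eta_j}{\eta_i}}\,\bigl(v_{ij}J_i+1\bigr),
\]
where $\sqrt{\eta_j/\eta_i}$ is an explicit ratio of products of the differences $v_{km}$. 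Everything is thereby reduced to computing the $J_i$ as functions of $s$ and $\epsilon$.

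The crux is the reduction of the $J_i$ to $\bar I(s)$, which is exactly what Lemma 5 and the model computation \eqref{barJ1} are for. Applying $\partial_{\tilde e}$ to $J_i=2\,\partial_{v_i}\log\Omega_1$ and invoking Lemma 5 yields, for each normalized integral $\bar J_i=J_i/v_{21}$, a first-order linear ODE in $\epsilon$; its solution is pinned down by the boundary value $\bar J_i\to\bar I_i(s)$ as $\epsilon\to\infty$, the limit $v_4\to\infty$ in which the fourth simple pole recedes and the curve \eqref{4bp} degenerates to \eqref{3bp}. Carrying this out for $i=1,2,3$ (the analogues of \eqref{barJ1}) writes each $J_i$ rationally in terms of $\epsilon,s$ and $\bar I(s)$; inserting the values $P=s(\epsilon-1)/(\epsilon-s)$ and $Q=1-\epsilon$ of \eqref{4bpnorm}, simplifying, and checking that the spurious $\epsilon$-dependence cancels, I would then read off functions of $s$ alone and match them against the expressions for $a,b,c$ in Lemma 4.

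I expect the elliptic-integral reduction to be the main obstacle: one must set up the ODEs and boundary conditions for all the $J_i$ rather than only $J_1$, and then carry the multivalued prefactors $\sqrt{\eta_j/\eta_i}$ with consistent branch choices so that the signs of \eqref{W} come out correctly, all while verifying the cancellation of $\epsilon$. A clean cross-check — in fact a shortcut that avoids computing $J_2,J_3$ altogether — rests on the same preliminary remark: since $W$ depends only on $s$, its $3\times3$ block may be evaluated in the limit $v_4\to\infty$, where \eqref{4deta} converges to \eqref{3deta} and each $J_i\to I_i$, so that $W_{ij}\to V_{ij}$ with $V$ the matrix \eqref{ellipticV}. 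This alone forces $\tilde a=a$, $\tilde b=b$, $\tilde c=c$.
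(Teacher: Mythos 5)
Your primary route is the paper's own proof, spelled out in slightly more detail: the paper computes $W_{12}=\tfrac12 v_{21}\sqrt{\eta_2/\eta_1}\,(1/v_{21}-J_1)$, substitutes the reduction \eqref{barJ1} (obtained, exactly as you describe, from Lemma 5 via the $\partial_{\tilde e}$-derived ODE in $\epsilon$ with boundary value $\bar J_1\to\bar I_1$ as $\epsilon\to\infty$), identifies the result with $-c(s)$, and then simply asserts that ``the equality of the other components is checked in the same way'' --- i.e.\ your plan of repeating the ODE argument for $\bar J_2,\bar J_3$ (correctly, only $J_1,J_2,J_3$ are ever needed, since $W_{i4}$ also involves only $J_i$). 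Where you genuinely depart from the paper is your closing shortcut, which the paper does not contain: using Theorem 7, Lemma 3 and the converse part of Proposition 2 to know \emph{a priori} that $W$ is of the form \eqref{W} with entries depending on the cross-ratio alone, and then evaluating the $3\times3$ block in the limit $v_4\to\infty$, where the ratios $\eta_j/\eta_i$ converge algebraically and $J_i\to I_i$ --- the very degeneration the paper itself invokes, without proof, as the boundary condition for \eqref{barJ1}. This is a legitimate and more structural argument; it buys you exactly what you claim (no computation of $\bar J_2,\bar J_3$, no verification that the $\epsilon$-dependence cancels), at the price of resting on results the paper's direct computation does not need. Two caveats. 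First, the shortcut pins down $W$ only up to the sign of its last row and column, since the $3\times3$ block says nothing about $W_{14},W_{24},W_{34}$; this is the same ambiguity already present in Proposition 2 and amounts to the branch choice of $\sqrt{\eta_4}$, which the paper elsewhere declares immaterial, but strictly speaking only the direct computation of those entries (as in your main route, with consistent branch choices) proves the Proposition with the stated signs. Second, a minor point you inherited from the paper's own text: dimensionally the invariant quantity is $\bar J_i=v_{21}J_i$, not $J_i/v_{21}$ (each $J_i$ has degree $-1$ under rescaling of the $v$'s); the paper's definition of $\bar J_1$ and its later substitution line are mutually inconsistent on this, and the intended normalization is the scale-invariant one.
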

\begin{proof}
Let us check the statement for the component $W_{12}$:
$$
W_{12}=v_{21}\frac{1}{\sqrt{\eta_1}}\frac{\partial\sqrt{\eta_2}}{\partial v_1}=\frac{1}{2}v_{21}\sqrt{\frac{v_{12}v_{13}v_{14}}{v_{21}v_{23}v_{24}}}\left(\frac{1}{v_{21}}-J_1\right)=\frac{1}{2}\sqrt{\frac{s}{1-s}}\frac{\epsilon-1}{\epsilon}(1-\bar{J}_1)
$$
Plugging \eqref{barJ1} into the last expression we obtain
$$
W_{12}=\frac{1}{2}\sqrt{\frac{s}{1-s}}(1-\bar{I}_1(s))=-c(s)
$$
since $\bar{I}_1(s)=\bar{I}(s)/s$ (see \eqref{Ibar}). The equality of the other components is checked in the same way.
\end{proof}
\subsection{The $A_3$ singularity}
Let us now consider the case of $\mathcal{M}_{0, (4)}$. After a suitable uniformization of spectral curves, we identify the Hurwitz space with the family of polynomials
\begin{equation}\label{A3sup}
\lambda(z; t^1, t^2, t^3)=z^4+4t^3 z^2+2t^2z+t^1+2(t^3)^2
\end{equation}
which coincides with the universal unfolding of the simple singularity of type $A_3$,  $\lambda(z; 0, 0, 0)=z^4$. Let
$$
\phi= dz
$$
be the quasi-momentum differential. Then $t^1, t^2, t^3$ are flat coordinates of the metric \eqref{reseta} and reduce it to antidiagonal form (see \cite{Dub2Dtft}). Their degree of homogeneity is immediately read off \eqref{A3sup}: since the degree of $\lambda$ is 1 (the same of the canonical coordinates), we have deg$\;z=1/4$ and
$$
\text{deg}\;t^1=1\qquad\text{deg}\;t^2=3/4\qquad\text{deg}\;t^3=1/2
$$
In particular, $\mu=-1/4$ (see \eqref{degrees}).\\
In order to construct the associated 4-dimensional manifold, the first step is to compute the rescaled transition matrix $\Phi(s)$:
\begin{Lemma} The rescaled transition matrix $\Phi$ for $\mathcal{M}_{0, (4)}$ is the following algebraic function of $s$:
$$
\Phi(s)=\left(\begin{array}{ccc}
\frac{(1+2t)^{3/4}}{2\sqrt{(2+t)(1+2t)}}	&\frac{-1-t}{\sqrt{(2+t)(1+2t)}}	&\frac{1+3t+t^2}{\sqrt{2+t}(1+2t)^{5/4}}\\
\frac{(1+2t)^{3/4}}{2\sqrt{(-1+t)(1+2t)}}	&\frac{t}{\sqrt{(-1+t)(1+2t)}}		&\frac{-1-t+t^2}{\sqrt{-1+t}(1+2t)^{5/4}}\\
\frac{(1+2t)^{3/4}}{2\sqrt{(1-t)(2+t)}}	&\frac{1}{\sqrt{(1-t)(2+t)}}		&\frac{1-t-t^2}{\sqrt{(1-t)(2+t)}(1+2t)^{3/4}}
\end{array}\right)
$$
where
$$
s=\frac{t(2+t)^3}{(1+2t)^3}
$$
\end{Lemma}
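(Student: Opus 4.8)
The plan is to compute $\Phi$ by brute force from the superpotential \eqref{A3sup}, taking advantage of the fact that for a polynomial $\lambda$ all the Frobenius data are elementary functions of the critical points.

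I would first assemble the ingredients. Let $z_1,z_2,z_3$ be the critical points of $\lambda$, i.e. the roots of $\lambda'(z)=4z^3+8t^3z+2t^2$; they satisfy $z_1+z_2+z_3=0$, $\sum_{i<j}z_iz_j=2t^3$ and $z_1z_2z_3=-\tfrac12 t^2$. Since $\lambda'(z_i)=0$, the chain rule gives $\partial u_i/\partial t^\alpha=\partial_{t^\alpha}\lambda\big|_{z_i}$, so the three entries of the $i$-th row of the Jacobian are $1$, $2z_i$ and $4z_i^2+4t^3$, read straight off \eqref{A3sup}. With $\phi=dz$, formula \eqref{reseta} gives $\eta_i=\text{Res}_{z_i}\tfrac{dz}{\lambda'(z)}=1/\lambda''(z_i)=1/(12z_i^2+8t^3)$. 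Substituting into \eqref{transitionPsi} then writes each $\psi_{i\alpha}$ explicitly in terms of $z_i$ and $t^3$.

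The crucial step is to parametrize the slice $\{u_1=0,\,u_2=1,\,u_3=s\}$ by a single variable. The key identity here is
\[
u_i-u_j=z_k\,(z_i-z_j)^3,\qquad\{i,j,k\}=\{1,2,3\},
\]
which I would prove by integrating $\lambda'(z)=4(z-z_1)(z-z_2)(z-z_3)$ from $z_j$ to $z_i$: after shifting to the midpoint $w=z-\tfrac12(z_i+z_j)$ and using $z_k=-(z_i+z_j)$, all odd powers of $w$ integrate to zero and the integral collapses to $z_k(z_i-z_j)^3$. Forming $s=(u_3-u_1)/(u_2-u_1)=z_2(z_3-z_1)^3/[z_3(z_2-z_1)^3]$, fixing the scale by $z_3=1$ and writing $z_2=t$ (so $z_1=-1-t$ by $\sum z_i=0$) yields precisely $s=t(2+t)^3/(1+2t)^3$, together with $t^3=-\tfrac12(t^2+t+1)$ from the symmetric functions and $u_2-u_1=(2t+1)^3$.

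It remains to substitute, factor and rescale. With $z_1=-1-t,\ z_2=t,\ z_3=1$ the denominators factor as $12z_i^2+8t^3=4(2t+1)(t+2),\,4(2t+1)(t-1),\,4(2+t)(1-t)$ and the third-column numerators as $4z_i^2+4t^3=2(1+3t+t^2),\,2(t^2-t-1),\,2(1-t-t^2)$, for $i=1,2,3$. The column rescaling in \eqref{phi}, namely $(u_2-u_1)^{-\hat{\mu}}=\text{diag}\big((2t+1)^{3/4},1,(2t+1)^{-3/4}\big)$ with $\hat{\mu}=\text{diag}(-\tfrac14,0,\tfrac14)$ since $\mu=-\tfrac14$, restores the homogeneity of the three columns; multiplying $\psi_{i\alpha}$ by these factors and simplifying the radicals reproduces $\Phi(s)$ entry by entry. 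The main obstacle is not any individual computation but the scale-invariance bookkeeping: since the $\psi_{i\alpha}$ depend only on $z_i$ and $t^3$ and not on the additive constant $t^1$, the condition $u_1=0$ is automatic and only the overall scale must be fixed, so one has to verify that the gauge choice $z_3=1$ (for which $u_2-u_1\neq1$) is compatible with the normalization $\Psi(0,1,s)$ — i.e. that the factor $(u_2-u_1)^{-\hat{\mu}}$ absorbs exactly the resulting discrepancy — and to carry the fractional powers $(2t+1)^{3/4}$ correctly through the square roots.
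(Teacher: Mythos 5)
Your proposal is correct and follows essentially the same route as the paper: parametrize the slice by the critical points of $\lambda$, read off the Jacobian rows $(1,\,2z_i,\,4z_i^2+4t^3)$ from $\partial u_i/\partial t^\alpha=\partial_{t^\alpha}\lambda\big|_{z_i}$, take $\eta_i=1/\lambda''(z_i)$, set $z_3=1$, $z_2=t$ (the paper equivalently sets $t=q_2/q_3$), and apply the rescaling $(u_2-u_1)^{-\hat{\mu}}=\mathrm{diag}\big((1+2t)^{3/4},1,(1+2t)^{-3/4}\big)$. The only real difference is cosmetic: the paper gets $s=t(2+t)^3/(1+2t)^3$ by expanding $u_i=\lambda(z_i)$ as explicit quartics in the critical points, whereas your identity $u_i-u_j=z_k(z_i-z_j)^3$, proved by the midpoint integration trick, reaches it more cleanly; all your factorizations and the resulting entries agree with the stated $\Phi(s)$.
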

\begin{proof}
We introduce a new parametrization of the Hurwitz space, choosing as coordinates two of the critical points of $\lambda$ (the sum of all three is zero) plus $t^1$:
$$
\lambda'(z)=4(z-q_2)(z- q_3)(z+q_2+q_3)
$$
$$
\lambda(z)= z^4-2(q_2^2+q_2q_3+q_3^2)z^2+4q_2q_3(q_2+q_3)z+t^1+\frac{1}{2}(q_2^2+q_2q_3+q_3^2)^2
$$
The advantage of this parametrization is that both the flat and the canonical coordinates are polynomials in $t^1, q_2, q_3$:
$$
t^2=2q_2q_3(q_2+q_3)\qquad t^3=-\frac{1}{2}(q_2^2+q_2q_3+q_3^2)
$$
\begin{align*}
u_1	&=\lambda(-q_2-q_3)=-\frac{1}{2}(q_2^4+10q_2^3q_3+17q_2^2q_3^2+10q_2q_3^2+q_3^4)+t^1\\
u_2	&=\lambda(q_2)=\frac{1}{2}(-q_2^4+6q_2^3q_3+7q_2^2q_3^2+2q_2q_3^3+q_3^4)+t^1\\
u_3	&=\lambda(q_3)=\frac{1}{2}(q_2^4+2q_2^3q_3+7q_2^2q_3^2+6q_2q_3^3-q_3^4)+t^1
\end{align*}
Now we can easily compute the Jacobian $J=\{\partial u_i/\partial t^\alpha\}$
$$
J=\left(\begin{array}{ccc}
1	&-2(q_2+q_3)	&2(q_2^2+3q_2q_3+q_3^2)\\
1	&2q_2			&2(q_2^2-q_2q_3-q_3^2)\\
1	&2q_3			&2(-q_2^2-q_2q_3+q_3^2)
\end{array}\right)
$$
and the diagonal coefficients of the metric \eqref{reseta}
\begin{align*}
\eta_1	&=\text{Res}_{z=-(q_2+q_3)}\frac{dz}{\lambda'(z)}=\frac{1}{4(2q_2+q_3)(q_2+2q_3)}\\
\eta_2	&=\text{Res}_{z=q_2}\frac{dz}{\lambda'(z)}=\frac{1}{4(q_2-q_3)(2q_2+q_3)}\\
\eta_3	&=\text{Res}_{z=q_3}\frac{dz}{\lambda'(z)}=\frac{1}{4(q_3-q_2)(q_2+2q_3)}
\end{align*}
At this point the formula
$$
\psi_{i\alpha}=\sqrt{\eta_i}\frac{\partial u_i}{\partial t^\alpha}
$$
yields the fundamental solution
$$
\Psi=\left(\begin{array}{ccc}
\frac{1}{2\sqrt{(2q_2+q_3)(q_2+2q_3)}}	&\frac{-q_2-q_3}{\sqrt{(2q_2+q_3)(q_2+2q_3)}}	&\frac{q_2^2+3q_2q_3+q_3^2}{\sqrt{(2q_2+q_3)(q_2+2q_3)}}\\
\frac{1}{2\sqrt{(q_2-q_3)(2q_2+q_3)}}	&\frac{q_2}{\sqrt{(q_2-q_3)(2q_2+q_3)}}	&\frac{q_2^2-q_2q_3-q_3^2}{\sqrt{(q_2-q_3)(2q_2+q_3)}}\\
\frac{1}{2\sqrt{(q_3-q_2)(q_2+2q_3)}}	&\frac{q_3}{\sqrt{(q_3-q_2)(q_2+2q_3)}}	&\frac{-q_2^2-q_2q_3+q_3^2}{\sqrt{(q_3-q_2)(q_2+2q_3)}}
\end{array}\right)
$$
Finally, we set $t\doteq q_2/q_3$, so that $s=u_{31}/u_{21}=t(2+t)^3(1+2t)^{-3}$. After the rescaling $\Phi=\Psi u_{21}^{-\hat{\mu}}$ with $\hat{\mu}=\text{diag}\;(-1/4, 0, 1/4)$, we arrive at the final expression.
\end{proof}
 The next step is to compute the solutions of the Fuchsian system \eqref{2x2new}. Let us first consider twisted periods in general for a moment. Let $\xi_1,\; \xi_2,\; \xi_3,\; \xi_4=-(\xi_1+\xi_2+\xi_3)$ be the roots of $\lambda$ (which are all distinct due to our assumption $u_i\neq 0$) and $\gamma_i, \;i=1, 2, 3$ be paths in the complex plane from $\xi_i$ to $\xi_{i+1}$. For each $\nu$, we lift $\gamma_i$ to the covering of $\mathbb{C}\setminus\{\xi_i\}$ where $\lambda^\nu$ is defined and obtain a basis of $H_1(\mathbb{C}\setminus\{\xi_i\}, \{\xi_i\}, L(q))$ (see Remark 5). We conclude that a system of twisted periods is given by the formula
\begin{equation}\label{A3periods}
p_i(t, \nu)=\int_{\xi_i}^{\xi_{i+1}}\prod_{j=1}^4(z-\xi_j)^\nu dz\qquad i=1, 2, 3
\end{equation}
where the integration is meant on the lifting of $\gamma_i$, i.e. for some fixed branch of the integrand. Using the integral representation of the Appell two-variables hypergeometric function $F_1$ (\cite{App}),
\begin{align}\label{AppellF1}
F_1(\alpha;\beta, \beta';\gamma;x, y)	&\equiv\sum_{m, n \geq 0}\frac{(\alpha)_{m+n}(\beta)_m(\beta')_n}{(\gamma)_{m+n}m!n!}x^my^n=\\
						&=\frac{\Gamma(\gamma)}{\Gamma(\alpha)\Gamma(\gamma-\alpha)}\int_0^1\frac{u^{\alpha-1}(1-\alpha)^{\gamma-\alpha-1}}{(1-xu)^\beta(1-yu)^{\beta'}}du\nonumber
\end{align}
valid for $\text{Re}(\alpha)>\text{Re}(\gamma-\alpha)>0$, we can express the \eqref{A3periods} in the form
\begin{equation}\label{Appellperiods}
p_i(t, \nu)=\xi_{(i+1)i}^{2\nu+1}[\xi_{(i+2)i}\xi_{i(i+3)}]^\nu\frac{\Gamma(\nu+1)^2}{\Gamma(2\nu+2)}F_1(\nu+1;-\nu, -\nu;2\nu+2; x_i, y_i)
\end{equation}
$$
x_i=\frac{\xi_{(i+1)i}}{\xi_{(i+2)i}}\qquad y_i=\frac{\xi_{(i+1)i}}{\xi_{(i+3)i}}\qquad i=1, 2, 3
$$
where as usual $\xi_{ij}=\xi_i-\xi_j$ and the indexes are cyclically ordered. Since
\begin{equation}\label{partialp}
\frac{\partial}{\partial t^\alpha}p(t;\nu)=\nu\int_\gamma\lambda^{\nu-1}\frac{\partial\lambda}{\partial t^\alpha}dz
\end{equation}
$$
\frac{\partial\lambda}{\partial t^1}=1\qquad\frac{\partial\lambda}{\partial t^2}= 2z\qquad \frac{\partial\lambda}{\partial t^3}=4(z^2+t^3)
$$
one can apply the same method and obtain expressions similar to \eqref{Appellperiods} for the solutions of the Fuchsian system \eqref{FuchSys}.\\
\indent Let us now specialize to the case $\nu=\mu+1/2=1/4$. Here the $\infty$-sheeted covering of $\mathbb{C}\setminus\{\xi_i\}$ that we had for general $\nu$ reduces to the degree 4 cyclic covering
\begin{equation}\label{g3curve}
w^4=\lambda(z, t)=(z-\xi_1)(z-\xi_2)(z-\xi_4)(z-\xi_4)
\end{equation}
which is a curve of genus 3, and the picture simplifies significantly. First we note that, since the $\gamma_i$ live in the homology relative to $\{\xi_i\}$, their sum is equivalent to a loop around $\infty$ and we have
$$
p_1(t;1/4)+p_2(t;1/4)+p_3(t;1/4)=-2\pi i\text{Res}_{z=\infty}wdz=2\pi it^3
$$
(up to a 4th root of 1 factor), in agreement with the fact that $t^3$ is a twisted period. We can then use $p_1, p_2$ to construct a fundamental solution of \eqref{2x2new}:
\begin{equation}\label{chi(i)}
\chi_{(i)}=\left(\begin{array}{c}\frac{\partial}{\partial t^2}p_i\\\frac{\partial}{\partial t^1}p_i\end{array}\right)=\frac{1}{4}\left(\begin{array}{c}2\int_{\gamma_i}\omega_2\\\int_{\gamma_i}\omega_1\end{array}\right)\qquad i=1, 2
\end{equation}
where
$$
\omega_1=4\frac{\partial w}{\partial t^1}dz=\frac{dz}{w^3}\qquad\omega_2=2\frac{\partial w}{\partial t^2}dz=\frac{zdz}{w^3}
$$
which are holomorphic differentials on the curve \eqref{g3curve}.
\begin{Rem} The choice of the paths $\gamma_i$ in the formulas \eqref{chi(i)} is not essential: the periods of $\omega_1, \omega_2$ along any cycle on the genus 3 curve will produce a solution of the Fuchsian system. This does not contradict the fact that the space of solutions of the system has dimension 2: indeed, choosing a basis of $a$ and $b$ cycles on \eqref{g3curve} in a way compatible with the order 4 automorphism
$$
\sigma:(z, w)\mapsto(z, iw)
$$
and using the fact that
$$
\sigma^*\omega_k=i\omega_k\qquad k=1, 2
$$
one can check that every period of $\omega_k$ is a linear combination of the integrals over $\gamma_1$ and $\gamma_2$.
\end{Rem}
\begin{Prop} Let $\xi_i=\xi_i(\epsilon, s),\; i=1, 2, 3, 4$ be the roots of the polynomial
\begin{equation}\label{rescsup}
\bar{\lambda}(z; \epsilon, s)=z^4-2\frac{1+t+t^2}{(1+2t)^{3/2}}z^2+4\frac{t(1+t)}{(1+2t)^{9/4}}z+\frac{(1+t)^2(1+4t+t^2)}{(1+2t)^3}-\epsilon
\end{equation}
where $s=t(2+t)^3(1+2t)^{-3}$ as in Lemma 6. Define
$$
\chi_{(i)}=\frac{1}{\sqrt{\xi_{(i+1)i}}[\xi_{(i+2)i}\xi_{i(i+3)}]^{3/4}}\left(\begin{array}{c}\xi_{(i+1)i}f(x_i, y_i)+2\xi_ig(x_i, y_i)\\g(x_i, y_i)\end{array}\right)\qquad i=1, 2
$$
where the notation is the same of \eqref{Appellperiods}, and $f, g$ are the functions
$$
f(x, y)=\frac{\sqrt{\sqrt{1-y}+\sqrt{x-y}}-\sqrt{\sqrt{1-y}-\sqrt{x-y}}}{\sqrt{(1-x)(1-y)(x-y)}}
$$
$$
g(x, y)=\frac{1}{\sqrt{2}}\left(1+\frac{1}{\sqrt{(1-x)(1-y)}}\right)\sqrt{\frac{1}{\sqrt{1-x}+\sqrt{1-y}}}
$$
Then $(\chi_{(1)}, \chi_{(2)})$ is a fundamental solution of the Fuchsian system \eqref{2x2new} for the Hurwitz space $\mathcal{M}_{0(4)}$.
\end{Prop}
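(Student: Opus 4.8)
The plan is to avoid checking the system \eqref{2x2new} by hand and instead to use that, by the construction preceding the statement, the columns \eqref{chi(i)} solve it \emph{automatically}. Indeed, by Dubrovin's theorem on twisted periods the gradient of any twisted period satisfies the Fuchsian system \eqref{FuchSys}, and for $n=3$ and $\nu=\mu+\tfrac12=\tfrac14$ the equivalence of \eqref{2x2new} with the $n=3$, $\nu=\mu+\tfrac12$ system \eqref{FuchSys} turns the pair $\chi=(\partial p/\partial t^2,\partial p/\partial t^1)^t$, after the normalization of the $t=q_2/q_3$ parametrization and the shift $t^1\to t^1-\epsilon$, into a solution of \eqref{2x2new}. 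Since the functions \eqref{A3periods} form a basis of twisted periods of $\mathcal{M}_{0,(4)}$, the whole content of the proposition is the explicit evaluation of the two period integrals
$$
\frac14\int_{\xi_i}^{\xi_{i+1}}\frac{dz}{\big[\prod_{j}(z-\xi_j)\big]^{3/4}},\qquad \frac12\int_{\xi_i}^{\xi_{i+1}}\frac{z\,dz}{\big[\prod_{j}(z-\xi_j)\big]^{3/4}},
$$
which by \eqref{chi(i)} are the components of $\chi_{(i)}$. Writing $z=(z-\xi_i)+\xi_i$ in the second integral isolates the term $\xi_i\int\lambda^{-3/4}\,dz$, which reproduces the $2\xi_i g$ contribution, so that only $\int\lambda^{-3/4}\,dz$ (the $g$-term) and $\int(z-\xi_i)\lambda^{-3/4}\,dz$ (the $f$-term) remain to be computed.

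First I would normalize by the M\"obius transformation carrying $\xi_i\mapsto0$ and $\xi_{i+1}\mapsto1$, turning both integrals into Euler integrals of the shape \eqref{AppellF1}; with $x_i,y_i$ as in \eqref{Appellperiods} the integrand $\prod_j(z-\xi_j)^{-3/4}$, times $1$ or $z-\xi_i$, produces the Appell function $F_1(\tfrac14;\tfrac34,\tfrac34;\tfrac12;x_i,y_i)$ and a contiguous companion, multiplied by the algebraic prefactor $\xi_{(i+1)i}^{-1/2}[\xi_{(i+2)i}\xi_{i(i+3)}]^{-3/4}$ appearing in the statement, which is precisely the $\nu=\tfrac14$ derivative, via \eqref{partialp}, of the prefactor in \eqref{Appellperiods}.

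\textbf{The crux} is the closed-form evaluation of these Appell values. The mechanism I would exploit is that $F_1(\tfrac14;\tfrac34,\tfrac34;\tfrac12;x,y)$ lies in the algebraic (dihedral) range: it is the two-variable analogue of the classical reduction ${}_2F_1(\tfrac14,\tfrac34;\tfrac12;\sin^2\theta)=\cos(\theta/2)/\cos\theta$, which comes from the substitution trivializing the fourth root. Carrying out the corresponding two-angle substitution and collecting the contributions at the endpoints $\xi_i,\xi_{i+1}$ is what generates the nested radicals $\sqrt{\sqrt{1-y}\pm\sqrt{x-y}}$ and the combinations defining $f$ and $g$. This is the step I expect to be the main obstacle: finding the correct rationalizing substitution and, above all, fixing consistently the branch of each four-valued radical along the lifted path $\gamma_i$, so that the two integrals land on exactly $f,g$ rather than on a Galois-conjugate expression.

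Finally I would match the normalizations. Identifying the roots $\xi_i(\epsilon,s)$ with those of \eqref{rescsup} amounts to composing the $t=q_2/q_3$ parametrization, the $u_{21}$-rescaling and the shift $t^1\to t^1-\epsilon$, and checking that the resulting affine parameter is the $\epsilon$ of \eqref{2x2new} while the cross-ratio is $s$. That $(\chi_{(1)},\chi_{(2)})$ is \emph{fundamental} then follows from the $2$-dimensionality of the solution space (guaranteed, as in the remark following \eqref{chi(i)}, by the order-$4$ automorphism $\sigma$ of \eqref{g3curve}) together with the nonvanishing of the Wronskian: the trace formula forces $\det(\chi_{(1)},\chi_{(2)})$ to be a constant multiple of $\epsilon^\mu(\epsilon-1)^\mu(\epsilon-s)^\mu$ as in \eqref{wronsk}, and a leading-order check of $f,g$ shows that the constant is nonzero.
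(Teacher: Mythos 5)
Your overall strategy coincides with the paper's: by Proposition 3 the gradients of the $\nu=\mu+\tfrac12=\tfrac14$ twisted periods, evaluated on the submanifold $\{u_1=-\epsilon,\,u_2=1-\epsilon,\,u_3=s-\epsilon\}$ (whose superpotential is exactly \eqref{rescsup}), automatically solve \eqref{2x2new}, so the entire content of the proposition is the closed-form evaluation of the two period integrals as Appell $F_1$ values. Up to that point your reduction is sound and matches the paper, including the split $z=(z-\xi_i)+\xi_i$ isolating the $2\xi_i g$ contribution and the identification of $g$ with $F_1(\tfrac14;\tfrac34,\tfrac34;\tfrac12;x_i,y_i)$ and of $f$ with a contiguous companion (in the paper, $f=F_1(\tfrac54;\tfrac34,\tfrac34;\tfrac32;x,y)$).

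The genuine gap is the step you yourself flag as the crux: you never actually establish the algebraic closed forms $f$ and $g$. Proposing to find ``the correct rationalizing substitution'' and to track the branches of the four-valued radical along $\gamma_i$ is a plan, not a proof, and it is precisely where all the nontrivial content of the proposition lives; your own worry about ending up with a Galois-conjugate expression shows that the argument as stated does not pin down the answer. The paper closes this gap not by a bare-hands substitution but by citing known reduction formulas: for $g=F_1(\tfrac14;\tfrac34,\tfrac34;\tfrac12;x,y)$ the Ismail--Pitman duplication formula
$$
F_1\left(\alpha;\alpha+\tfrac12,\alpha+\tfrac12;2\alpha;x,y\right)=\frac12\left(1+\frac{1}{\sqrt{(1-x)(1-y)}}\right)\left(\frac{2}{\sqrt{1-x}+\sqrt{1-y}}\right)^{2\alpha}
$$
at $\alpha=\tfrac14$, and for $f=F_1(\tfrac54;\tfrac34,\tfrac34;\tfrac32;x,y)$ the combination of the classical identities
$$
F_1(\alpha;\beta,\beta';\beta+\beta';x,y)=(1-y)^{-\alpha}\,{}_2F_1\left(\alpha,\beta;\beta+\beta';\frac{x-y}{1-y}\right),\qquad
{}_2F_1\left(a,a+\tfrac12;\tfrac32;z^2\right)=\frac{(1+z)^{1-2a}-(1-z)^{1-2a}}{(2-4a)z},
$$
applicable because $\tfrac34+\tfrac34=\tfrac32$ is exactly the third parameter. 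To complete your proof you would need either to invoke these (or equivalent) identities, or to actually carry out and justify the two-angle substitution with full branch bookkeeping---a substantial computation that your proposal leaves undone.
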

\begin{proof}
The $(\epsilon, s)$-parametric family of polynomials \eqref{rescsup} is nothing but the submanifold $\{u_1=-\epsilon, u_2=1-\epsilon, u_3=s-\epsilon\}$ of the whole Hurwitz space, which is where we need to compute \eqref{chi(i)} in order to get the solution of \eqref{2x2new}. The explicit formula follows from a computation similar to \eqref{Appellperiods}, with the new feature that the hypergeometric functions appearing become algebraic for the special values of the parameters, and the final result is expressed in elementary functions. We report below the relevant reduction formulas for Appell functions.\\
The function $g(x, y)$ is in fact $F_1(1/4;3/4, 3/4;1/2;x,y)$. Its algebraic reduction follows from the duplication formula proved in \cite{IsmPit}
$$
F_1(\alpha; \alpha+\frac{1}{2}, \alpha+\frac{1}{2};2\alpha;x, y)=\frac{1}{2}\left(1+\frac{1}{\sqrt{(1-x)(1-y)}}\right)\left(\frac{2}{\sqrt{1-x}+\sqrt{1-y}}\right)^{2\alpha}
$$
taking $\alpha=1/4$. The function $f(x, y)$ is $F_1(5/4;3/4, 3/4;3/2; x, y)$, and can be reduced to algebraic form using the combination of the two identities (see e.g. \cite{EMOT})
$$
F_1(\alpha;\beta, \beta';\beta+\beta';x, y)=(1-y)^{-\alpha} {_{2}F_{1}} (\alpha, \beta; \beta+\beta';\frac{x-y}{1-y})
$$
$$
 {_{2}F_{1}}(a, a+\frac{1}{2};\frac{3}{2}, z^2)=\frac{1}{(2-4a)z}((1+z)^{1-2a}-(1-z)^{1-2a})
$$
where $ {_{2}F_{1}}$ is the usual hypergeometric function.
\end{proof}
At this stage we are in a position to combine Lemma 6 and Proposition 5 into \eqref{4dtrans} and write down an explicit expression of the matrix $\hat{\Psi}$, which would definitely look quite ugly but nevertheless be an algebraic function. An immediate consequence is that the prepotential $F$ of the 4-dimensional manifold is algebraic as well.\\
It is an old conjecture by Dubrovin (\cite{Dub2Dtft}, \cite{DubPainP}, see also \cite{Dinar}) that semisimple algebraic solutions of WDVV with positive degrees correspond to primitive conjugacy classes in Coxeter groups. Remarkably, a 4-dimensional Frobenius manifold that resembles in all aspects the result of our construction for $A_3$ was indeed found by Pavlyk in \cite{Pavlyk}, in correspondence to a primitive conjugacy class in the Weil group of the Lie algebra $D_4$. His algebraic prepotential looks as follows,
\begin{align*}
F(t_1, t_2, t_3, t_4)= &\frac{1}{2}t_1^2t_4+t_1t_2t_3+\frac{1}{6}t_2^2t_4-\frac{1}{108}t_2t_4^3+\frac{1}{12}t_2t_3^2t_4+\frac{19}{2^8\cdot3^4\cdot5}t_4^5+\\
&+\frac{7}{2^7\cdot3^3}t_3^2t_4^3+\frac{1}{3\cdot2^8}t_3^4t_4+\frac{(48t_2+3t_3^2+t_4^2)^{5/2}}{2^5\cdot 3^4\cdot 5}
\end{align*}
(here we lowered the indices of the flat coordinates for notational ease), and it is easily checked to be a tri-hamiltonian solution of WDVV with $\mu=-1/4$.\\
Unfortunately, given the size of the formulas, it practically looks arduous either to derive the prepotential from our transition matrix or to do the converse from Pavlyk's prepotential. Therefore we are presently unable to prove explicitly that the two Frobenius manifolds coincide, and we content ourselves with suggesting that reasonably enough that should be the case.
\section*{Concluding remarks}
An immediate question is how the picture looks like in dimension higher than 4. Actually, it is relatively straightforward to generalize Proposition 2 to a correspondence between $2n-1$ and $2n$ dimensional solutions of Darboux-Egoroff that are ``maximally degenerate", in the sense that the spectrum consists of $\pm\mu$ with maximal multiplicity, plus a single zero eigenvalue in odd dimension. In essence it suffices to note that the space of such matrices in $so(2n-1)$ and in $so(2n)$  has the same dimension $n(n-1)+1$.\\
Thus in principle all the information of a tri-hamiltonian Frobenius manifold can be recovered from the codimension 1 submanifold obtained by sending one of the canonical coordinates to $\infty$. The special feature of dimension 4 is that this procedure yields \emph{all} the 3-dimensional semisimple Frobenius manifolds, while in higher dimension it clearly gives only a small subclass. In this sense the case studied here appears to be the most interesting.\\
Further questions concern the study of the dimension 3/dimension 4 correspondence from the point of view of the associated hydrodynamic-type hierarchies, as well as the clarification of the role of tri-hamiltonianity in this context, and possibly in the higher genera theory of Frobenius manifolds. A suggestive circumstance in this regard is the appearance of $sl(2)$-type Virasoro constraints in unitary matrix models related to the Ablowitz-Ladik hierarchy (\cite{AVM1}, \cite{AVM2}), which in the dispersionless limit match precisely the action of the vector fields $\partial_e, \partial_E, \partial_{\tilde{e}}$ of the underlying tri-hamiltonian Frobenius structure on the dispersionless tau-function.\\
Another very concrete goal would be to apply the reconstruction procedure to other known 3-dimensional Frobenius manifolds/PVI$\mu$ transcendents in hope of explicitly finding new solutions of WDVV in dimension 4.

\vspace{45pt}
\center{Stefano Romano}\\
\center{SISSA -- International School for Advanced Studies}\\
\center{Via Bonomea, 265 - 34136 Trieste ITALY}\\
\center{Email: \nolinkurl{sromano@sissa.it}}
\end{document}